\newtheorem{theorem}{Theorem}
\newtheorem{lemma}[theorem]{Lemma}
\newtheorem{definition}[theorem]{Definition}
\newtheorem{corollary}[theorem]{Corollary}
\newtheorem{prop}[theorem]{Proposition}
\author[Bornstein, Golumbic, Santos, Souza, Szwarcfiter]{Claudson F. Bornstein\affiliationmark{1} \and Martin Charles Golumbic\affiliationmark{2}\\ \and Tanilson D. Santos\affiliationmark{1,3}  \and Uéverton S. Souza\affiliationmark{4}\thanks{This work is partially supported by Fundação de Amparo à Pesquisa do Estado do Rio de Janeiro - Brasil (FAPERJ) - grant E-26/203.272/2017; Conselho Nacional de Desenvolvimento Científico e Tecnológico – Brasil (CNPq) - grant 303726/2017-2; and Coordenação de Aperfeiçoamento de Pessoal de Nível Superior – Brasil (CAPES) - Finance Code 001.} \and Jayme L. Szwarcfiter\affiliationmark{1,5}}
\title[The Complexity of Helly-$B_{1}$-EPG graph Recognition]{The Complexity of Helly-$B_{1}$-EPG graph Recognition}
\affiliation{
  Federal University of Rio de Janeiro - Brazil \\
  University of Haifa - Israel\\
  Federal University of Tocantins  - Brazil \\
  Fluminense Federal University - Brazil \\
  State University of Rio de Janeiro - Brazil}
\keywords{paths, grid, EPG, Helly, intersection graphs, NP-completeness, single bend.
}
\begin{document}
\publicationdetails{22}{2020}{1}{19}{5603}
\maketitle
\begin{abstract}
Golumbic, Lipshteyn, and Stern defined in 2009 the class of EPG graphs, the intersection graph class of edge paths on a grid. An EPG graph $G$ is a graph that admits a representation where its vertices correspond to paths in a grid $Q$, such that two vertices of $G$ are adjacent if and only if their corresponding paths in $Q$ have a common edge. If the paths in the representation have at most $k$ bends, we say that it is a  $B_k$-EPG representation. A collection $C$ of sets satisfies the Helly property when every sub-collection of $C$ that is pairwise intersecting has at least one common element.
In this paper, we show that given a graph $G$ and an integer $k$, the problem of determining whether $G$ admits a $B_k$-EPG representation whose edge-intersections of paths satisfy the Helly property, so-called Helly-$B_k$-EPG representation, is in NP, for every $k$ bounded by a polynomial function of $|V(G)|$. Moreover, we show that the problem of recognizing Helly-$B_1$-EPG graphs is NP-complete, and it remains NP-complete even when restricted to 2-apex and 3-degenerate graphs.
\end{abstract}

\section{Introduction}
An EPG graph $G$ is a graph that admits a representation in which its vertices are represented by paths of a grid $Q$, such that two vertices of $G$ are adjacent if and only if the corresponding paths have at least one common edge.

The study of EPG graphs has motivation related to the problem of VLSI design that combines the notion of edge intersection graphs of paths in a  tree with a  VLSI  grid layout model, see~\cite{golumbic2009}. The number of bends in an integrated circuit may increase the layout area, and consequently, increase the cost of chip manufacturing.
This is one of the main applications that instigate research on the EPG representations of some graph families when there are constraints on the number of bends in the paths used in the representation.
Other applications and details on circuit layout problems can be found in~\cite{bandy1990, molitor1991}.

A graph is a $ B_k$-EPG graph if it admits a representation in which each path has at most $k$ bends. As an example, Figure~\ref{fig:trianguloepgRepresentacao}(a) shows a $C_3$, Figure~\ref{fig:trianguloepgRepresentacao}(b) shows an EPG representation where the paths have no bends and Figure~\ref{fig:trianguloepgRepresentacao}(c) shows a representation with at most one bend per path.   
Consequently, $C_3$ is a $B_0$-EPG graph. More generally, $B_0$-EPG graphs coincide with interval graphs.

\begin{figure}[h]
  \centering
  \begin{tabular}{ p{3cm} p{0.7cm} p{4cm} p{0.7cm} p{4cm} }
    \includegraphics[width=2.3cm]{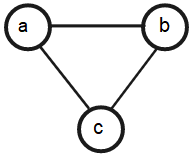} && \includegraphics[width=3.5cm]{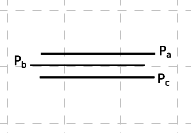} & &
    \includegraphics[width=3.5cm]{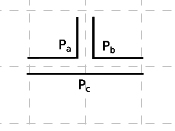}
    \\
    \footnotesize
    (a) The  graph $C_3$ && \footnotesize(b) $B_0$-EPG representation of $C_3$ (edge-clique)&& \footnotesize(c) $B_1$-EPG representation of $C_3$ (claw-clique)\\
  \end{tabular}

 \caption{The  graph $ C_3 $  and  representations without bends and with 1 bend} \label{fig:trianguloepgRepresentacao}
\end{figure}

The \emph{bend number} of a graph $G$ is the smallest $k$ for which $G$ is a $B_k$-EPG graph. Analogously, the bend number of a class of graphs is the smallest $k$ for which all graphs in the class have a $B_k$-EPG representation. Interval graphs have bend number $0$, trees have bend number $1$, see~\cite{golumbic2009}, and outerplanar graphs have bend number $2$, see~\cite{daniel2014b}. The bend number for the class of planar graphs is still open, but according to \cite{daniel2014b}, it is either $3$ or $4$.

The class of EPG graphs has been studied in several papers, such as \cite{alcon2016, Asinowski2009, cohen2014, golumbic2009, heldt2014,  martin2017,golumbic2019edge}, among others. The investigations regarding EPG graphs frequently approach characterizations concerning the number of bends of the graph representations. Regarding the complexity of recognizing $B_k$-EPG graphs, only the complexity of recognizing a few of these sub-classes of EPG graphs has been determined: $B_0$-EPG graphs can be recognized in polynomial time, since it corresponds to the class of interval graphs, see ~\cite{booth1976}; in contrast, recognizing $B_1$-EPG and $B_2$-EPG graphs are NP-complete problems, see~\cite{heldt2014} and \cite{martin2017}, respectively. 
Also, note that the paths in a $B_1$-EPG representation have one of the following shapes: $\llcorner$, $\lrcorner$, $\ulcorner$ and $\urcorner$. \cite{cameron2016edge} showed that for each $S\subset \{\llcorner, \lrcorner, \ulcorner, \urcorner\}$, it is NP-complete to determine if a given graph $G$ has a $B_1$-EPG representation using only paths with shape in $S$.

A  collection $C$ of sets satisfies the Helly property when every sub-collection of $C$ that is pairwise intersecting has at least one common element. 
The study of the Helly property is useful in diverse areas of science. We can enumerate applications in semantics, code theory, computational biology, database, image processing, graph theory, optimization, and linear programming, see \cite{dourado2009}.

The Helly property can also be applied to the $B_k$-EPG representation problem, where each path is considered a set of edges. A graph $G$ has a  Helly-$B_k$-EPG representation if there is a $B_k$-EPG representation of $G$ where each path has at most $k$ bends, and this representation satisfies the Helly property. Figure~\ref{fig:envelopeRepresentacoes}(a) presents two $B_1$-EPG representations of a graph with five vertices.  Figure~\ref{fig:envelopeRepresentacoes}(b)   illustrates 3 pairwise intersecting paths ($P_{v_1}, P_{v_2}, P_{v_5}$), containing a common edge, so it is a Helly-$B_1$-EPG representation. In Figure~\ref{fig:envelopeRepresentacoes}(c), although the three paths are pairwise intersecting, there is no common edge in all three paths, and therefore they do not satisfy the Helly property.

The Helly property related to EPG representations of graphs has been studied in~\cite{golumbic2009} and~\cite{golumbic2013}. 

Let $\cal {F}$ be a family of subsets of some universal set $U$, and $h\geq 2$ be an integer.  Say that $\cal{F}$ is $h$-{\it intersecting} when every group of $h$ sets of $\cal {F}$ intersect. The {\it core} of $\cal {F}$, denoted by $core(\cal F)$, is the intersection of all sets of $\cal {F}$. The family $\cal{F}$ is $h$-{\it Helly} when every $h$-intersecting subfamily $\cal{F'}$ of $\cal{F}$ satisfies $core(\cal{F'}) \neq \emptyset$, see e.g. \cite{D76}. On the other hand, if for every subfamily $\cal{F'}$ of $\cal{F}$, there are $h$ subsets whose core equals the core of  $\cal {F'}$, then $\cal {F}$ is said to be {\it strong} $h$-{\it Helly}.
Note that the Helly property that we will consider in this paper is precisely the property of being 2-Helly. 

The  {\it Helly number} of the family $\cal{F}$ is the least integer $h$, such that $\cal{F}$ is $h$-Helly. Similarly, the {\it strong Helly number} of $\cal{F}$ is the least $h$, for which  $\cal{F}$ is strong $h$-Helly. It also follows that the strong Helly number of $\cal{F}$ is at least equal to its Helly number. In~\cite{golumbic2009} and~\cite{golumbic2013}, they have determined the strong Helly number of $B_1$-EPG graphs. 

\begin{figure}[h]
  \centering
  \begin{tabular}{ p{3.2cm} p{4.5cm} p{4.5cm} }
    \centering \includegraphics[width=3cm]{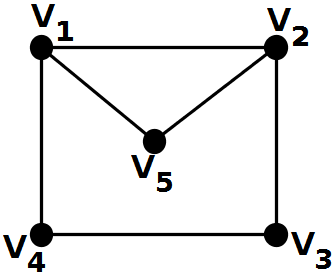} & \includegraphics[width=4cm]{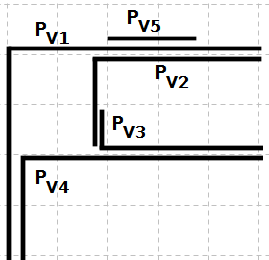} & \includegraphics[width=4cm]{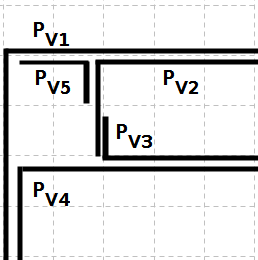}
    \\
    \footnotesize \centering (a) A  graph with 5 vertices & \footnotesize(b) $B_1$-EPG representation that satisfies the Helly property & \footnotesize (c) $B_1$-EPG representation that does  not satisfy the Helly property  \\

  \end{tabular}
\caption{A  graph with 5 vertices in (a) and some single bend representations: Helly in (b) and not Helly in (c)} \label{fig:envelopeRepresentacoes}
\end{figure}


Next, we describe some terminology and notation.

The term \emph{grid} is used to denote the Euclidean space of integer orthogonal coordinates. Each pair of integer coordinates corresponds to a \emph{point} (or vertex) of the grid. The \emph{size} of a grid is its number of points. The term \emph{edge of the grid} will be used to denote a pair of vertices that are at a distance one in the grid. Two edges $e_1$ and $e_2$ are \emph{consecutive edges} when they share exactly one point of the grid.
 A (simple) path in the grid is as a sequence of distinct edges $e_1, e_2, \leq, e_m$,  where consecutive edges are adjacent, i.e., contain a common vertex, whereas non-consecutive edges are not adjacent.  In this context, two paths only intersect if they have at least a common edge. The first and last edges of a path are called \emph{extremity edges}.
  
The \emph{direction of an edge} is vertical when the first coordinates of its vertices are equal, and is horizontal when the second coordinates are equal. A \emph {bend} in a path is a pair of consecutive edges $ e_1, e_2 $ of that path, such that the directions of $ e_1$ and $ e_2$ are different. When two edges $ e_1$ and $e_2 $ form a bend, they are called \emph { bend edges}. A \emph {segment} is a set of consecutive edges with no bends. 
Two paths are said to be \emph{edge-intersecting}, or simply  \emph{intersecting} if they share at least one edge. Throughout the paper, any time we say that two paths intersect, we mean that they edge-intersect. If every path in a representation of a graph $G$ has at most $k$ bends, we say that this graph $G$ has a \emph{$B_k$-EPG} representation. When $k = 1$ we say that this is a \emph{single bend} representation.

\medskip

In this paper, we study the Helly-$B_k$-EPG graphs. First, we show that every graph admits an EPG representation that is Helly, and present a characterization of Helly-$B_1$-EPG representations. Besides, we relate Helly-$B_1$-EPG graphs with L-shaped graphs, a natural family of subclasses of $B_1$-EPG. Finally, we prove that recognizing Helly-$B_k$-EPG graphs is in NP, for every fixed $k$. Besides, we show that recognizing Helly-$B_1$-EPG graphs is NP-complete, and it remains NP-complete even when restricted to 2-apex and 3-degenerate graphs.

The rest of the paper is organized as follows. In Section~\ref{sec:prelim}, we present some preliminary results, we show that every graph is a Helly-EPG graph, present a characterization of Helly-$B_1$-EPG representations, and relate Helly-$B_1$ EPG with L-shaped graphs. In Section~\ref{sec:NPpert}, we discuss the NP-membership of {\sc Helly-$B_k$ EPG Recognition}. In Section~\ref{sec:sectionDispositivoClausula}, we present the NP-completeness of recognizing Helly-$B_1$-EPG graphs.

\section{Preliminaries}\label{sec:prelim}

The study starts with the following lemma.
 
 \begin{lemma}[\cite{golumbic2009}] \label{lem:todoGrafoEpg}
 Every graph is an EPG graph.
 \end{lemma}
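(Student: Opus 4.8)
The plan is to give a direct construction proving the (well known) fact that, with no restriction on the number of bends, every simple graph is EPG. Let $G$ have vertex set $\{v_1,\dots,v_n\}$ and edge set $\{e_1,\dots,e_m\}$. I would build an explicit grid $Q$, assign to each vertex $v_i$ an explicit path $P_{v_i}$, and then verify both directions of the adjacency condition. Since Lemma~\ref{lem:todoGrafoEpg} imposes no bound on the number of bends, the representation may be as wasteful as convenient; the only thing that must be controlled is that two paths share a grid edge precisely when the corresponding vertices are adjacent.

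Concretely, I would take $Q$ to have $n$ rows, one dedicated to each vertex, and $2m$ columns, reserving the column pair $\{2k-1,2k\}$ for the edge $e_k$. Each path $P_{v_i}$ begins as a long horizontal \emph{spine} that traverses every horizontal edge of row $i$. Then, for each edge $e_k=\{v_i,v_j\}$ whose larger-indexed endpoint is $v_i$ (i.e.\ $j<i$), I replace the spine edge of $P_{v_i}$ between columns $2k-1$ and $2k$ by a \emph{detour}: starting from row $i$ it goes vertically up column $2k-1$ to row $j$, then takes the single horizontal edge of row $j$ between columns $2k-1$ and $2k$, then goes vertically down column $2k$ back to row $i$. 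Because the reserved column pairs are pairwise disjoint, these detours never interfere with one another, so each $P_{v_i}$ is a well-defined simple path. The point of the design is that the horizontal edge of row $j$ used by the detour lies on the spine of $P_{v_j}$, which is exactly where $P_{v_i}$ and $P_{v_j}$ should meet.

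To finish, I would check the two implications. If $v_iv_j=e_k\in E(G)$, say with $j<i$, then $P_{v_i}$ and $P_{v_j}$ both contain the horizontal edge of row $j$ between columns $2k-1$ and $2k$, so they intersect. Conversely, suppose $v_iv_j\notin E(G)$. Then no edge of $G$ is incident to both $v_i$ and $v_j$, so no column pair is used by detours of both $P_{v_i}$ and $P_{v_j}$; the two spines lie in distinct rows; and every horizontal edge of a detour lies in the row of the \emph{other} endpoint of its own edge, whereas every vertical edge of a detour lies in that detour's own reserved column pair. Running through these observations shows that $P_{v_i}$ and $P_{v_j}$ have no grid edge in common.

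The one delicate point — and what one must be careful about — is the transiting behaviour of a detour: the detour of $P_{v_i}$ for $e_k=\{v_i,v_j\}$ runs vertically through all the rows strictly between rows $j$ and $i$, that is, through rows belonging to other vertices. Giving each edge its own private column pair is precisely what guarantees that these transiting vertical edges cannot coincide with any edge of those intermediate vertices' paths, since their non-spine edges occupy other column pairs while their spine edges are horizontal. Once this is noted, the remaining verifications are routine; a tiny instance such as $G$ being the path $v_1v_2v_3$ already illustrates how the non-adjacent pair $v_1,v_3$ ends up with edge-disjoint paths.
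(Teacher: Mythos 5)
Your construction is correct: dedicating a row to each vertex and a private column pair to each edge, and letting the larger-indexed endpoint detour into the other endpoint's row, does give two paths a common grid edge exactly when the vertices are adjacent, and your handling of the transiting vertical segments (they live only in the detour's reserved columns, where no other path has vertical edges and all other paths' edges are horizontal spine edges in different column pairs) closes the one delicate case. Note, however, that the paper does not prove this lemma at all -- it is quoted from Golumbic, Lipshteyn and Stern -- and the nearest in-paper argument is the stronger Lemma~\ref{lem:todoGrafoEpgHelly}, whose construction is genuinely different from yours: there each \emph{maximal clique} $C_i$ is mapped to a designated grid edge, and each vertex is represented by a descending staircase path threading through the grid edges of the cliques containing it. The trade-offs are worth seeing. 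Your edge-based spine-and-detour scheme uses an $n\times 2m$ grid, so it is polynomial in the size of $G$ and needs no clique enumeration, but it gives no control of the Helly property (already for a triangle your three paths are pairwise intersecting with no common edge, i.e.\ the representation is not Helly) and each path may need on the order of $4\deg(v)$ bends. The paper's clique-based staircase representation is automatically Helly and uses at most $\mu-1$ bends per path (Corollary~\ref{cor:maxCliques}), but lives on a $(\mu+1)\times(\mu+1)$ grid, where the number $\mu$ of maximal cliques may be exponential in $n$. For the bare statement ``every graph is EPG'' both routes are perfectly adequate; the clique-based one is what the paper needs later for its Helly results.
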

 
 We show that this result extends to Helly-EPG graphs.
 
 \begin{lemma}\label{lem:todoGrafoEpgHelly}
 Every graph is a Helly-EPG graph.
 \end{lemma}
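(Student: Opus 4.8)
The plan is to take any graph $G$ on $n$ vertices and build an EPG representation that is automatically Helly by forcing every pairwise intersection of paths to occur on a private edge, so that no three paths can be pairwise intersecting without sharing a common edge. First I would recall the standard construction behind Lemma~\ref{lem:todoGrafoEpg}: order the vertices $v_1,\dots,v_n$, and for each edge $v_iv_j \in E(G)$ reserve a distinct edge $e_{ij}$ of a sufficiently large grid; the path $P_{v_i}$ is then routed so as to pass through exactly the edges $e_{ij}$ for all neighbours $v_j$ of $v_i$, and through no other ``shared'' grid edge. The point is that one can lay these rendezvous edges out along a single horizontal line (one column of the grid per edge of $G$), and give each vertex path a long horizontal segment on its own row with short vertical ``spurs'' dropping down to meet the appropriate rendezvous edges — but since in the simplest version the rendezvous edges coincide with segments of a common line, I would instead dedicate, for each pair $\{i,j\}$, a unique grid edge used by $P_{v_i}$ and $P_{v_j}$ and by no other path.

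The key step is the Helly verification. Suppose $\mathcal{S}$ is a pairwise-intersecting subfamily of paths; I must exhibit a common edge. If $|\mathcal{S}| \le 2$ this is immediate. If $|\mathcal{S}| \ge 3$, pick three paths $P_{v_a}, P_{v_b}, P_{v_c} \in \mathcal{S}$. By construction $P_{v_a} \cap P_{v_b}$ is contained in the single reserved edge $e_{ab}$, similarly $P_{v_a}\cap P_{v_c} \subseteq \{e_{ac}\}$ and $P_{v_b} \cap P_{v_c} \subseteq \{e_{bc}\}$. Pairwise intersection forces $e_{ab} = e_{ac} = e_{bc}$ only if these reserved edges coincide, which by the distinctness of the reservation they do not — so in fact a carelessly built representation would \emph{fail} to be pairwise intersecting for any triangle of $G$, which is the obstacle: we cannot make distinct-pair intersections private if $G$ has triangles. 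The fix, and the real content of the proof, is to make each pairwise intersection \emph{large enough and arranged} so that whenever three paths pairwise intersect they must share one edge, i.e. to realize each clique of $G$ on a common edge (an ``edge-clique'' representation, exactly as in Figure~\ref{fig:trianguloepgRepresentacao}(b) for $C_3$). Concretely: represent $G$ with all paths as horizontal subpaths of a single line (possible for any $G$ by the interval-representation idea extended with one line per... ) — no: better, use that every graph has an edge-clique cover, place one vertical ``pillar'' column per clique in the cover, and route $P_v$ as a horizontal path that, for each clique $K \ni v$ in the cover, uses the vertical edge of that clique's pillar at row... Let me state it cleanly: assign each vertex a distinct row; for each clique $K$ in an edge-clique cover of $G$, dedicate a column $c_K$; route $P_v$ to include, for every clique $K \ni v$, a unit edge of column $c_K$ spanning between two consecutive dedicated rows, choosing the rows so that all paths through $c_K$ share one common grid edge of $c_K$. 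Then two paths intersect iff their vertices lie in a common clique of the cover iff they are adjacent in $G$, giving an EPG representation; and any pairwise-intersecting family of paths has vertices that are pairwise adjacent, hence (by properties of the edge-clique cover) contained in one clique $K$, whose paths all share the designated edge of $c_K$ — establishing the Helly property.

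I expect the main obstacle to be exactly the tension just described: the naive ``one reserved edge per edge of $G$'' construction is not even a valid representation of graphs with triangles, let alone Helly, so the proof must instead route the paths so that each maximal clique (or each clique in a fixed clique cover) is witnessed by a \emph{single shared edge} common to all its members — and then argue that pairwise intersection of paths propagates up to a common clique and hence to that shared edge. Verifying that the routings for different cliques can coexist on one grid without creating spurious intersections (paths from different cliques accidentally sharing an edge) is the bookkeeping part; it is handled by giving each clique its own column and each vertex its own row, with enough spacing, so all ``real'' intersections are confined to the designated clique-edges. Once that is in place, the Helly check is the short argument above.
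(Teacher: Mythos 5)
Your overall strategy---dedicate a grid edge to each clique of a cover, route each vertex's path through the designated edges of the cliques containing it, and deduce the Helly property from ``pairwise intersecting paths correspond to a clique of $G$, which lies in a single cover member whose designated edge is common to all''---is essentially the paper's proof, which uses the family of \emph{maximal} cliques and staircase-shaped paths along the grid diagonal. However, as written there is a genuine gap at the step ``(by properties of the edge-clique cover) contained in one clique $K$'': this is false for an arbitrary edge-clique cover. If the cover is, say, the edge set of $G$ itself, then a triangle of $G$ yields three pairwise intersecting paths whose pairwise intersections sit on three distinct designated edges, with no common edge---exactly the failure mode you yourself diagnosed for your first, discarded attempt. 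What you need is a cover in which every clique of $G$ (equivalently, every maximal clique) is contained in a single member; taking the cover to be the set of all maximal cliques, as the paper does, repairs this, and then pairwise intersecting paths correspond to a clique of $G$, which lies inside some maximal clique $C_i$, all of whose paths contain the edge assigned to $C_i$.

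A secondary issue is the geometric routing. ``Assign each vertex its own row and have every path through the pillar $c_K$ contain one fixed edge of $c_K$'' cannot be realized literally by simple grid paths: a path cannot dip from its row down to the common edge and return along the same column (edges would repeat, or non-consecutive edges would be adjacent), and a detour through a spare neighbouring column abandons the one-row-per-vertex picture and must be checked against spurious intersections. The paper avoids this bookkeeping by placing the designated clique edges along the diagonal and letting each path descend monotonically, staircase-fashion, from the edge of one of its maximal cliques to the next, so that all intersections are confined to rows and columns of cliques containing both vertices. With the cover fixed to the maximal cliques and an explicit routing of this kind, your Helly verification goes through.
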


\begin{proof}
Let $G$ be a graph with $n$ vertices $v_1, v_2, \dots, v_n$ and $\mu$ maximal cliques $C_1, C_2, \dots , C_{\mu }$. We construct a Helly-EPG representation of $G$ using a $\mu +1\times \mu +1$ grid $Q$. 
Each maximal clique $C_i$ of $G$ is mapped to an edge of $Q$ as follow: 
\begin{itemize}
    \item if $i$ is even then the maximal clique $C_i$ is mapped to the edge in column $i$ between rows $i$ and $i+1$;
    \item if $i$ is odd then the maximal clique $C_i$ is mapped to the edge in row $i$ between columns $i$ and $i+1$.
\end{itemize}

The following describes a descendant-stair-shaped construction for the paths.
  
Let $v_l \in V(G)$ and $C_i$ be the first maximal clique containing $v_l$ according to the increasing order of their indices. If $i$ is even (resp. odd) the path $P_l$ starts in column $i$ (resp. in row $i$), in the point $(i,i)$. Then $P_l$ extends to at least the point $(i+1, i)$ (resp. $(i, i+1)$) proceeding to the until the row (resp. column) corresponding to the next maximal clique of the sequence, say $C_{j}$, containing $v_l$.
At this point, we bend $P_l$, which goes to the point $(j,j)$ and repeat the process previously described. 
Figure~\ref{fig:gradeDemonstracao} shows the Helly-EPG representation of the octahedral graph $O_3$, according to the construction previously described.

By construction, each path travels only rows and columns corresponding with maximal cliques containing its respective vertex. And, every path crosses the edges of the grid to which your maximal cliques were mapped. Thus, the previously described construction results in an EPG representation of $G$, which is Helly since every set ${\mathcal P}$ of paths representing a maximal clique has at least one edge in its core.
\end{proof}
 
\begin{figure}[ht]
  \centering
  \begin{tabular}{c c c c c }
    \includegraphics[width=3.5cm]{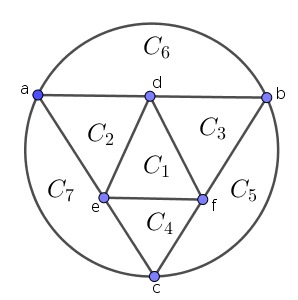} &&   
    \includegraphics[width=9cm]{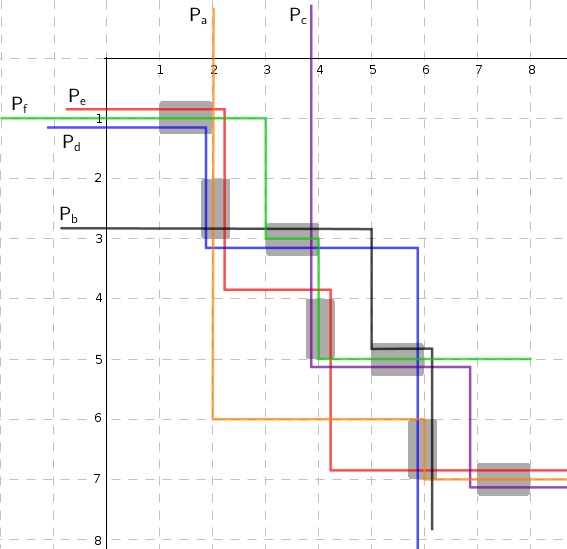}\\
    {\footnotesize (a) The octahedral graph $O_3$\vspace{.1cm}}&& 
    {\footnotesize (b) A Helly-EPG representation of the graph $O_3$} 
  \end{tabular}
  \caption{Helly-EPG representation of the graph $O_3$ according to the construction of Lemma~\ref{lem:todoGrafoEpgHelly}. The paths have been extended to the first/last row or column to improve the presentation.}\label{fig:gradeDemonstracao}
\end{figure} 
 
\begin{definition}
The \emph{Helly-bend number} of a graph $G$, denoted by $b_H(G)$, is the smallest $k$ for which $G$ is a Helly-$B_k$-EPG graph. Also, the bend number of a graph class ${\mathcal C}$ is the smallest $k$ for which all graphs in ${\mathcal C}$ have a $B_k$-EPG representation.
\end{definition}
 
\begin{corollary}\label{cor:maxCliques}
For every graph $G$ containing $\mu$ maximal cliques, it holds that $b_H(G)\leq \mu -1$. 
\end{corollary}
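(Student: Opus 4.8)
The plan is to use directly the Helly-EPG representation built in the proof of Lemma~\ref{lem:todoGrafoEpgHelly} and simply bound the number of bends of each of its paths. Fix a vertex $v_l$ of $G$ and let $C_{i_1}, C_{i_2}, \dots, C_{i_t}$, with $i_1 < i_2 < \cdots < i_t$, be the maximal cliques of $G$ containing $v_l$, listed in increasing order of index. In that construction the path $P_l$ is a monotone descendant staircase: it starts at $(i_1,i_1)$, traverses the grid edge assigned to $C_{i_1}$, and then, for each $k$, performs a straight run followed by a bend toward the point $(i_{k+1},i_{k+1})$ in order to reach and traverse the edge assigned to $C_{i_{k+1}}$. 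Consequently the bends of $P_l$ are partitioned among the $t-1$ transitions between consecutive cliques of this list, and it suffices to bound the number of bends created in the transition from $C_{i_k}$ to $C_{i_{k+1}}$.

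The key structural fact I would isolate is the one coming from the assignment rule of Lemma~\ref{lem:todoGrafoEpgHelly}: the grid edge of a clique with even index is vertical and the grid edge of a clique with odd index is horizontal. Hence two clique-edges whose indices are consecutive integers are perpendicular, and the staircase can pass from one to the other with a single bend, whereas two clique-edges whose indices have the same parity (so they differ by at least $2$) can be joined using at most two bends. In every case the transition from $C_{i_k}$ to $C_{i_{k+1}}$ produces at most $i_{k+1}-i_k$ bends. Summing over $k$, the path $P_l$ has at most $\sum_{k=1}^{t-1}(i_{k+1}-i_k) = i_t - i_1 \le \mu-1$ bends. Since this bound is uniform over all vertices, the representation of Lemma~\ref{lem:todoGrafoEpgHelly} is in fact a Helly-$B_{\mu-1}$-EPG representation of $G$, and therefore $b_H(G) \le \mu-1$.

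The only genuine work — and the step most prone to off-by-one errors — is the per-transition bend count: one must run through the four parity possibilities for the pair $(i_k, i_{k+1})$ and, in each, trace the portion of the staircase that leaves the edge of $C_{i_k}$, runs straight, turns at $(i_{k+1},i_{k+1})$, and begins traversing the edge of $C_{i_{k+1}}$, checking that a bend at $(i_{k+1},i_{k+1})$ is forced exactly when the parity of $i_{k+1}$ is incompatible with the incoming direction. This is entirely mechanical, and everything else is immediate from Lemma~\ref{lem:todoGrafoEpgHelly}. As a sanity check one can note that the bound is tight for this construction: a vertex lying in all $\mu$ maximal cliques yields a path with $\mu$ maximal segments, hence exactly $\mu-1$ bends.
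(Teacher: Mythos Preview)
Your argument is correct and follows the same approach as the paper: both appeal to the staircase construction of Lemma~\ref{lem:todoGrafoEpgHelly} and bound the number of bends of each path. The paper's own proof is terser --- it simply asserts that a stair-shaped path has one fewer bend than the number of maximal cliques containing the vertex --- whereas you carry out a per-transition parity analysis yielding the telescoping bound $i_t-i_1\le\mu-1$; this extra care is harmless (and arguably more accurate when two consecutive clique indices share a parity), but the overall strategy is identical.
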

\begin{proof}
From the construction presented in Lemma~\ref{lem:todoGrafoEpgHelly}, it follows that any graph admits a Helly-EPG representation where its paths have a descendant-stair shape. Since the number of bends in such a stair-shaped path is the number of maximal cliques containing the represented vertex minus one, it holds that $b_H(G)\leq \mu -1$ for any graph $G$.
\end{proof}


Next, we examine the $B_1$-EPG representations of a few graphs that we employ in our constructions.

\medskip

Given an EPG representation of a graph $G$, for any grid edge $e$, the set of paths containing $e$ is a clique in $G$; such a clique is called an edge-clique. A claw in a grid consists of three grid edges meeting at a grid point. The set of paths that contain two of the three edges of a claw is a clique; such a clique is called a claw-clique, see~\cite{golumbic2009}. Fig.~\ref{fig:trianguloepgRepresentacao} illustrates an edge-clique and a claw-clique.

\begin{lemma}[\cite{golumbic2009}]\label{edge-claw-clique} 
Consider a $B_1$-EPG representation of a graph $G$. Every clique in $G$ corresponds to either an edge-clique or a claw-clique.
\end{lemma}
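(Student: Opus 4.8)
The plan is to work directly with the set $\mathcal{P}=\{P_1,\dots,P_m\}$ of paths representing the vertices of the clique $K$; by hypothesis these paths pairwise share a grid edge, and the goal is to establish a dichotomy: either all paths of $\mathcal{P}$ contain one common grid edge (edge-clique), or they all run through a common grid point and each of them uses at least two of three fixed grid edges incident to that point (claw-clique).

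First I would record the relevant structural fact: a single-bend path $P$ is the union of at most two maximal straight segments, one horizontal and one vertical, meeting at the bend point, and a horizontal grid edge is never equal to a vertical one. Hence for any two $P,P'\in\mathcal{P}$ the edge set $P\cap P'$ is the union of a (possibly empty) sub-interval of their horizontal segments lying on a common row and a (possibly empty) sub-interval of their vertical segments lying on a common column; in particular, if $P\cap P'$ meets two distinct grid lines then $P$ and $P'$ share the corner where those lines cross, and that corner is the bend point of both. I would also note the easy consequence that if two edges of a path lie on the same segment, then the whole sub-path between them stays on that segment.

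Next, assuming the paths of $\mathcal{P}$ have no common edge, I would produce the claw. Pick three paths $P_1,P_2,P_3$ and edges $e_{12}\in P_1\cap P_2$, $e_{13}\in P_1\cap P_3$, $e_{23}\in P_2\cap P_3$; the absence of a common edge forces these three to be pairwise distinct. Using the segment structure above I would carry out a case analysis according to whether each pair $\{e_{12},e_{13}\}$, $\{e_{12},e_{23}\}$, $\{e_{13},e_{23}\}$ lies on a single segment of the corresponding path or is split between its horizontal and vertical segments. The claim to extract is that these three edges are forced to be three of the (at most four) grid edges incident to a single grid point $q$: if, for instance, a pair were collinear one uses the "sub-path stays on the segment" observation, together with the left/right or up/down ordering of the shared edges on the common line, to exhibit an edge contained in all of $P_1,P_2,P_3$, contradicting the assumption; and the mixed cases pin the bend points of the paths to $q$. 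Fixing a claw at $q$ given by three of the four incident edges, I would then show every other $P\in\mathcal{P}$ contains two of these three edges, since $P$ intersects each of $P_1,P_2,P_3$ and the same local analysis near $q$ leaves no alternative without either missing one of $P_1,P_2,P_3$ or creating a common edge. In the complementary case, where a common edge exists, membership in an edge-clique is immediate.

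The main obstacle is the middle step: showing that a pairwise-intersecting triple of single-bend paths with no common edge is necessarily in the claw configuration around one grid point. The difficulty is purely in the bookkeeping — tracking segment directions, the orderings of shared edges along the common rows and columns, and the positions of the bend points, all under the "no common edge" constraint — so the proof, while not deep, requires a careful and complete enumeration of the relative positions of $e_{12},e_{13},e_{23}$.
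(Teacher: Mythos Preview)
The paper does not provide its own proof of this lemma; it is quoted as a result of \cite{golumbic2009} and used as a black box. So there is no in-paper argument to compare your proposal against.

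As for the proposal itself, the overall strategy is reasonable, but there is a genuine gap at the very start of the ``no common edge'' case. You write: ``Pick three paths $P_1,P_2,P_3$ \dots\ the absence of a common edge forces these three to be pairwise distinct.'' This inference is unjustified. Your hypothesis is that the \emph{entire} family $\mathcal{P}$ has empty core, but the three paths you pick may very well share a common edge $e$, in which case one can take $e_{12}=e_{13}=e_{23}=e$ and your case analysis never begins. What you actually need is a triple $P_1,P_2,P_3\in\mathcal{P}$ whose common intersection is empty, and the existence of such a triple is exactly the statement that pairwise edge-intersecting families of single-bend paths are $3$-Helly --- a nontrivial fact (essentially the strong Helly number computation) that the present paper also attributes to \cite{golumbic2009,golumbic2013} rather than proving. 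Without establishing this, or restructuring the argument to avoid it, the proof does not go through.

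Once a genuinely bad triple is in hand, your plan for the claw step and for extending to the remaining paths is plausible; as you correctly flag, the bookkeeping there is the main labor, but it is not where the conceptual hole lies.
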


Next, we present a characterization of Helly-$B_1$-EPG representations.

\begin{lemma}\label{caracterization}
A $B_1$-EPG representation of a graph $G$ is Helly if and only if each clique of $G$ is represented by an edge-clique, i.e., it does not contain any claw-clique.
\end{lemma}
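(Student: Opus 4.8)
The plan is to reduce the statement to the correspondence, built into the definition of an EPG representation, between pairwise edge-intersecting families of paths and cliques of $G$: two paths share a grid edge if and only if their vertices are adjacent, so a sub-collection of the paths of the representation is pairwise intersecting precisely when the corresponding set of vertices is a clique of $G$. With this observation the Helly property of the representation becomes a statement purely about cliques, and the proof splits into two short implications.

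For the ``if'' direction I would assume that every clique of $G$ is represented by an edge-clique and check the Helly property directly. Let $\mathcal{F}$ be any pairwise edge-intersecting sub-collection of the paths; the corresponding vertices are pairwise adjacent and hence form a clique $K$. By hypothesis $K$ is an edge-clique, so some grid edge $e$ lies on every path of $\mathcal{F}$, that is, $core(\mathcal{F})\neq\emptyset$. Since $\mathcal{F}$ was arbitrary, the family of all paths in the representation satisfies the ($2$-)Helly property, so the representation is Helly.

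For the ``only if'' direction I would argue by contraposition. Suppose some clique $K$ of $G$ is not represented by an edge-clique. The paths $\{P_v : v\in K\}$ are pairwise edge-intersecting, but by assumption they have no common grid edge, so this pairwise-intersecting sub-collection has empty core and the representation is not Helly. This already yields the equivalence ``the representation is Helly $\iff$ every clique of $G$ is an edge-clique''. To recover the wording in terms of claw-cliques, I would invoke Lemma~\ref{edge-claw-clique}: in a $B_1$-EPG representation every clique is either an edge-clique or a claw-clique, and a claw-clique that is not an edge-clique is exactly a clique whose paths have empty core; hence ``every clique is an edge-clique'' is equivalent to ``no clique of $G$ is a claw-clique''.

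The argument is short, and the points that require care are conceptual rather than computational. The first is to make explicit that the Helly property only has to be verified on cliques — this is where the definition of an EPG representation enters, ensuring that nothing is lost by disregarding arbitrary pairwise-intersecting set systems (and that non-maximal cliques are covered, by extending such a clique to a maximal one whose common edge then also lies on the smaller clique). The second is to be precise about the dichotomy of Lemma~\ref{edge-claw-clique} so that excluding claw-cliques is genuinely the negation of ``every clique is an edge-clique''; I expect this small piece of bookkeeping between ``edge-clique'', ``claw-clique'' and ``has a common edge'' to be the main obstacle, minor as it is.
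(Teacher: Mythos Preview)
Your proposal is correct and follows essentially the same approach as the paper: both arguments hinge on the correspondence between pairwise edge-intersecting sub-collections of paths and cliques of $G$, and both invoke Lemma~\ref{edge-claw-clique} for the edge-clique/claw-clique dichotomy. The paper's proof is simply a terser version of yours---it states that a claw-clique immediately violates Helly, and conversely that a Helly-violating pairwise-intersecting family is a clique which, having no common edge, must be a claw-clique by Lemma~\ref{edge-claw-clique}.
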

\begin{proof}
Let $R$ be a $B_1$-EPG representation of a graph $G$. It is easy to see that if $R$ has a claw-clique, it does not satisfy the Helly property. Now, suppose that $R$ does not satisfy the Helly property. Thus it has a set ${\mathcal P}$ of pairwise intersecting paths having no common edge. Note that the set ${\mathcal P}$ represents a clique of $G$, and by Lemma~\ref{edge-claw-clique}, every clique in $G$ corresponds to either an edge-clique or a claw-clique. Since ${\mathcal P}$ represents a clique, but its paths have no common edge, then it has a claw-clique. 
\end{proof}

Now, we consider EPG representations of $C_4$.

\begin{definition} \label{defi:tortasFrame}
Let $ Q $ be a grid and let $ (a_1, b),$ $(a_2, b),$ $(a_3, b),$ $(a_4, b)$ be a 4-star as depicted in Figure~\ref{fig:piesInGrid}(a). Let $ \mathcal{P} = \{P_1, \dots , P_4\}$ be a collection of distinct paths each containing exactly two edges of the $4$-star.
\begin{itemize}
\item A \emph{true pie} is a representation where each $P_i$ of $ \mathcal{P} $ forms a bend in $b$.

\item A \emph {false pie} is a representation where two of the paths $P_i$ do not contain bends, while the remaining two do not share an edge. 
\end{itemize}
\end{definition}

Fig.~\ref{fig:piesInGrid} illustrates true pie and false pie representations of a $C_4$.

\begin{definition} \label{defi:tortasFrame2}
 Consider a rectangle of any size with 4 corners at points $ (x_1, y_1);$ $(x_2, y_1);$ $(x_2, y_2);$ $(x_1, y_2) $, positioned as in  Fig.~\ref{fig:frameInGrid}(a). 
 \begin{itemize}
 \item A \emph{frame} is a representation containing 4 paths $\mathcal{P} =  \{ P_1, \dots, P_4\} $, each having a bend in a different corner of a rectangle, and such that the  sub-paths $ P_1 \cap P_2, P_1 \cap P_3, P_2 \cap P_4, P_3 \cap P_4 $ share at least one edge. While $P_1 \cap P_4 $ and $ P_2 \cap P_3$ are empty sets.
 
 \item A square-frame is a frame where $P_1$, $P_2$, $P_3$ and $P_4$ have respectively point of bend $ (x_1, y_1),$ $(x_2, y_1),$ $(x_1, y_2)$ and $(x_2, y_2)$, and are of the shape $\llcorner$, $\lrcorner$, $\ulcorner$ and $\urcorner$.  (see Fig.\ref{fig:frameInGrid})
 \end{itemize}
\end{definition}

Fig.~\ref{fig:frameInGrid} illustrates some frame representations of a $C_4$.

\begin{figure}[htb]
  \centering
  \begin{tabular}{c c c c c }
    \includegraphics[width=3.5cm]{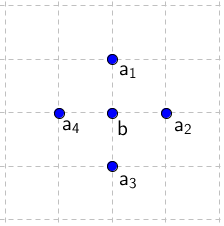}    
    & &\includegraphics[width=3.5cm]{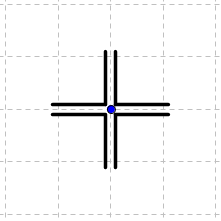} 
    & &
 \includegraphics[width=3.5cm]{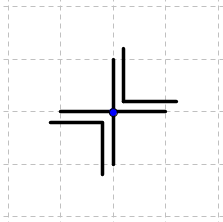} \\
    {\footnotesize (a) 4-star in grid}  & &  {\footnotesize (b) True pie} & & {\footnotesize (c) False pie} 
  \end{tabular}
  \caption{$B_{1}$-EPG representation of the induced cycle of size 4 as pies with emphasis in center $b$}\label{fig:piesInGrid}
\end{figure} 

\begin{figure}[htb]
  \centering
  \begin{tabular}{c c c c c }
    \includegraphics[width=3.5cm]{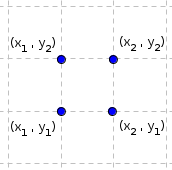}    
    & &
   \includegraphics[width=3.5cm]{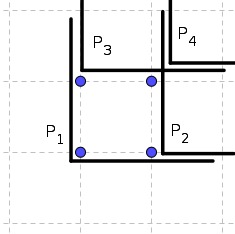} 
     & &
   \includegraphics[width=3.5cm]{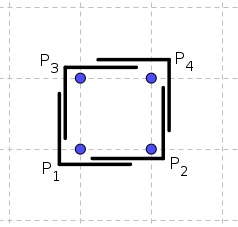} \\
   {\footnotesize (a) Points of the coordinates of bends of a frame}  
   & & {\footnotesize (b) An example of a frame} 
   & & {\footnotesize (c) A square-frame} 
  \end{tabular}
  \caption{$B_{1}$-EPG representation of the induced cycle of size 4 as frame}\label{fig:frameInGrid}
\end{figure} 

\begin{lemma}[\cite{golumbic2009}]\label{lem:representacaoC4}
Every  $C_4$ that is an induced subgraph of a graph $ G $ corresponds, in any representation, to a true pie, a false pie, or a frame.
\end{lemma}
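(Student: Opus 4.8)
The plan is to reason only about the four paths $P_1,P_2,P_3,P_4$ representing the vertices of the induced $C_4$ (the rest of $G$ is irrelevant), taken in cyclic order so that $P_i\cap P_{i+1}\neq\emptyset$ for every $i$ while $P_1\cap P_3=P_2\cap P_4=\emptyset$ (indices mod $4$). Write $I_{i,i+1}=P_i\cap P_{i+1}$, a nonempty sub-path of the grid. Two preliminary remarks will simplify everything. First, the sub-paths $I_{12},I_{23},I_{34},I_{41}$ are pairwise disjoint, since an edge common to two of them would lie in three of the four paths, hence in $P_1\cap P_3$ or in $P_2\cap P_4$. Second, replacing each $P_i$ by its unique minimal sub-path $P_i^{*}$ that contains both $I_{i-1,i}$ and $I_{i,i+1}$ changes none of the pairwise intersections among the four paths and creates no bend, so I may assume each $P_i$ consists of $I_{i-1,i}$, then a (possibly empty) connecting sub-path, then $I_{i,i+1}$, and still has at most one bend.

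Next I would dispose of the totally degenerate situation in which every (reduced) $P_i$ is a straight segment. If $P_1$ is straight, then $I_{12}$ and $I_{41}$ are collinear, say in a row $r$; then $I_{12}\subseteq P_2$ forces the segment of $P_2$ through $I_{12}$ to lie in row $r$, and since $P_2$ is straight all of $P_2$ lies in row $r$; iterating around the cycle, all four paths become intervals of a single line whose intersection graph is the induced $C_4$ -- impossible, since interval graphs contain no induced $C_4$. Hence at least one path has a genuine bend; after relabeling, say $P_2$ bends at a grid point $q_2$. Because $P_2=P_2^{*}$, its two extremal sub-paths $I_{12}$ and $I_{23}$ cannot lie in the same segment of $P_2$, so one lies in the horizontal arm of $P_2$ (the row through $q_2$) and the other in the vertical arm (the column through $q_2$); the borderline possibility that one of them contains the corner of $P_2$ simply forces the corresponding neighbour to bend at $q_2$ as well, which I fold into the case distinction.

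The remainder is a finite case analysis driven by $P_2$. With, say, $I_{12}$ in the horizontal arm of $P_2$ and $I_{23}$ in its vertical arm, one propagates constraints to the neighbours: $P_1$ contains the horizontal sub-path $I_{12}$, so $P_1$ is either straight in that row or bends with its horizontal arm in that row; symmetrically $P_3$ is either straight in the column of $q_2$ or bends with its vertical arm there; and $P_4$ must contain the two disjoint sub-paths $I_{41}$ and $I_{34}$. Using that a one-bend path has its vertical edges in at most one column and its horizontal edges in at most one row, together with the emptiness conditions $P_1\cap P_3=P_2\cap P_4=\emptyset$, each combination of these options collapses to exactly one picture: (i) all four paths bend and their bend points are precisely the four corners of an axis-parallel rectangle, which, after matching the cyclic labels, is a \emph{frame}; (ii) exactly two opposite paths bend, both at a common point $q_2$, the other two pass straight through $q_2$, and the two bending paths share no edge, so each of the four paths uses exactly two edges of the $4$-star at $q_2$ -- a \emph{false pie}; (iii) all four paths bend at the common point $q_2$, each using two edges of the $4$-star there -- a \emph{true pie}. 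All other direction/position combinations are shown to violate one of the two emptiness conditions, and hence do not occur; this completes the proof.

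The step that carries the actual weight is this last case analysis. One has to split according to which of the neighbours $P_1,P_3$ of the bent path $P_2$ are straight and which are bent, according to the orientations of $I_{12},I_{23},I_{34},I_{41}$, and according to the directions in which the various arms leave the bend points, and then check in each branch that the non-adjacencies $v_1v_3\notin E(G)$ and $v_2v_4\notin E(G)$ force either a single common bend vertex (the two pie cases) or a genuine rectangle of bend vertices (the frame case), and that the leftover combinations -- including ``degenerate rectangles'' in which two bend points coincide -- cannot arise. Doing that verification cleanly, rather than by brute enumeration of all orientations, is the main obstacle.
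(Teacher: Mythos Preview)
The paper does not supply its own proof of this lemma: it is quoted verbatim from \cite{golumbic2009} and used as a black box, so there is no argument in the text to compare against. Your proposal is therefore an independent attempt at a proof that the paper simply imports.

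On its own merits, your outline is sound and follows the natural route. The preliminary reductions are correct: the pairwise disjointness of the four intersection segments $I_{i,i+1}$ is forced by $P_1\cap P_3=P_2\cap P_4=\emptyset$, the passage to minimal sub-paths $P_i^{*}$ is harmless (it can only shrink intersections, keeps each $I_{i,i+1}$ intact, and cannot introduce a bend), and the ``all straight'' case is indeed impossible because it would put all four paths on a single grid line and realise an induced $C_4$ as an interval graph. Once some $P_2$ genuinely bends and is minimal, your observation that $I_{12}$ and $I_{23}$ must sit on different arms of $P_2$ is the key structural fact that drives the rest.

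What you call ``the main obstacle'' is exactly where the content of the lemma lives, and you have not yet discharged it. Two points deserve care when you write it out. First, make sure the frame case really yields four \emph{distinct} corners: you must explicitly exclude degenerate rectangles where two bend points coincide without collapsing to a pie (this is where the non-adjacency $P_2\cap P_4=\emptyset$ does work, since a shared bend point between $P_2$ and $P_4$ would force a shared edge). Second, in the false-pie branch, verify not only that two opposite paths are straight through the common point but also that the two bent paths occupy \emph{opposite} pairs of rays of the $4$-star and are edge-disjoint; otherwise the configuration is not a false pie in the sense of Definition~\ref{defi:tortasFrame}. With those checks made explicit, the enumeration closes.
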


The following is a claim of~\cite{heldt2014} which a reasoning can be found in~\cite{Asinowski2009}.

\begin{lemma}[\cite{daniel2014b} and \cite{Asinowski2009}]\label{fact:k24facts}
In every single bend representation of a $K_{2,4}$, the path representing each vertex of the largest part has its bend in a false pie.
\end{lemma}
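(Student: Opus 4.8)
The plan is to work with the six induced four-cycles of $K_{2,4}$ and classify how each is drawn. Fix a single bend representation; let $u_1,u_2$ be the two vertices of the small part, let $w_1,w_2,w_3,w_4$ be the four vertices of the large part, and let $P_x$ denote the path of vertex $x$. For each pair $i\neq j$ the set $\{u_1,w_i,u_2,w_j\}$ induces a $C_4$, so by Lemma~\ref{lem:representacaoC4} each of these six induced $C_4$'s is a true pie, a false pie, or a frame. In a false pie (Definition~\ref{defi:tortasFrame}) two paths run straight through the centre $b$ and the other two form their bends at $b$; the two straight paths carry no common edge and, by definition, the two bent paths carry no common edge either, so the bent pair and the straight pair are exactly the two non-adjacent pairs of the $C_4$. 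For the $C_4$ on $\{u_1,w_i,u_2,w_j\}$ this leaves two possibilities: $\{u_1,u_2\}$ bends and $\{w_i,w_j\}$ is straight, which we call a \emph{$u$-straight} false pie, or the reverse, a \emph{$w$-straight} false pie. The whole argument reduces to proving that every $P_{w_i}$ occurs in a $w$-straight false pie, since there $P_{w_i}$ makes its unique bend at the centre of the pie, which is precisely the claim.

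The crux, and the step I expect to require the most care, is ruling out $u$-straight false pies. Suppose the $C_4$ on $\{u_1,w_i,u_2,w_j\}$ were drawn as a $u$-straight false pie with centre $b$. Then $P_{u_1}$ and $P_{u_2}$ each have their unique bend at $b$, and since $u_1,u_2$ are non-adjacent their two L-shapes at $b$ use opposite pairs of the four unit edges at $b$; up to symmetry, $P_{u_1}$ uses the north and east unit edges at $b$, $P_{u_2}$ the south and west ones, $P_{w_i}$ runs straight through $b$ on the north and south edges, and $P_{w_j}$ on the east and west edges. Now take $w_k$ with $k\notin\{i,j\}$. The path $P_{w_k}$ shares an edge with one of the two arms of $P_{u_1}$ and with one of the two arms of $P_{u_2}$, which gives four combinations. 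Examining each one, and using only that $P_{w_k}$ has at most one bend, forces $P_{w_k}$ either to run straight through $b$ (on the north--south pair or the east--west pair) or to bend at $b$ on one vertical and one horizontal unit edge at $b$; in every combination $P_{w_k}$ then contains the north unit edge at $b$ or the east unit edge at $b$, hence shares an edge with $P_{w_i}$ or with $P_{w_j}$, contradicting that the $w$'s are pairwise non-adjacent. Therefore no false pie among the six is $u$-straight.

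Next one would bound by one the number of true pies and the number of frames among the six $C_4$'s, using that no two of the paths $P_{w_1},\dots,P_{w_4}$ share an edge and that $P_{u_1},P_{u_2}$ are single bend paths. In a true pie all four paths bend at the common centre, so two true-pie $C_4$'s would have the same centre, namely the bend point of $P_{u_1}$; the corner shapes of $P_{u_1}$ and $P_{u_2}$ there are then fixed and opposite, which forces the two $w$-paths of such a pie to occupy the two remaining opposite corner shapes at that point, so two distinct true-pie $C_4$'s would make two of the $w$-paths share an edge. For frames (Definition~\ref{defi:tortasFrame2}), $P_{u_1}$ and $P_{u_2}$ bend at diagonally opposite corners of the rectangle; their bend points being fixed and a pair of diagonal corners determining the rectangle, all frame $C_4$'s would use one and the same rectangle, and the two $w$-paths of a frame run inward along the two sides meeting at the two remaining corners, so again two distinct frame $C_4$'s would make two of the $w$-paths share an edge. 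Hence at most two of the six induced $C_4$'s fail to be $w$-straight false pies.

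To finish, observe that each $w_i$ lies in exactly three of the six induced $C_4$'s, so by the bound just obtained at least one of these three is a $w$-straight false pie; in it $P_{w_i}$ is one of the two paths that bend, so $P_{w_i}$ has its bend at the centre of that false pie. Since this holds for every $i\in\{1,2,3,4\}$, the proof is complete.
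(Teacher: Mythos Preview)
The paper does not give its own proof of this lemma; it simply quotes the statement and refers to \cite{heldt2014} and \cite{Asinowski2009} for the reasoning, so there is no in-paper argument to compare against.

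Your elimination of false pies in which $P_{u_1},P_{u_2}$ bend at the centre is correct, and so is the bound of at most one true pie. The gap is in the frame bound: Definition~\ref{defi:tortasFrame2} fixes only the bend \emph{points} of the four paths at the four corners, not their shapes, so the $w$-paths of a frame need not ``run inward along the two sides''. Concretely, take $P_{u_1}$ an $\llcorner$ at $(0,0)$ with arms to $(3,0)$ and $(0,3)$; $P_{u_2}$ a $\urcorner$ at $(2,2)$ with arms to $(-1,2)$ and $(2,-1)$; $P_{w_1}$ a $\urcorner$ at $(2,0)$ with arms to $(1,0),(2,-1)$; $P_{w_3}$ an $\llcorner$ at $(2,0)$ with arms to $(3,0),(2,1)$; $P_{w_2}$ an $\llcorner$ at $(0,2)$ with arms to $(1,2),(0,3)$; $P_{w_4}$ a $\urcorner$ at $(0,2)$ with arms to $(-1,2),(0,1)$. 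This is a valid $B_1$-EPG representation of $K_{2,4}$ in which the four induced $C_4$'s on the pairs $\{w_1,w_2\},\{w_1,w_4\},\{w_2,w_3\},\{w_3,w_4\}$ are all frames (the remaining two, on $\{w_1,w_3\}$ and $\{w_2,w_4\}$, are the desired false pies). So ``at most one frame'' is false, and your final pigeonhole step (``at most two of the six are not $w$-straight false pies'') collapses.

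The repair is local rather than global. Fix $w_i$ and suppose all three of its $C_4$'s are frames; then $w_j,w_k,w_\ell$ all bend at the single remaining free corner of the rectangle determined by the bends of $P_{u_1},P_{u_2}$, and since among the four shapes $\llcorner,\lrcorner,\ulcorner,\urcorner$ at a point any three include two that share a grid edge, two of those $w$-paths would be adjacent, a contradiction. Since, moreover, a true pie and a frame cannot coexist among the six $C_4$'s (they would force the bend points of $P_{u_1}$ and $P_{u_2}$ simultaneously to coincide and to differ), the three $C_4$'s through $w_i$ cannot all avoid the desired false-pie type, and the lemma follows.
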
 

By creating four $K_{2,4}$ and identifying a vertex of the largest part of each one to a distinct vertex of a $C_4$, we construct the graph we called bat graph (see Fig~\ref{fig:grafoQ}). Regarding to such a graph, the following holds.

\begin{figure}[htb]
  \centering
  \begin{tabular}{c c c c c }
    \includegraphics[width=5.5cm]{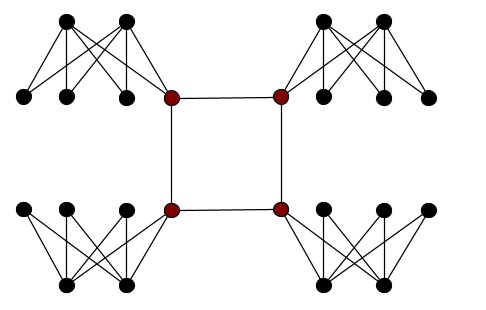}    
    & &
   \includegraphics[width=8cm]{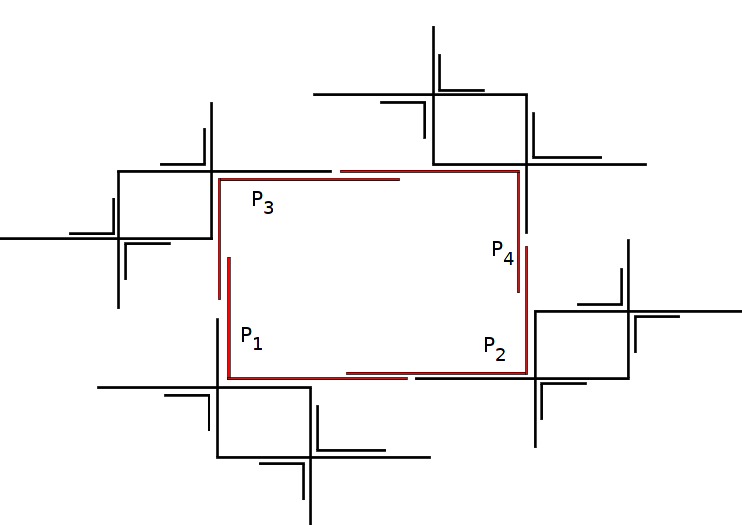}
  \end{tabular}
  \caption{A bat graph $G$ and a Helly-$B_1$-EPG representation of $G$.}\label{fig:grafoQ}
\end{figure} 

\begin{corollary}\label{batgraph}
In every single bend representation of the bat graph, $G$ presented in Fig.~\ref{fig:grafoQ}, the $C_4$ that is a transversal of all $K_{2,4}$ is represented by a square-frame.
\end{corollary}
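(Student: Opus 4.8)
The plan is to combine Lemma~\ref{fact:k24facts} (applied to each of the four $K_{2,4}$'s) with Lemma~\ref{lem:representacaoC4} (applied to the transversal $C_4$), using repeatedly the following elementary remark about a grid point $b$: if two paths each contain exactly two of the four grid edges incident to $b$, one of them bends at $b$, and the two share no edge, then the other also bends at $b$ and uses precisely the two edges at $b$ left free by the first (the ``complementary'' bend). Write $c_1,c_2,c_3,c_4$ for the vertices of the transversal $C_4$ in cyclic order; for each $i$ let $\{z_i^1,z_i^2\}$ be the small part of the $i$-th $K_{2,4}$ and $w_i^1,w_i^2,w_i^3$ its large-part vertices other than $c_i$, so that in $G$ every $w_i^j$ is adjacent only to $z_i^1$ and $z_i^2$ --- in particular to none of $c_1,\dots,c_4$. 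Fix a single bend representation of $G$. By Lemma~\ref{fact:k24facts} applied to the $i$-th $K_{2,4}$, the path $P_{c_i}$ has a unique bend, at some point $b_i$, and $b_i$ is the centre of a false pie of that $K_{2,4}$ in which $P_{c_i}$ is one of the two bent paths; since the induced $C_4$'s of the $i$-th $K_{2,4}$ through $c_i$ are exactly those of the form $z_i^1-c_i-z_i^2-w_i^j$, and the bent paths of a false pie form a non-adjacent pair, the partner of $P_{c_i}$ is $P_{w_i^j}$ for some $j$. By the remark, $P_{w_i^j}$ then contains exactly the two grid edges at $b_i$ complementary to the bend of $P_{c_i}$.

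By Lemma~\ref{lem:representacaoC4}, the transversal $C_4$ is a true pie, a false pie, or a frame. Suppose it is a true pie or a false pie. In a true pie all four of $P_{c_1},\dots,P_{c_4}$ bend at the common centre $b$; in a false pie exactly two of them bend at the centre $b$, and they form one of the two non-adjacent pairs of the $C_4$. In either case, after relabeling we may assume $P_{c_1}$ and $P_{c_3}$ both bend at a common point $b$, so $b_1=b_3=b$, and since $c_1\not\sim c_3$ they share no edge. By the remark, $P_{c_3}$ then contains exactly the two edges at $b$ complementary to the bend of $P_{c_1}$ --- but these are precisely the two edges that $P_{w_1^j}$ contains at $b_1=b$. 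Hence $P_{w_1^j}$ and $P_{c_3}$ share an edge, so $w_1^j\sim c_3$ in $G$, a contradiction. Therefore the transversal $C_4$ is a frame: $P_{c_1},\dots,P_{c_4}$ bend at the four distinct corners of a rectangle; and since two paths of a frame share an edge only when their corners lie on a common side of the rectangle, $C_4$-adjacent vertices occupy adjacent corners and non-adjacent vertices occupy opposite corners.

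It remains to fix the shape at each corner. Since $P_{c_i}$ bends at a corner $v_i$, its horizontal arm lies on the row of $v_i$ and its vertical arm on the column of $v_i$ --- the two lines carrying the sides of the rectangle through $v_i$; I claim each arm points toward the adjacent corner on its side. Suppose not: say the arm of $P_{c_i}$ along the side $v_iv_k$ points away from $v_k$ (here $c_i\sim c_k$, as $v_i,v_k$ are adjacent corners). Then $P_{c_i}$ and $P_{c_k}$ share an edge, which can lie only on the line of $v_iv_k$ (their remaining arms lie on distinct lines); as that arm of $P_{c_i}$ lies on the far side of $v_i$ from $v_k$, the arm of $P_{c_k}$ must run from $v_k$ across $v_i$, and hence contains the edge of the side $v_iv_k$ incident to $v_i$. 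That edge points from $v_i$ toward $v_k$, i.e.\ oppositely to the said arm of $P_{c_i}$, so it is one of the two edges at $v_i$ complementary to the bend of $P_{c_i}$, and therefore one of the two edges contained in $P_{w_i^j}$. Then $P_{w_i^j}$ and $P_{c_k}$ share an edge, forcing $w_i^j\sim c_k$ in $G$, a contradiction. Thus both arms of every $P_{c_i}$ point inward; the inward bends at the bottom-left, bottom-right, top-left and top-right corners of a rectangle are $\llcorner$, $\lrcorner$, $\ulcorner$ and $\urcorner$, so the frame is a square-frame in the sense of Definition~\ref{defi:tortasFrame2}.

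I expect the shape-fixing step to be the main obstacle: one must verify, across the possible non-inward orientations of an arm of $P_{c_i}$, that the edge at $v_i$ forced into $P_{w_i^j}$ by the attached $K_{2,4}$ is exactly the edge that the neighbouring path $P_{c_k}$ is compelled to contain in order to realise the adjacency $c_i\sim c_k$; the rest is routine bookkeeping with the four edges around a grid point. The remaining small claims used along the way --- that in a false pie the two bent paths form a non-adjacent pair and bend at the centre, and that in a frame two paths share an edge only when their corners share a side --- follow from the definitions together with the fact that a path with a single bend occupies exactly the row and the column of its bend point.
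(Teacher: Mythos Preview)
Your proof is correct and follows essentially the same approach as the paper's: both use Lemma~\ref{fact:k24facts} to conclude that the bend point of each $P_{c_i}$ is the centre of a false pie of its $K_{2,4}$, so that the four grid edges at that point are occupied by paths non-adjacent to the other $c_k$'s, and then derive both the frame conclusion and the specific corner shapes from this obstruction. Your version is simply more explicit --- you single out the complementary partner $P_{w_i^j}$ to witness each contradiction, whereas the paper states the obstruction at the bend point more tersely --- but the argument is the same.
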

\begin{proof}
By Lemma~\ref{fact:k24facts}, it follows that in every single bend representation of the bat graph, each path representing a vertex of the $C_4$ (transversal to all $K_{2,4}$) has its bend in a false pie in which paths represent vertices of a $K_ {2,4}$ (Fig.~\ref{fig:grafoQ2} illustrates a $B_1$-EPG representation of a $K_ {2,4}$). Thus, the intersection of two paths representing vertices of this $C_4$ does not contain any edge incident to a bend point of such paths, which implies that such a $C_4$ must be represented by a frame (see Lemma~\ref{lem:representacaoC4}). Note that for each path of the frame, we have four possible shapes ($\llcorner$, $\lrcorner$, $\ulcorner$, and $\urcorner$). Let $P_1$ be the path having the bottom-left bend point, $P_2$ be the path having the bottom-right bend point, $P_3$ be the path having the top-left bend point and $P_4$ be the path having the top-right bend point. Note that to prevent $P_2$ and $P_3$ from containing edges incident at the bend point of $P_1$, the only shape allowed for $P_1$ is $\llcorner$. Similarly, the only shape allowed for $P_2$ is $\lrcorner$ as well as for $P_3$ is $\ulcorner$ and for $P_4$ is $\urcorner$. Thus, the $C_4$ is represented by a square-frame.
\end{proof}

\begin{figure}[htb]
  \centering
  \begin{tabular}{c c c c c }
   \includegraphics[width=5cm]{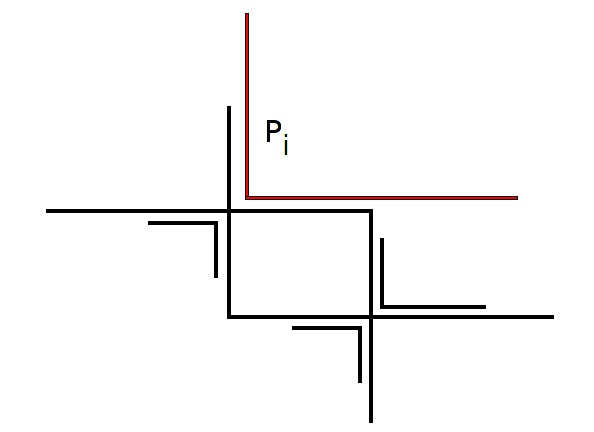}
  \end{tabular}
  \caption{Helly-$B_1$-EPG representation of a $K_{2,4}$.}\label{fig:grafoQ2}
\end{figure} 

\begin{definition}
A $B_k$-EPG representation is \emph{minimal} 
when its set of edges does not properly contain another $B_k$-EPG representation. 
\end{definition}

The \textit{octahedral} graph is the graph containing 6 vertices and 12 edges, depicted  in Figure~\ref{fig:octaedro}(a). Next, we consider representations of the octahedral graph.

The next lemma follows directly from the discussion presented in~\cite{heldt2014}.

\begin{lemma}\label{lem:octaedronaohelly}
Every minimal $B_1$-EPG representation of the octahedral graph $O_3$ has the same shape.
\end{lemma}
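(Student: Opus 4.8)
The plan is to start from an arbitrary minimal $B_1$-EPG representation $R$ of $O_3$ and show that $R$ is determined up to an isometry of the grid (rotation, reflection, translation) composed with a vertex relabeling by an automorphism of $O_3$; this is what ``has the same shape'' should be taken to mean. First I would fix notation: write $V(O_3)=\{a_1,a_2,b_1,b_2,c_1,c_2\}$ with the only non-edges being $a_1a_2$, $b_1b_2$ and $c_1c_2$. Then $O_3$ has exactly $8$ maximal cliques (the triangles $\{a_i,b_j,c_k\}$), each edge of $O_3$ lying in exactly two of them, and exactly $3$ induced $4$-cycles, namely $(a_1b_1a_2b_2)$, $(a_1c_1a_2c_2)$ and $(b_1c_1b_2c_2)$, with every vertex lying on exactly two of these three $4$-cycles.

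The engine of the proof is a coupling argument between the $4$-cycles. By Lemma~\ref{lem:representacaoC4}, each of the three induced $C_4$'s is realized in $R$ as a true pie, a false pie, or a frame; by Lemma~\ref{edge-claw-clique}, each of the eight triangles is an edge-clique or a claw-clique. Since every vertex lies on two of the three $C_4$'s, the single path $P_v$ must be simultaneously one of the four paths of the configuration of one of these $C_4$'s and one of the four paths of the configuration of the other. In a true pie and in a frame all four paths are bent, and in a false pie exactly two of the four are bent; hence, whenever $P_v$ is bent in at least one of its two configurations, its unique bend point is forced to coincide with the relevant grid point (center of a pie) or corner (of a frame) in both configurations. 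This rigidity is what collapses the naive $3^3$ case split on pie/frame types, together with the four orientations $\llcorner,\lrcorner,\ulcorner,\urcorner$ of an L-shaped path. In parallel I would record the compatibility between each triangle and the two $C_4$'s through which its three vertices are pairwise linked: e.g. a triangle realized as a claw-clique forces its three paths to bend at one common grid point, which rules out several of the pie/frame options for the incident $C_4$'s.

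With these reductions in hand the remaining case analysis is finite and, I expect, short. Each surviving branch either violates an adjacency of $O_3$ — a path already committed by one $C_4$ configuration cannot reach the vertex it must also be adjacent to in the other $C_4$ — or violates minimality, since one can then delete a grid edge of $R$ and still have a $B_1$-EPG representation of $O_3$. Minimality is also used positively at the end: it truncates every path to the last grid edge it actually needs, shrinks every frame to a unit square-frame, and makes every pie have unit-length segments, so that once the ``skeleton'' of bend points and pairwise intersections is pinned down, the full edge set is determined. The resulting representation is then exactly the $B_1$-EPG representation of $O_3$ depicted in Figure~\ref{fig:octaedro}, up to grid symmetry and an $O_3$-automorphism, which proves the lemma.

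The step I expect to be the main obstacle is keeping the case analysis genuinely short: the clean way is the coupling observation above (shared vertex $\Rightarrow$ shared bend point), which should leave only one or two essentially different local pictures to examine, plus the degenerate possibility that $P_v$ is an unbent segment participating as a straight path in two false pies; without this observation the raw enumeration over three configuration types per $C_4$ and four L-orientations per bent path becomes unwieldy. A secondary point requiring care is the bookkeeping that upgrades ``the skeleton is forced'' to ``the representation is forced'': one must check that minimality really removes all remaining freedom (segment lengths and the placement of the non-bent portions of the paths), and that the identification of any two minimal representations is realized by a single grid isometry composed with an automorphism of $O_3$.
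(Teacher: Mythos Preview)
Your plan is plausible but substantially heavier than what the paper actually does, and it misses the single observation that collapses the whole analysis. You propose to track all three induced $C_4$'s of $O_3$ simultaneously and couple their pie/frame types through the shared bend points of the paths, then prune a case tree of size roughly $3^3$ times orientation choices. The paper instead fixes \emph{one} induced $C_4$ and uses the structural fact you never invoke: the two vertices of $O_3$ outside this $C_4$ are false twins whose neighbourhood is the entire $C_4$. Hence each of those two paths must intersect all four paths of the chosen $C_4$. A single-bend path cannot meet all four paths of a frame, so the frame case is excluded in one line, and one is left only with the two pie cases for a single $C_4$; adding the two universal paths to a true pie or to a false pie then visibly produces the shape of Figure~\ref{fig:octaedro}(b).

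So the gap in your proposal is not correctness but economy: the ``coupling'' machinery and the $3^3$ enumeration are unnecessary once you notice that two vertices are universal to the four vertices of the cycle. Your own worry (``the main obstacle is keeping the case analysis genuinely short'') is exactly where the paper's argument wins. I would also temper the final step of your outline: minimality does not literally force unit-length segments or unit square-frames in any canonical metric sense, and the paper's ``same shape'' is meant combinatorially (the pattern of Figure~\ref{fig:octaedro}(b)), not as a rigid grid isometry class; pushing for the stronger statement would require more work than the lemma is actually claiming.
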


\begin{proof}
Note that the octahedral graph $O_3$ has an induced $C_4$ such that the two vertices 
of the octahedral graph that are not in such a cycle are false twins whose neighborhood contains the vertices of the induced $C_4$. 

If in an EPG representation of the graph $O_3$ such a $C_4$ is represented as a frame, then no single bend path can simultaneously intersect the four paths representing the vertices of the induced $C_4$. Therefore, we conclude that the frame structure cannot be used to represent such a $C_4$ in a $B_1$-EPG representation of the $O_3$. Now, take a $B_1$-EPG representation of such a $C_4$ shaped as a true pie or false pie.  By adding the paths representing the false twin vertices, which are neighbors of all vertices of the $C_4$, in both cases (from a true or false pie), we obtain representations with the shape represented in Fig.~\ref{fig:octaedro}(b). 
 \end{proof}

 \begin{figure}[h]
  \centering
  
  \begin{tabular}{@{}c@{} p{1.5cm} @{}c@{} }
   \centering \includegraphics[width=2.5cm]{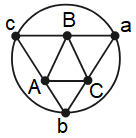} & &\includegraphics[width=2.9cm]{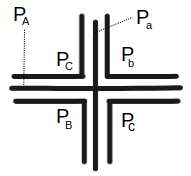}  \\[\abovecaptionskip]
    \footnotesize \centering (a) The octahedral  graph $O_3$  & &  \footnotesize(b) $B_1$-EPG representation of the graph $O_3$
  \end{tabular}

 \caption{The octahedral graph $O_3$ graph and its  $B_1$-EPG representation}\label{fig:octaedro}
\end{figure}
 
\subsection{Subclasses of $B_1$-EPG graphs}

By Lemma ~\ref{lem:octaedronaohelly}, every minimal $B_1$-EPG representation of the octahedral graph $O_3$ has the same shape, as depicted in Fig.~\ref{fig:octaedro}(b). 
Since in any representation of the graph $O_3$ there is always a triple of paths that do not satisfy the Helly property, paths $P_{a}, P_{b} $ and $P_{c}$ in the case of Fig.~\ref{fig:octaedro}(b), it holds that $O_3 \notin$ Helly-$B_1$ EPG, which implies that the class of Helly-$B_1$-EPG graphs is a proper subclass of $B_1$-EPG.

Also, $B_0$-EPG and Helly-$B_0$-EPG graphs coincide. Hence, Helly-$B_0$ EPG can be recognized in polynomial time, see \cite{booth1976}.


In a $B_1$-EPG representation of a graph, the paths can be of the following four shapes: $\llcorner$, $\lrcorner$, $\ulcorner$ and $\urcorner$. \cite{cameron2016edge} studied $B_1$-EPG graphs whose paths on the grid belong to a proper subset of the four shapes. If $S$ is a subset of $\{\llcorner, \lrcorner, \ulcorner, \urcorner\}$, then $[S]$ denotes the class of graphs that can be represented by paths whose shapes belong to $S$, where zero-bend paths are considered to be degenerate $\llcorner$'s. They consider the natural subclasses of $B_1$-EPG: $[\llcorner], [\llcorner, \ulcorner], [\llcorner, \urcorner]$ and $[\llcorner, \ulcorner, \urcorner]$, all other subsets are isomorphic
to these up to 90 degree rotation. \cite{cameron2016edge}  showed that recognizing each of these classes is NP-complete.

The following shows how these classes relate to the class of Helly-$B_1$-EPG graphs.

\begin{figure}[H]	
\center
\includegraphics[width=8.5cm]{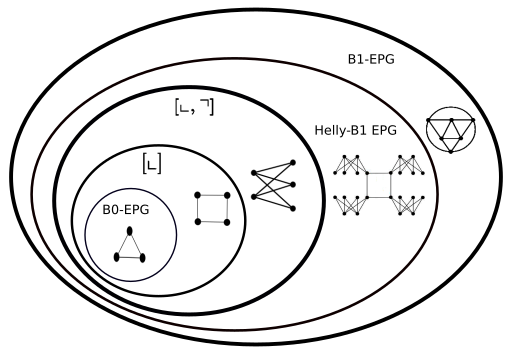} 
\caption{Hierarchical diagram of some EPG classes}
\label{fig:diagramaEPG}
\end{figure}

\begin{theorem}\label{theo:HellyLShaped}
$[\llcorner]\subsetneq [\llcorner, \urcorner]\subsetneq$~Helly-$B_1$ EPG, and Helly-$B_1$ EPG is incomparable with $[\llcorner, \ulcorner]$ and $[\llcorner, \ulcorner, \urcorner]$.
\end{theorem}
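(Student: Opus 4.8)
The statement has three separate parts, and I would handle them in increasing order of difficulty. The inclusions $[\llcorner]\subseteq[\llcorner,\urcorner]$ is immediate by definition, and $[\llcorner,\urcorner]\subseteq$~Helly-$B_1$~EPG follows from the characterization in Lemma~\ref{caracterization}: I would argue that a representation using only the two shapes $\llcorner$ and $\urcorner$ cannot contain a claw-clique. Indeed, a claw-clique requires three grid edges meeting at a point such that some path uses each of the three pairs of them; checking the finitely many ways a $\llcorner$ or a $\urcorner$ can bend around a grid point, one sees that the two available shapes bend in "opposite" directions, so one cannot realize all three pairs of a claw simultaneously. Hence every clique is an edge-clique and the representation is Helly.

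\textbf{Strictness of the two inclusions.} For $[\llcorner]\subsetneq[\llcorner,\urcorner]$ I would exhibit a small graph that is in $[\llcorner,\urcorner]$ but not in $[\llcorner]$; a natural candidate is $C_4$ realized as a true pie (or a suitable small graph from~\cite{cameron2016edge}), since a single-shape representation is very rigid (all paths are "monotone staircases" of the same orientation) and cannot produce the pie. For $[\llcorner,\urcorner]\subsetneq$~Helly-$B_1$~EPG I need a Helly-$B_1$-EPG graph that cannot be drawn with only $\llcorner$'s and $\urcorner$'s; the bat graph from Corollary~\ref{batgraph} is the right tool, since in every single-bend representation its transversal $C_4$ must be a \emph{square-frame}, which by Definition~\ref{defi:tortasFrame2} forces all four shapes $\llcorner,\lrcorner,\ulcorner,\urcorner$ to appear — in particular $\lrcorner$ and $\ulcorner$ are unavoidable, so the bat graph is outside $[\llcorner,\urcorner]$; meanwhile one checks (Fig.~\ref{fig:grafoQ}) that the bat graph has a Helly-$B_1$-EPG representation.

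\textbf{Incomparability with $[\llcorner,\ulcorner]$ and $[\llcorner,\ulcorner,\urcorner]$.} One direction is free: the bat graph lies in $[\llcorner,\ulcorner,\urcorner]$? — no, by the square-frame argument it does not, so I must instead pick a witness on the correct side. Concretely, to show Helly-$B_1$~EPG $\not\subseteq[\llcorner,\ulcorner]$ and $\not\subseteq[\llcorner,\ulcorner,\urcorner]$, I would again use the bat graph: its only single-bend representations use square-frames, which require a $\lrcorner$, and $\lrcorner\notin\{\llcorner,\ulcorner,\urcorner\}$, so the bat graph is a Helly-$B_1$-EPG graph outside both classes. For the reverse direction, I need a graph in $[\llcorner,\ulcorner]$ (hence also in $[\llcorner,\ulcorner,\urcorner]$) that is \emph{not} Helly-$B_1$-EPG; the octahedral graph $O_3$ is the natural candidate, since it was already observed after Lemma~\ref{lem:octaedronaohelly} that $O_3\notin$~Helly-$B_1$~EPG, and I would just verify that the fixed shape of its minimal representation (Fig.~\ref{fig:octaedro}(b)) uses only shapes from $\{\llcorner,\ulcorner\}$ (or exhibit an explicit $[\llcorner,\ulcorner]$-representation). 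Putting these two witnesses together gives incomparability with both classes.

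\textbf{Main obstacle.} The routine inclusions and the diagram-chasing are easy; the real work is the case analysis showing that a $\{\llcorner,\urcorner\}$-representation cannot contain a claw-clique (for the second inclusion) and, symmetrically, the verification that $O_3$ genuinely admits a $[\llcorner,\ulcorner]$-representation while the bat graph genuinely does not lie in $[\llcorner,\ulcorner,\urcorner]$. The bat-graph non-membership rests entirely on Corollary~\ref{batgraph}, so the only thing to be careful about is confirming that a square-frame forces each of the four corner shapes and that no clever re-routing of the non-$C_4$ vertices can avoid this — but that is exactly what Corollary~\ref{batgraph} already asserts, so the argument should go through cleanly.
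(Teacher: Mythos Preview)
Your overall plan mirrors the paper's approach: the paper also argues that $\{\llcorner,\urcorner\}$-representations cannot contain a claw-clique (invoking Lemma~\ref{caracterization}), cites \cite{cameron2016edge} for the strictness $[\llcorner]\subsetneq[\llcorner,\urcorner]$, and uses the bat graph via Corollary~\ref{batgraph} both for $[\llcorner,\urcorner]\subsetneq$ Helly-$B_1$ EPG and for Helly-$B_1$ EPG $\not\subseteq[\llcorner,\ulcorner,\urcorner]$. (For the bat graph's Helly membership the paper simply notes that it is bipartite, hence triangle-free, hence free of claw-cliques --- slightly cleaner than inspecting the figure.) One small slip: $C_4$ is in fact in $[\llcorner]$ --- four $\llcorner$-paths with bends at the corners of a unit square give a frame representation of $C_4$ --- so it does not witness the first strict inclusion; but you already hedge toward \cite{cameron2016edge}, which is exactly what the paper does.

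The genuine gap is your choice of $O_3$ for the reverse incomparability direction. You need a graph that \emph{is} in $[\llcorner,\ulcorner]$ but is \emph{not} Helly-$B_1$ EPG, and $O_3$ fails the first requirement. By the reasoning behind Lemma~\ref{lem:octaedronaohelly}, in every $B_1$-EPG representation of $O_3$ the distinguished induced $C_4$ must appear as a pie (a frame makes the two universal false twins unrepresentable with one bend). But no pie can be built from $\{\llcorner,\ulcorner\}$: both shapes use the rightward edge at the centre, so any two paths bending there share an edge, which rules out the non-adjacent pair in a true pie and the required edge-disjoint bent pair in a false pie. Hence $O_3\notin[\llcorner,\ulcorner]$, and your plan to ``verify that Fig.~\ref{fig:octaedro}(b) uses only shapes from $\{\llcorner,\ulcorner\}$'' cannot succeed. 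The paper's witness is instead the $3$-sun $S_3$: it lies in $[\llcorner,\ulcorner]$, yet every $B_1$-EPG representation of it contains a claw-clique (Observation~7 of \cite{cameron2016edge}), so $S_3\notin$ Helly-$B_1$ EPG. Replacing $O_3$ by $S_3$ fixes your argument.
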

\begin{proof}
\cite{cameron2016edge} showed that $[\llcorner]\subsetneq [\llcorner, \urcorner]$. Also, it is easy to see that $\llcorner$’s and $\urcorner$’s cannot form a claw-clique, thus, by Lemma~\ref{caracterization}, it follows that $[\llcorner, \urcorner]\subseteq$~Helly-$B_1$ EPG. In order to observe that $[\llcorner, \urcorner]$ is a proper subclass of Helly-$B_1$ EPG, it is enough to analyze the bat graph (see Fig.~\ref{fig:grafoQ}): by Corollary~\ref{batgraph} it follows that any $B_1$-EPG representation of a bat graph contains a square-frame, thus it is not in $[\llcorner, \urcorner]$. In addition, the bat graph is bipartite which implies that any $B_1$-EPG representation of that graph does not contain claw-cliques and therefore is Helly.

Now, it remains to show that Helly-$B_1$ EPG is incomparable with $[\llcorner, \ulcorner]$ and $[\llcorner, \ulcorner, \urcorner]$. Again, since any $B_1$-EPG representation of a bat graph contains a square-frame, the bat graph is a Helly-$B_1$-EPG graph that is not in $[\llcorner, \ulcorner, \urcorner]$. On the other hand, the $S_3$ (3-sun) is a graph in $[\llcorner, \ulcorner]$ such that any of its $B_1$-EPG representations have a claw-clique, see~Observation 7 in~\cite{cameron2016edge}. Therefore, $S_3$ is a graph in $[\llcorner, \ulcorner]$ that is not Helly-$B_1$ EPG.
\end{proof}

Figure~\ref{fig:diagramaEPG} depicts examples of graphs of the classes $B_0$-EPG, $[\llcorner]$, $[\llcorner, \urcorner]$, Helly-$B_1$ EPG, and $B_1$-EPG that distinguish these classes.

It is known that recognizing $[\llcorner]$, $[\llcorner, \urcorner]$, and $B_1$-EPG are NP-complete while recognizing $B_0$-EPG and EPG graphs can be done in polynomial time (c.f.~\cite{booth1976}, \cite{heldt2014}, and \cite{cameron2016edge}).

In this paper, we show that it is NP-complete to recognize Helly-$B_1$-EPG graphs.

\section{Membership in NP}\label{sec:NPpert} 

The {\sc Helly-$B_k$ EPG recognition} problem can be formally described as follows.

\begin{table}[h!]
\centering
\begin{tabular}{ll}
\hline \hline
\multicolumn{2}{c}{\sc Helly-$B_k$ EPG Recognition}                         \\ \hline \hline 
\emph{Input}: & A graph $G$ and an integer $k \leq |V(G)|^c$, for some fixed $c$.\\
~ & ~ \\
\emph{Goal:}  & \begin{tabular}[c]{@{}p{9.5cm}}
Determine if there is a set of $k$-bend paths \\ $\mathcal{P} = \{P_1, P_2, \ldots, P_n\} $ in a grid $ Q $ 
such that:\\ 
$\bullet$ \ \ \ $u,v\in V(G)$ are adjacent in $G$ if only if $P_u,P_v$\\ \hspace{0.6cm} share an edge in $Q$; and\\
$\bullet$ \ \ \ $\mathcal{P}$ satisfies the Helly property.
\end{tabular} \\ \hline
\end{tabular}
\end{table}


A (positive) certificate for the {\sc Helly-$B_k$ EPG recognition} consists of a grid $Q$, a set $\mathcal{P}$ of $k$-bend paths of $Q$, which is in one-to-one correspondence with the vertex set $V(G)$ of $G$, such that, for each pair of distinct paths $P_i, P_j\in \mathcal{P}, P_i\cap P_j \neq \emptyset $ if and only if the corresponding vertices are adjacent in $G$. Furthermore, $\mathcal{P}$ satisfies the Helly property.

The following are key concepts that make it easier to control the size of an EPG representation. A \emph{relevant edge} of a path in a $B_k$-EPG representation is either an extremity edge or a bend edge of the path. Note that each path with at most $k$ bends can have up to $2(k + 1)$ relevant edges, and any $B_k$-EPG representation contains at most $2|\mathcal{P}|(k + 1)$ distinct relevant edges. 

To show that there is a non-deterministic polynomial-time algorithm for {\sc Helly-$B_k$ EPG recognition}, it is enough to consider as certificate a  $B_k$-EPG representation $R$ containing a collection $\mathcal{P}$ of paths, $|\mathcal{P}| = |V(G)|$, such that  each path $P_i \in \mathcal{P}$ is given by its set of relevant edges along with the relevant edges, that intersects $P_i$, of each path $P_j$ intersecting $P_i$, where $P_j \in \mathcal{P}$.  The relevant edges for each path are given in the order that they appear in the path, to make straightforward checking that the edges correspond to a unique path with at most $k$ bends.  This representation is also handy for checking that the paths form an intersection model for $G$.

To verify in polynomial time that the input is a positive certificate for the problem, we must assert the following:

\begin{enumerate}
\item[(i)] The sequence of relevant edges of a path $P_i\in \mathcal{P}$ determines $P_i$ in polynomial time; \label{it:bullet1}

\item[(ii)] Two paths $P_i, P_j \in \mathcal{P}$ intersect if and only if they intersect in some relevant edge; \label{it:bullet2}

\item[(iii)] The set $\mathcal{P}$ of relevant edges satisfies the Helly property.  \label{it:bullet3}
\end{enumerate}


The following lemma states that condition~(i) holds.

\begin{lemma}\label{lem:verify1}
Each path $P_i$ can be uniquely determined in polynomial time by the sequence of its relevant edges.
\end{lemma}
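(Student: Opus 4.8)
The plan is to reconstruct the path $P_i$ directly from its ordered list of relevant edges $e_1, e_2, \dots, e_m$ with $m \le 2(k+1)$, and to argue that this reconstruction takes polynomial time and yields a unique path with at most $k$ bends (or correctly reports that the sequence is invalid). First I would observe that between two consecutive relevant edges $e_t$ and $e_{t+1}$ the path contains no bend and no extremity, so the portion of $P_i$ joining them must be a single straight segment of grid edges; consequently $e_t$ and $e_{t+1}$ must be collinear (lie on a common grid row or grid column), and the segment between them is forced to be exactly the set of grid edges on that line between $e_t$ and $e_{t+1}$. Thus I would check, for each $t$, that $e_t$ and $e_{t+1}$ are collinear and that stitching in the straight segment between them is consistent (the shared grid point with the previous segment is the correct endpoint, directions alternate appropriately so that each interior relevant edge that is a bend edge genuinely sits at a direction change, and the resulting edge sequence has no repeated edge). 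If any of these checks fails, the certificate is rejected.

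Next I would bound the running time. The reconstructed path visits at most the grid points appearing as endpoints of the relevant edges plus all grid points strictly between consecutive collinear relevant edges; since the coordinates of all relevant edges are part of the input (written in binary), the number of intermediate grid edges on one segment can be exponential in the input length, so I would not want to list the path edge-by-edge. The fix is standard: represent $P_i$ compactly as the sequence of its $m \le 2(k+1)$ maximal straight segments, each recorded by its two endpoints, which is already essentially the list of relevant edges; all subsequent checks (does a point lie on a segment, do two segments share an edge, etc.) can be done by arithmetic on the coordinates in time polynomial in the encoding length. With this compact encoding the verification that the sequence determines a legal at-most-$k$-bend path is clearly polynomial in $|V(G)|$ and in $\log$ of the largest coordinate, hence polynomial in the certificate size.

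Finally I would argue uniqueness: since each maximal straight segment of $P_i$ is forced once its two extreme relevant edges are known (a segment between two collinear grid edges is uniquely the set of grid edges between them), and the relevant edges are given in path order, the whole path is determined. Conversely any genuine $B_k$-EPG path produces exactly this list of relevant edges in this order, so the map between legal certificates for a path and legal relevant-edge sequences is a bijection; this both justifies that the reconstruction is well defined and that no information is lost by passing to relevant edges.

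The main obstacle I anticipate is purely bookkeeping rather than conceptual: one must be careful about the degenerate cases — paths with $0$ bends (where the two ``extremity edges'' may coincide or the path may be a single edge), consecutive relevant edges that are adjacent or equal, and making sure the alternation of horizontal/vertical directions at the prescribed bend edges is enforced so that the count of bends is exactly what the certificate claims (at most $k$) and not accidentally fewer or more. Handling the potentially huge coordinates via the compact segment representation, rather than an explicit edge list, is the one point where a naive argument would fail to be polynomial, so that is where I would be most careful.
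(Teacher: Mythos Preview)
Your proposal is correct and follows essentially the same segment-by-segment reconstruction as the paper: walk the ordered relevant edges, observing that consecutive ones must be collinear and that the straight run between them is forced. The paper's write-up is terser and sidesteps your coordinate-size worry by framing the conclusion as ``we can decide in $\mathcal{O}(k\cdot |V(G)|)$ time whether a given grid edge lies on $P_i$'' rather than materializing the edge list; your compact-segment encoding achieves the same effect, and in any case a later lemma bounds the grid to size $4n(k+1)\times 4n(k+1)$, so the concern does not actually arise.
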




\begin{proof}
Consider the sequence of relevant edges of some path $P_i\in \mathcal{P}$. Start from an extremity edge of $P_i$. Let $t$ be the row (column) containing the last considered relevant edge. The next relevant edge $e'$ in the sequence, must be also contained in row (column) $t$. If $e'$ is an extremity edge, the process is finished, and the path has been determined. It contains all edges between the considered relevant edges in the sequence. Otherwise, if $e'$ is a bend edge, the next relevant edge is the second bend edge $e''$ of this same bend, which is contained in some column (row) $t'$. The process continues until the second extremity edge of $P_i$ is located. 

With the above procedure, we can determine in $\mathcal{O}(k\cdot |V(G)|)$ time, whether path $P_i$ contains any given edge of the grid $Q$. Therefore, the sequence of relevant edges of $P_i$ uniquely determines $P_i$.
 \end{proof}

Next, we assert property~(ii).

\begin{lemma}\label{lem:relevantEdges}
Let $\mathcal{P}$ be the set of paths in a $B_k$-EPG representation of $G$, and let $P_1, P_2\in \mathcal{P}$. Then $P_1$, $P_2$ are intersecting paths if and only if their intersection contains at least one relevant edge.
\end{lemma}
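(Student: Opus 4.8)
The plan is to prove both directions of the equivalence, noting that one direction is trivial and the other is the one requiring an argument. If $P_1$ and $P_2$ share a relevant edge, then in particular they share an edge, so they are intersecting paths; this direction needs no work. The substance is the converse: suppose $P_1$ and $P_2$ intersect, i.e., their intersection contains at least one grid edge; I must produce a relevant edge in $P_1 \cap P_2$.

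First I would record the structural fact that, since each of $P_1$ and $P_2$ has at most $k$ bends, it is a concatenation of at most $k+1$ maximal straight segments, and the intersection $P_1 \cap P_2$ (being a set of edges lying in the grid) decomposes into maximal sub-paths, each of which must be a straight segment contained in a common segment of $P_1$ and a common segment of $P_2$. So it suffices to treat one such maximal common straight segment $S$ and show it contains a relevant edge of both paths. The key observation is that an endpoint (extremal edge) of the segment $S$, as a sub-path of $P_1$, is ``forced'' to be relevant for $P_1$: the edge $e$ of $S$ at one end is either an extremity edge of $P_1$ itself, or $P_1$ continues past $e$; but $P_1$ cannot continue in the same direction as $S$ (otherwise the edge just beyond would also lie in $P_2$ or the maximality of $S$ would be violated — here one must be slightly careful and use maximality of $S$ within $P_1\cap P_2$ together with the fact that $P_2$ must also terminate or bend there), so $P_1$ bends at that point, making $e$ a bend edge of $P_1$. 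The same reasoning applied to $P_2$ shows that one of the two extremal edges of $S$ is relevant for $P_2$ as well. The mild subtlety is that the edge forced to be relevant for $P_1$ might be the edge at one end of $S$ while the one forced for $P_2$ is at the other end; to handle this I would argue that if $S$ has length one, that single edge is relevant for both, and if $S$ has length at least two, I pick the end where $P_1$ stops or bends — at that same grid point $P_2$ must also stop or bend (since the next edge of $S$'s line is not in the common intersection, yet it could still be in $P_2$ alone only if $P_2$ goes straight, contradicting maximality of $S$ as a common segment only if that edge is also in $P_1$, which it is not) — hmm, this needs the cleaner phrasing below.

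The cleanest route, which I would adopt, is this: let $S = e_1, e_2, \dots, e_m$ be a maximal common sub-path of $P_1 \cap P_2$, all $e_i$ collinear. Consider the edge $e_1$. Since $S$ is a maximal common sub-path, either $e_1$ is an extremity edge of $P_1$, or $P_1$ uses an edge $f$ before $e_1$ with $f \notin P_2$; in the latter case, if $f$ were collinear with $e_1$ then $f$ would have to be absent from $P_2$ while $P_2$ either bends or ends at the shared grid point — either way, in $P_1$ the edge $e_1$ together with $f$... Actually the decisive point is simpler: if $P_1$ goes straight through the grid point between $f$ and $e_1$, then $f$ is collinear with $e_1$; I then look at $P_2$ at that same point — $P_2$ contains $e_1$ but not $f$, so $P_2$ ends or bends there. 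Symmetrically examine $e_m$ with $P_2$: either $e_m$ is an extremity of $P_2$, or $P_2$ bends or ends at the far point, and at that point $P_1$ must also bend or end. Enumerating the few cases (which end of $S$ forces which path, and the length-one case) shows that in every case $S$ contains an edge that is an extremity or bend edge of $P_1$ \emph{and} an extremity or bend edge of $P_2$, hence a relevant edge lying in $P_1 \cap P_2$; a clean way to finish is to observe that the grid point which is an endpoint of the common segment $S$ and where $P_1$ (say) terminates or turns is also a point where $P_2$ terminates or turns, so the edge of $S$ incident to that point is relevant for both.

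The main obstacle, as the discussion above signals, is purely bookkeeping: it is the case analysis at the two ends of a maximal common segment, making sure that the \emph{same} edge of $S$ is certified relevant for \emph{both} paths simultaneously, and correctly invoking the maximality of $S$ within $P_1 \cap P_2$ (as opposed to maximality as a segment of $P_1$ alone) to rule out the possibility that $P_1$ and $P_2$ both continue straight past an end of $S$. I expect no deep difficulty, only the need to phrase the "$P_1$ turns or ends here $\Rightarrow$ $P_2$ turns or ends here too" step without circularity; once that is in place the lemma follows immediately, and it also makes condition~(ii) verifiable in polynomial time since one only needs to compare the $O(k)$ relevant edges of $P_i$ against those of $P_j$.
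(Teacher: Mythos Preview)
Your approach via maximal common segments is sound for the lemma as actually stated, but you have set yourself a strictly stronger target than the lemma requires, and that stronger target is false. The lemma only asks that $P_1\cap P_2$ contain an edge that is a relevant edge of \emph{some} path (of $P_1$ or of $P_2$); you are trying to produce a single edge that is simultaneously an extremity-or-bend edge of \emph{both} $P_1$ and $P_2$. A trivial counterexample to the stronger claim: take $P_1$ to be the horizontal segment on row $0$ from column $0$ to column $10$, and $P_2$ the horizontal segment on row $0$ from column $3$ to column $7$. Then $P_1\cap P_2$ consists of the edges between columns $3$ and $7$; every edge in this intersection is an interior (non-relevant) edge of $P_1$, so no edge in the intersection is relevant for both paths. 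This is exactly why your attempted step ``the grid point which is an endpoint of $S$ and where $P_1$ terminates or turns is also a point where $P_2$ terminates or turns'' cannot be made to work: at each end of $S$ maximality only forces \emph{one} of the two paths to stop or bend.

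Once you drop the unnecessary ``relevant for both'' requirement, your own argument already finishes the proof in one line: at an endpoint $e_1$ of a maximal common straight sub-path $S$, the next collinear grid edge $g$ is not in $P_1\cap P_2$, hence $g\notin P_1$ or $g\notin P_2$; whichever path omits $g$ must end or bend at that point, so $e_1$ is a relevant edge of that path lying in $P_1\cap P_2$. This is essentially the same content as the paper's proof, which phrases the same observation as a three-case analysis according to which of $P_1,P_2$ have a bend in the common row, and in each case locates either an extremity edge or a bend edge of one of the two paths inside the intersection.
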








 

\begin{proof}
Assume that $P_1, P_2$ are intersecting, and we show they contain a common relevant edge. Without loss of generality, suppose $P_1, P_2$ intersect at row \textit{i} of the grid, in the  $B_k$-EPG representation $R$. The following are the possible cases that may occur:

\begin{itemize}
\item \textbf{Case 1:} Neither $P_1$ nor $P_2$ contain bends in row \textit{i}. 

Then $P_1$ and $ P_2$  are entirely contained in row \textit{i}. Since they intersect, either $P_1, P_2$  overlap, or one of the paths contains the other. In any of these situations, they intersect in a common extremity edge, which is a relevant edge.

\item \textbf{Case 2:} $P_1$ does not contain bends in \textit{i}, but $ P_2$ does.

If some bend edge of $P_2$ also belongs to $P_1$, then $P_1, P_2$  intersect in  a relevant edge. Otherwise, since $P_1, P_2$  intersect, the only possibility is that the intersection contains an extremity edge of $P_1$ or $ P_2$. Hence the paths intersect in a relevant edge.  

\item \textbf{Case 3:} Both $P_1$,  $P_2$ contain bends in \textit{i}

Again, if the intersection occurs in some bend edge of $P_1$  or $P_2$, the lemma follows. Otherwise, the same situation as above must occur: $P_1, P_2$  must intersect in an extremity edge.
 
\end{itemize}
In any of the cases, $P_1$ and $P_2$ intersect in some relevant edge.
 \end{proof}

The two previous lemmas let us check that a certificate is an actual $B_k$-EPG representation of a given graph $G$.  The next lemma says we can also verify in polynomial time that the representation encoded in the certificate is a Helly representation. Fortunately, we do not need to check every subset of intersecting paths of the representation to make sure they have a common intersection. 


\begin{lemma}\label{lem:verify3}
Let $\mathcal{P}$ be a collection of paths encoded as a sequence of relevant edges that constitute a  $B_k$-EPG representation of a graph $G$. We can verify in polynomial time if $\mathcal{P}$ has the Helly property.
\end{lemma}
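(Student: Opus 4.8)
The plan is to exploit the structure of $B_1$-EPG representations, in particular Lemma~\ref{caracterization}, which reduces the Helly property to a purely local, combinatorial condition: the representation $\mathcal{P}$ has the Helly property if and only if no clique of $G$ is represented by a claw-clique. So rather than checking exponentially many pairwise-intersecting subfamilies, it suffices to detect whether some claw-clique occurs in the representation. However, since the lemma statement is for general $k$ (not just $k=1$), I would first argue that the obstruction to the Helly property in a $B_k$-EPG representation is still \emph{local} in the following sense: if a subfamily $\mathcal{P}'\subseteq\mathcal{P}$ is pairwise intersecting but has empty common intersection, then there is a \emph{constant-size} witnessing subfamily. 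The key observation is that each path is a union of at most $k+1$ segments (maximal bend-free subpaths), each of which is an interval on a grid line; the edge sets relevant to intersections are captured by relevant edges by Lemma~\ref{lem:relevantEdges}.

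The main steps, in order, would be: (1) By Lemma~\ref{lem:relevantEdges}, two paths intersect if and only if they share a relevant edge, and there are only $O(k\cdot|V(G)|)$ distinct relevant edges in the whole representation; enumerate them all in polynomial time using Lemma~\ref{lem:verify1} to expand each path. (2) For each grid edge $e$ that is relevant, compute the \emph{edge-clique} $K_e = \{v : e\in P_v\}$; this is the set of paths through $e$ and is computable in polynomial time since membership of $e$ in $P_v$ is checkable by Lemma~\ref{lem:verify1}. (3) Observe that $\mathcal{P}$ violates the Helly property precisely when there exists a set $S$ of paths that is pairwise intersecting but shares no common edge; I would show such an $S$ exists if and only if there exist three relevant edges $e_1, e_2, e_3$ (forming, in the $k$-bend setting, a configuration analogous to a claw, i.e.\ the ``spokes'' of the pairwise intersections) and three paths $P_a, P_b, P_c$ with $e_1\in P_a\cap P_b$, $e_2\in P_a\cap P_c$, $e_3\in P_b\cap P_c$ but no single edge in $P_a\cap P_b\cap P_c$ — together with the verification that adding these three paths to each other's edge-cliques still leaves a pairwise-intersecting family that represents a clique of $G$ with empty core. (4) Since there are only polynomially many triples of relevant edges and triples of paths, this search is polynomial; and conversely any Helly violation, being witnessed by a clique of $G$ with no common edge, must contain such a triple because a minimal pairwise-intersecting family with empty core on an EPG representation has size at most $3$ — this is exactly where the strong Helly number bound for $B_1$-EPG (established in \cite{golumbic2009, golumbic2013}) and its $k$-bend analogue come in. I would cite or adapt that bound to guarantee a constant-size witness.

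The hard part will be making precise the claim that a Helly violation always admits a constant-size (ideally size-$3$) witness in the $B_k$ setting — for $k=1$ this is immediate from Lemma~\ref{caracterization} and Lemma~\ref{edge-claw-clique} (a claw-clique involves exactly three edges and the violation is visible on three paths), but for larger $k$ one must argue that a minimal pairwise-intersecting subfamily with empty core cannot be large. The cleanest route is to prove directly: if $\mathcal{P}'$ is pairwise intersecting with $\bigcap_{P\in\mathcal{P}'} E(P)=\emptyset$, pick a minimal such subfamily; in a minimal one, for every $P\in\mathcal{P}'$ the family $\mathcal{P}'\setminus\{P\}$ has a common edge $e_P$, and $e_P\notin P$. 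Then analyze the geometry of these edges $e_P$ on the grid — they must be pairwise ``linked'' through the paths — to conclude $|\mathcal{P}'|\le 3$ (or some absolute constant depending only on the fact that paths are unions of $k+1$ intervals; even a constant like $3(k+1)$ suffices for polynomial-time checking). With any such constant bound $C$ in hand, the verification algorithm simply enumerates all subsets of $\mathcal{P}$ of size at most $C$, and for each checks pairwise intersection (via shared relevant edges, Lemma~\ref{lem:relevantEdges}) and emptiness of the common edge set (via Lemma~\ref{lem:verify1}), all in polynomial time, completing the proof.
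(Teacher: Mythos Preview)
Your approach diverges substantially from the paper's, and the divergence is exactly at the point you yourself flag as ``the hard part.'' You attempt to show that any Helly violation in $\mathcal{P}$ is witnessed by a \emph{bounded-size} subfamily of paths (size $3$, or perhaps $O(k)$), so that one can enumerate all small subfamilies. This amounts to bounding the strong Helly number of $B_k$-EPG paths, which you do not prove and which is genuinely nontrivial for general $k$; your step (3) claim that three paths always suffice is not justified (for $B_1$-EPG the strong Helly number is in fact $4$, not $3$, and for larger $k$ no bound is established here). So as written the argument has a real gap.

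The paper sidesteps this entirely by invoking \emph{Gilmore's Theorem} (see \cite{bergeDuchet1975}): a family $\mathcal{F}$ of subsets has the Helly property if and only if, for every \emph{triple of elements} of the ground set, the subfamily of sets containing at least two of the three elements has nonempty core. The crucial shift is from triples of \emph{paths} to triples of \emph{edges}. Since there are only polynomially many relevant edges, one enumerates all triples of relevant edges; for each triple one collects the paths containing at least two of the three edges and checks, again via relevant edges (Lemma~\ref{lem:relevantEdges}), whether this subfamily has a common edge. No bound on the size of a minimal violating subfamily is needed. Your steps (1)--(2) are fine and appear essentially in the paper's proof as well; what you are missing is the observation that Gilmore's criterion gives a polynomial check directly, making the strong-Helly-number detour unnecessary.
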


\begin{proof}
Let $T$ be the set of relevant edges of $\mathcal{P}$. Consider each triple $T_i$ of edges of $T$ . Let $P_i$ be the set of paths of $\mathcal{P}$ containing at least two of the edges in the triple  $T_i$. By Gilmore's Theorem, see \cite{bergeDuchet1975}, $\mathcal{P}$ has the Helly property if an only if the subset of paths $P_i$  corresponding to each triple  $T_i$  has a non-empty intersection.  By Lemma~\ref{lem:relevantEdges}, it suffices to examine the intersections on relevant edges. Therefore a polynomial algorithm for checking if $\mathcal{P}$ has the Helly property could examine each of the subsets $P_i$, and for each relevant edge $e$ of a path in $P_i$, to compute the number of paths in $P_i$ that contain $e$. Then  $\mathcal{P}$ has the Helly property if and only if for every  $P_i$,  there exists some relevant edge that is present in all paths in $P_i$,  yielding a non-empty intersection.
 \end{proof}

\begin{corollary}\label{cor:comumAtodos}
Let ${\mathcal P'}$ be a set a pairwise intersecting paths in a Helly-$B_k$-EPG representation of a graph $G$. Then the intersection of all paths of  ${\mathcal P'}$ contains at least one relevant edge.
\end{corollary}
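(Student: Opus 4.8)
The plan is to combine the Helly hypothesis with a boundary argument on the common intersection. Since $\mathcal{P}'$ is a set of pairwise intersecting paths in a Helly-$B_k$-EPG representation, the very definition of the Helly property guarantees that $I:=\bigcap_{P\in\mathcal{P}'}P\neq\emptyset$; moreover $I$ is a finite set of grid edges because every path is finite. It remains only to locate a \emph{relevant} edge inside $I$, and the natural candidate is an edge on the ``boundary'' of $I$.

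Concretely, I would pick any edge $e_0\in I$; without loss of generality $e_0$ is horizontal and lies in some row $r$. Let $a$ be the left endpoint of the maximal run of consecutive row-$r$ edges that is contained in $I$ and contains $e_0$, and let $e'$ be the grid edge lying immediately to the left of $a$ in row $r$. By maximality of the run, $e'\notin I$, so there is a path $P\in\mathcal{P}'$ with $e'\notin P$; on the other hand $a\in I\subseteq P$. The crux is then a purely local observation at the grid vertex $v$ shared by $a$ and $e'$: the path $P$ passes through $v$ using the edge $a$, and if $v$ is an endpoint of $P$ then $a$ is an extremity edge of $P$, while if $v$ is interior to $P$ the second edge of $P$ at $v$ is neither $a$ nor $e'$ (the latter being absent from $P$), hence it is a vertical edge, so $P$ bends at $v$ and $a$ is a bend edge of $P$. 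In either case $a$ is a relevant edge of $P$, and since $a\in I$ it lies in every path of $\mathcal{P}'$; thus $a$ is a relevant edge in the common intersection of $\mathcal{P}'$, as required.

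The main obstacle is recognizing that one cannot simply apply Lemma~\ref{lem:relevantEdges} pairwise, since a relevant edge of $P_1\cap P_2$ need not lie in a third path, and that the right object to inspect is the end of a maximal segment of the \emph{global} intersection $I$; once that is in place, the step reduces to the same bend/extremity case analysis used in the proof of Lemma~\ref{lem:relevantEdges}. A minor point is the degenerate case $\mathcal{P}'=\{P\}$, but this is already covered, as the left end of any maximal segment of $P$ through $e_0$ is an extremity or bend edge of $P$.
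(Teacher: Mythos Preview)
Your argument is correct. The boundary trick—walking to the leftmost edge of a maximal horizontal run of the global intersection $I$ and observing that some path $P\in\mathcal{P}'$ must either stop or bend there—cleanly produces a relevant edge of $P$ lying in $I$. One small wording slip: you call $a$ ``the left endpoint of the maximal run'' but then use $a$ as an \emph{edge} throughout; you mean the leftmost \emph{edge} of that run, and once that is read correctly the case analysis at the shared grid point $v$ goes through exactly as you wrote it.

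This is a genuinely different route from the paper's. The paper does not spell out a proof of the corollary at all; it simply notes that the statement ``is a consequence of Gilmore's Theorem'', relying on the machinery set up in the proof of Lemma~\ref{lem:verify3}, where Gilmore's criterion together with Lemma~\ref{lem:relevantEdges} is used to reduce the Helly check to relevant edges. Your argument bypasses Gilmore's Theorem entirely: you use only the Helly hypothesis to get $I\neq\emptyset$ and then a direct local/boundary inspection identical in spirit to the case analysis inside Lemma~\ref{lem:relevantEdges}. The advantage of your approach is that it is self-contained and makes the corollary an immediate structural fact about grid paths, rather than an indirect consequence of a general Helly criterion; the paper's one-line appeal is shorter on the page but leaves the reader to reconstruct essentially the argument you gave.
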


Note that the property described in Corollary~\ref{cor:comumAtodos} is a consequence of Gilmore's Theorem, see~\cite{bergeDuchet1975}, and it applies only to representations that satisfy Helly's property.

From Corollary~\ref{cor:comumAtodos}, the following theorem concerning the Helly-bend number of a graph holds.

\begin{theorem}\label{teo:lowerboundCliques}
For every graph $G$ containing $n$ vertices and $\mu$ maximal cliques, it holds that $$\frac{\mu}{2n}-1\leq b_H(G)\leq \mu -1.$$ 
\end{theorem}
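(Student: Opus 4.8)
The upper bound $b_H(G)\le \mu-1$ is already Corollary~\ref{cor:maxCliques}, so the plan is to focus entirely on the lower bound, which I would restate in the equivalent multiplicative form $\mu \le 2n\,(b_H(G)+1)$. The strategy is a counting argument: fix an optimal Helly representation and injectively charge each maximal clique of $G$ to a relevant edge of that representation.

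Concretely, let $k=b_H(G)$ (this is well defined by Lemma~\ref{lem:todoGrafoEpgHelly}) and fix a Helly-$B_k$-EPG representation $R$ of $G$ with path set $\mathcal{P}$, $|\mathcal{P}|=n$. Given a maximal clique $C$ of $G$, the paths of $\mathcal{P}$ corresponding to the vertices of $C$ are pairwise intersecting, so Corollary~\ref{cor:comumAtodos} applies and yields a relevant edge $e_C$ of $R$ lying in every path of $C$. Define the assignment $C\mapsto e_C$. The main step is to show this assignment is injective. For any grid edge $e$, the set of paths of $\mathcal{P}$ containing $e$ is a clique of $G$ (an edge-clique). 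Since $C$ is contained in the edge-clique of $e_C$, and $C$ is a \emph{maximal} clique while that edge-clique is a clique, we must have that $C$ equals the edge-clique of $e_C$; hence $C$ is completely recovered from $e_C$, and distinct maximal cliques receive distinct relevant edges.

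To finish, I would invoke the relevant-edge count already noted in the excerpt: each path has at most $k$ bends, hence at most $2(k+1)$ relevant edges, so $R$ has at most $2n(k+1)$ distinct relevant edges in total. Injectivity of $C\mapsto e_C$ then gives $\mu\le 2n(k+1)=2n\,(b_H(G)+1)$, which rearranges to $b_H(G)\ge \tfrac{\mu}{2n}-1$, completing the bound together with Corollary~\ref{cor:maxCliques}.

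The proof is short, and I expect no serious obstacle; the only points requiring care are (i) that Corollary~\ref{cor:comumAtodos} is legitimately applicable because $R$ is a Helly representation, and (ii) that an edge-clique containing a maximal clique cannot properly contain it, which is exactly where maximality is used. The conceptual content is precisely this observation that maximality forces the clique-to-relevant-edge map to be injective; everything else is the routine relevant-edge count.
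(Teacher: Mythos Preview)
Your proposal is correct and follows essentially the same argument as the paper: invoke Corollary~\ref{cor:maxCliques} for the upper bound, and for the lower bound use Corollary~\ref{cor:comumAtodos} to assign each maximal clique a relevant edge, observe that maximality forces this assignment to be injective (since the edge-clique of $e_C$ must coincide with $C$), and conclude by the count $2n(k+1)$ on relevant edges. The only cosmetic difference is that you phrase the injectivity step explicitly as a map $C\mapsto e_C$, whereas the paper simply notes that two distinct maximal cliques cannot share the same edge-clique.
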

\begin{proof}
The upper bound follows from Corollary~\ref{cor:maxCliques}.
For the lower bound first notice that each path with at most $k$ bends can have up to $2(k + 1)$ relevant edges, and any $B_k$-EPG representation with a set of paths $\mathcal{P}$ contains at most $2|\mathcal{P}|(k + 1)$ distinct relevant edges. Now, let $G$ be a graph with $n$ vertices, $\mu$ maximal cliques, and $b_H(G)=k$. From Corollary~\ref{cor:comumAtodos}, it follows that in a Helly-$B_k$-EPG representation of $G$ every maximal clique of $G$ contains at least one relevant edge. By maximality, two distinct maximal cliques cannot share the same edge-clique. Thus, in a Helly-$B_k$-EPG representation of $G$ every maximal clique of $G$ contains at least one distinct relevant edge, which implies that $\mu\leq 2n(k+1)$, so $\frac{\mu}{2n}-1\leq b_H(G)$.
\end{proof}

\begin{lemma}\label{lem:gridPolinomial}
Let $G$ be a (Helly-)$B_k$-EPG graph. Then $G$ admits a (Helly-)$B_k$-EPG representation on a grid of size at most $4n(k+1) \times 4n(k+1)$.
\end{lemma}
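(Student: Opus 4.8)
The plan is to take an arbitrary (Helly-)$B_k$-EPG representation $R$ of $G$ with path collection $\mathcal{P} = \{P_1, \dots, P_n\}$ and to ``compress'' the underlying grid, deleting every row and column that is not needed to pin the paths down. Recall that $R$ has at most $2n(k+1)$ distinct relevant edges; let $T$ be that set, let $X$ be the set of first coordinates of the endpoints of the edges of $T$, and let $Y$ be the set of their second coordinates. Each relevant edge contributes at most two values to $X$ and at most two to $Y$, so $|X|,|Y| \le 2|T| \le 4n(k+1)$. Define $\phi_x \colon \mathbb{Z}\to\mathbb{Z}$ by $\phi_x(a) = |\{a'\in X : a'\le a\}|$, define $\phi_y$ analogously from $Y$, and set $\Phi(a,b) = (\phi_x(a),\phi_y(b))$; note $\phi_x,\phi_y$ are nondecreasing and are strictly increasing on $X$, $Y$ respectively. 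Let $R'$ be obtained by replacing each grid point $p$ used by a path of $R$ with $\Phi(p)$. Since $\Phi$ carries all such points into $\{1,\dots,|X|\}\times\{1,\dots,|Y|\}$, the representation $R'$ lives on a grid of size at most $4n(k+1)\times 4n(k+1)$, so it only remains to check that $R'$ is still a valid (Helly-)$B_k$-EPG representation of $G$.

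First I would argue $R'$ is a $B_k$-EPG representation of $G$. Every endpoint and every bend point of a path $P_i$ is an endpoint of a relevant edge of $P_i$, hence lies in $X\times Y$; thinking of $P_i$ as the polyline through its (at most $k+2$) corner points, $\Phi(P_i)$ is the polyline through their $\Phi$-images. Since $\phi_x,\phi_y$ are strictly monotone on $X$, $Y$, two consecutive corners that differ in exactly one coordinate still do after $\Phi$, so the segments of $\Phi(P_i)$ are nonempty and consecutive ones stay perpendicular; hence $\Phi(P_i)$ has at most $k$ bends. The key elementary observation is that, because $\phi_x$ is nondecreasing and strictly increasing on $X$, it preserves — among integer intervals whose endpoints lie in $X$ — both the relation ``share at least a point'' and the relation ``share at least an edge'', and likewise for $\phi_y$ and $Y$. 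Since every segment of every path of $R$ has its endpoints among the corner points, hence in $X\times Y$, a short case analysis (two horizontal segments, two vertical, or one of each) shows that for any two segments $S,S'$ of paths of $R$ one has: $S\cap S'$ contains a point (respectively an edge) if and only if $\Phi(S)\cap\Phi(S')$ does. Taking $S,S'$ inside one path $P_i$ gives that $\Phi(P_i)$ is a simple path (consecutive segments still meet only at their shared corner, non-consecutive ones stay disjoint); taking $S\subseteq P_i$ and $S'\subseteq P_j$ gives $P_i\cap P_j\ne\emptyset \iff \Phi(P_i)\cap\Phi(P_j)\ne\emptyset$; and since $\Phi$ is injective on $X\times Y$, distinct paths of $R$ yield distinct corner sequences and hence distinct paths of $R'$. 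Therefore $R'$ represents $G$. (Alternatively, Lemma~\ref{lem:relevantEdges} tells us that intersections in $R$ already occur on relevant edges, and the endpoints of a relevant edge lie in $X\times Y$, so $\Phi$ carries such an edge to a genuine grid edge — a quick route to the forward implication.)

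If $R$ is in addition Helly, I would show $R'$ is Helly too. Let $\mathcal{P}'$ be a pairwise-intersecting subcollection of the paths of $R'$; by the equivalence above, the corresponding paths of $R$ are pairwise intersecting, so by Corollary~\ref{cor:comumAtodos} their intersection contains a relevant edge $e\in T$. Both endpoints of $e$ lie in $X\times Y$, so $\Phi(e)$ is an actual edge of the grid of $R'$, and it is contained in every path of $\mathcal{P}'$; hence $\mathcal{P}'$ has a common edge and $R'$ satisfies the Helly property. This finishes the argument, with grid size $|X|\times|Y|\le 4n(k+1)\times 4n(k+1)$.

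The step I expect to be the main obstacle is the verification in the second paragraph: applying the coordinatewise monotone map $\Phi$ to all paths at once must be shown not to destroy the simplicity of any individual path and not to create or remove any edge-intersection between two paths. All of this ultimately reduces to the one-dimensional fact that an order-preserving map that is injective on $X$ preserves inclusion and overlap of integer intervals with endpoints in $X$; the care needed is purely in enumerating the cases (horizontal vs.\ vertical segments, and within a path consecutive vs.\ non-consecutive segments) so that none is overlooked.
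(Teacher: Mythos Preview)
Your argument is correct and follows essentially the same approach as the paper: both compress the grid so that only the rows and columns witnessed by endpoints of relevant edges survive, and both appeal to Corollary~\ref{cor:comumAtodos} to carry the Helly property through the compression. The only difference is in presentation---the paper argues by taking a smallest-grid representation and contracting pairs of consecutive empty columns/rows one at a time, whereas you realize the same compression in one shot via the explicit rank map~$\Phi$; your version spells out the intersection-preservation and simplicity checks in more detail than the paper does.
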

\begin{proof}
Let $R$ be a $B_k$-EPG representation of a graph $G$ on a grid $Q$ with the smallest possible size.
Let $\mathcal{P}$ be the set of paths of $R$. Note that $|\mathcal{P}|=n$.
A counting argument shows that there are at most $2|\mathcal{P}|(k+1)$ relevant edges in $R$. 
 If $Q$ has a pair of consecutive columns $c_i,c_{i+1}$ neither of which contains relevant edges of $R$, and such that there is no relevant edge crossing from $c_i$ to $c_{i+1}$, then we can contract each edge crossing from $c_i$ to $c_{i+1}$ into single vertices so as to obtain a new  $B_k$-EPG representation of $G$ on a smaller grid, which is a contradiction. An analogous argument can be applied to pairs of consecutive rows of the grid.
 Therefore the grid $Q$ is such that each pair of consecutive columns and consecutive rows of $Q$  has at least one relevant edge of $R$ or contains a relevant edge crossing it.  
  Since $Q$ is the smallest possible grid for representing $G$, then the first row and the first column of $Q$ must contain at least one point belonging to some relevant edge of $R$. 
Thus, if $G$ is $B_k$-EPG then it admits a $B_k$-EPG representation on a grid of size at most $4|\mathcal{P}|(k+1) \times 4|\mathcal{P}|(k+1)$.
Besides, by Corollary~\ref{cor:comumAtodos}, it holds that the contraction operation previously described preserves the Helly property, if any. Hence, letting $R$ be a Helly-$B_k$-EPG representation of a graph $G$ on a grid $Q$ with the smallest possible size it holds that $Q$ has size at most $4|\mathcal{P}|(k+1) \times 4|\mathcal{P}|(k+1)$.\end{proof}

Given a graph $G$ with $n$ vertices and an EPG representation $R$, it is easy to check in polynomial time with respect to $n +|R|$ whether $R$ is a $B_k$-EPG representation of $G$. By Lemma~\ref{lem:gridPolinomial}, if $G$ is a $B_k$-EPG graph then there is a positive certificate (an EPG representation) $R$ of polynomial size  with respect to $k+n$ to the question ``$G\in B_k$-EPG?''. Therefore, Corollary~\ref{BkNP} holds.

\begin{corollary}\label{BkNP}
Given a graph $G$ and an integer $k\geq 0$, the problem of determining whether $G$ is a $B_k$-EPG graph is in NP, whenever $k$ is bounded by a polynomial function of $|V(G)|$.
\end{corollary}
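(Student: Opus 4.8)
The plan is to exhibit, for a yes-instance $(G,k)$, a polynomial-size certificate that can be verified in polynomial time, the certificate being a $B_k$-EPG representation of $G$ itself encoded via relevant edges, exactly as in the discussion preceding Lemma~\ref{lem:verify1}.

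First I would pin down the encoding. Each path $P_i\in\mathcal{P}$ is given by the ordered sequence of its relevant edges (its extremity edges and its bend edges), every edge being recorded by the integer coordinates of its two grid endpoints. A $k$-bend path has at most $2(k+1)$ relevant edges, and by Lemma~\ref{lem:gridPolinomial} we may assume that the representation lives on a grid of size at most $4n(k+1)\times 4n(k+1)$, so every coordinate is an integer bounded by $4n(k+1)$. Hence the whole certificate has size polynomial in $n+k$, and since the hypothesis gives $k\le |V(G)|^{c}$ for a fixed $c$, it is polynomial in $n=|V(G)|$.

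Next I would describe the verification. Given such a string, one checks: that each listed sequence of relevant edges is consistent with a single path having at most $k$ bends, which is precisely the reconstruction procedure of Lemma~\ref{lem:verify1} and runs in polynomial time; that for every pair $u,v\in V(G)$ one can decide whether $P_u$ and $P_v$ share an edge, which by Lemma~\ref{lem:relevantEdges} reduces to comparing relevant edges and is therefore polynomial; and that the adjacency relation thus obtained coincides with $E(G)$. All of this is polynomial in $|V(G)|$ plus the certificate size, hence polynomial in $n$. Soundness and completeness then follow immediately: if $G$ is $B_k$-EPG, Lemma~\ref{lem:gridPolinomial} supplies a representation of the bounded size required, giving a valid certificate; conversely any string passing the above checks encodes a genuine $B_k$-EPG representation of $G$.

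The only point that really needs care is bounding the certificate size, and that is exactly what Lemma~\ref{lem:gridPolinomial} delivers once combined with the assumption that $k$ is polynomially bounded in $|V(G)|$; without the polynomial bound on $k$ the argument would break, since $k$ enters both the number of relevant edges per path and (through the grid bound) the magnitude of the coordinates. The verification steps themselves are routine given Lemmas~\ref{lem:verify1} and~\ref{lem:relevantEdges}, so I expect no further obstacle.
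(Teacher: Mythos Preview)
Your proposal is correct and follows essentially the same approach as the paper: use Lemma~\ref{lem:gridPolinomial} to bound the certificate size (together with the hypothesis that $k$ is polynomially bounded in $|V(G)|$), and then verify in polynomial time that the encoded representation is a $B_k$-EPG representation of $G$. The paper's argument is terser---it simply asserts that ``it is easy to check in polynomial time with respect to $n+|R|$'' whether $R$ is a valid representation---whereas you spell out the encoding via relevant edges and explicitly invoke Lemmas~\ref{lem:verify1} and~\ref{lem:relevantEdges} for the verification, but the underlying reasoning is the same.
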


At this point, we are ready to demonstrate the NP-membership of {\sc Helly-$B_k$ EPG recognition}.

\medskip

\begin{theorem}\label{teo:nppertinencia}
{\sc Helly-$B_k$ EPG recognition} is in NP.
\end{theorem}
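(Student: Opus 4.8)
The plan is to assemble a standard guess-and-check argument from the four lemmas just proved, the only real content being to make the certificate size explicit. First I would fix the certificate. Given a yes-instance $(G,k)$ with $n=|V(G)|$, Lemma~\ref{lem:gridPolinomial} guarantees a Helly-$B_k$-EPG representation of $G$ on a grid of size at most $4n(k+1)\times 4n(k+1)$; since the input constrains $k\leq n^{c}$ for a fixed $c$, this grid has polynomial size. I would therefore take as certificate, for each vertex $v$, the ordered sequence of relevant edges of its path $P_v$ (at most $2(k+1)$ of them, each encoded as a pair of grid points with polynomially bounded coordinates), exactly the encoding discussed before Lemma~\ref{lem:verify1}. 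The total size is polynomial in $n$, so a nondeterministic machine can guess it.

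Next I would run the verification in three stages, matching conditions (i)--(iii). For (i), Lemma~\ref{lem:verify1} shows each guessed sequence either reconstructs a unique path with at most $k$ bends in polynomial time, or is inconsistent (in which case we reject). For (ii), Lemma~\ref{lem:relevantEdges} says two of these paths intersect if and only if they share a relevant edge; hence for every pair $u,v\in V(G)$ we can decide $P_u\cap P_v\neq\emptyset$ in polynomial time and compare the resulting relation with the edge set of $G$, rejecting on any mismatch. At this point the certificate is confirmed to be a genuine $B_k$-EPG representation of $G$.

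For (iii) I would invoke Lemma~\ref{lem:verify3}: the set $T$ of relevant edges has at most $2n(k+1)$ elements, so we may enumerate all triples of $T$, and for each triple collect the paths containing at least two of its three edges (testable via relevant edges by Lemma~\ref{lem:relevantEdges}) and check whether those paths have a common relevant edge. By Gilmore's Theorem~\cite{bergeDuchet1975}, $\mathcal{P}$ satisfies the Helly property precisely when every such triple yields a set with nonempty common intersection. All three stages are polynomial in $n$, so the whole check is, and the theorem follows.

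I do not expect a genuine obstacle here, since the technical weight has been front-loaded into Lemmas~\ref{lem:verify1}, \ref{lem:relevantEdges}, \ref{lem:verify3} and \ref{lem:gridPolinomial}; the plan is essentially bookkeeping. The one point that warrants a sentence of care is the role of the hypothesis $k\leq |V(G)|^c$: it is exactly what keeps the number of relevant edges, and hence both the guessed grid and the triple enumeration, polynomially bounded. Without that bound the certificate need not have polynomial size, so the argument — and the membership claim as stated — genuinely depends on it.
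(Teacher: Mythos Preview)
Your proposal is correct and follows essentially the same route as the paper: guess a representation on a polynomially bounded grid via Lemma~\ref{lem:gridPolinomial} (using the hypothesis $k\le n^c$), encode it through relevant edges, and verify conditions (i)--(iii) with Lemmas~\ref{lem:verify1}, \ref{lem:relevantEdges}, and \ref{lem:verify3}. The only difference is that you spell out the three verification stages in more detail than the paper, which dispatches the theorem in two sentences by citing the four lemmas directly.
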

\begin{proof}
By Lemma~\ref{lem:gridPolinomial} and the fact that $k$ is bounded by a polynomial function of $|V(G)|$, it follows that the collection $\mathcal{P}$ can be encoded through its relevant edges with $n^{\mathcal{O}(1)}$ bits.

Finally, by Lemmas~\ref{lem:verify1}, \ref{lem:relevantEdges} and \ref{lem:verify3}, it follows that one can verify in polynomial-time in the size of $G$ whether $\mathcal{P}$ is a family of paths encoded as a sequence of relevant edges that constitute a Helly-$B_k$-EPG representation of a graph $G$.
\end{proof}

\section{NP-hardness}\label{sec:sectionDispositivoClausula}

Now we will prove that  {\sc Helly-$B_1$ EPG recognition} is NP-complete. For this proof, we follow the basic strategy described in the prior hardness proof of~\cite{heldt2014}. We set up a reduction from {\sc Positive (1 in 3)-3SAT} defined  as follows:


\begin{table}[h!]
\centering
\begin{tabular}{ll}
\hline \hline
\multicolumn{2}{c}{\sc Positive (1 in 3)-3SAT}                                \\ \hline \hline 
\emph{Input}: & \begin{tabular}[c]{@{}l@{}} A set $X$ of positive variables; a collection $C$ of clauses on $X$ such that \\ for each $c\in C$, $|c|= 3$.
\end{tabular} \\
\emph{Goal:}  & \begin{tabular}[c]{@{}l@{}} 
Determine if there is an assignment of values to the variables \\in $ X $ so that every clause in $ C $ has exactly one true literal.
\end{tabular} \\ \hline
\end{tabular}
\end{table}

{\sc Positive (1 in 3)-3SAT } is a well-known NP-complete problem (see \cite{johnson1979}, problem [L04], page 259). Also, it remains NP-complete when the incidence graph of the input CNF (Conjunctive Normal Form) formula is planar, see~\cite{mulzer2008minimum}.

Given a formula $F$ that is an instance of {\sc Positive (1 in 3)-3SAT} we will present a polynomial-time construction of a graph $ G_F$ such that $ G_F \in$ Helly-$B_1$ EPG if and only if $ F $ is satisfiable. This graph will contain an induced subgraph $ G_{C_i}$ with 12 vertices (called \emph {clause gadget}) for every clause $C_i \in \mathcal{C}$, and an induced subgraph (\emph {variable gadget}) for each variable $ x_j$, containing a special vertex  $ v_j$, plus a \emph{base gadget}  with 55 additional vertices.

We will use a graph $H$ isomorphic to the graph presented in Figure~\ref{fig:gadgetBase}, as a gadget to perform the proof. For each clause $C_i$ of $F$ of the target problem, we will have a \emph{clause gadget} isomorphic to $H$, denoted by $G_{c_i}$. 

\begin{figure}[htb]	
\center
\includegraphics[width=4.5cm]{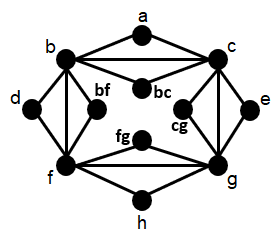}
\caption{The partial gadget graph $H$}
\label{fig:gadgetBase}
\end{figure}


The reduction of a formula $F$ from  {\sc Positive (1 in 3)-3SAT}  to a particular graph $G_F$ (where $G_F$ has a Helly-$B_1$-EPG representation if only if $F$ is satisfiable) is given below.

\begin{definition}\label{sec:reducao}
Let $F$ be a CNF-formula with variable set $\mathcal{X}$ and clause set $\mathcal{C}$ with no negative literals, in which every clause has exactly three literals. The graph $G_F$ is constructed as follows:

\begin{enumerate}
\item For each clause $C_i \in \mathcal{C}$ create a  \textit{clause gadget} $G_{C_i}$, isomorphic to  graph $H$;

\item For each variable $x_{j}\in \mathcal{X}$ create a \emph{variable vertex} $v_{j}$ that is adjacent to the vertex $a$, $e,$ or $h$ of $G_{C_i}$, when $x_{j}$ is the first, second or third variable in $C_i$, respectively;

\item For each variable vertex $v_{j}$, construct a \emph{variable gadget} formed by adding two copies of $H$, $H_1$ and $H_2$, and making $v_j$ adjacent to the vertices of the triangles $(a, b, c)$ in  $H_1$ and $H_2$.



\item Create a vertex $V$, that will be used as a vertical reference of the construction, and add an edge from $V$ to each vertex $d$ of a clause gadget;

\item Create a bipartite graph $K_{2,4}$ with a particular vertex $T$ in the largest stable set. This vertex is nominated \emph{true vertex}. Vertex $T$ is adjacent to all $v_{j}$ and also to $V$;

\item Create two  graphs isomorphic to $H$, $G_{B1}$ and $G_{B2}$. The vertex $T$ is connected to each vertex of the triangle (a,b,c) in $G_{B1}$ and $G_{B2}$;

\item Create two graphs isomorphic  to $H$, $G_{B3}$ and $G_{B4}$. The vertex $V$ is connected to each vertex of the triangle (a,b,c) in $G_{B3}$ and $G_{B4}$;

\item The  subgraph induced by the set of vertices $\{V(K_{2,4}) \cup  \{T, V\} \cup V(G_{B1}) \cup V(G_{B2}) \cup V(G_{B3}) \cup V(G_{B4})\}$ will be referred to as the  \emph{base gadget}. 
\end{enumerate}
\end{definition}

Figure~\ref{fig:exemploGrafoGF} illustrates how this construction works on a small formula. 

\begin{figure}[htb]	
\center
\includegraphics[width=6.5cm]{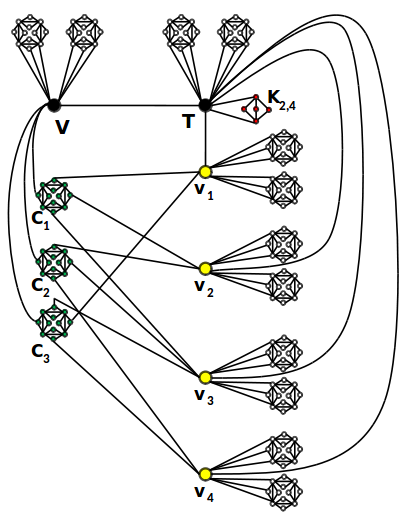}
\caption{The $G_{F}$ graph corresponding to formula $F=(x_1+ x_2+ x_3) \cdot  (x_2+ x_3+ x_4 )\cdot  (x_3 + x_1 + x_4 )$}
\label{fig:exemploGrafoGF}
\end{figure}


\begin{lemma}\label{lem:ida}
Given a satisfiable instance $F$ of {\sc Positive (1 in 3)-3SAT}, the graph $G_F$ constructed from $F$ according to Definition~\ref{sec:reducao} admits a Helly-$B_1$-EPG representation.
\end{lemma}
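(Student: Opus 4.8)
The plan is to build the Helly-$B_1$-EPG representation of $G_F$ from the inside out, starting from the rigid pieces whose shape is forced and then attaching the flexible parts. First I would recall the skeleton we already have: by Corollary~\ref{batgraph} the $K_{2,4}$ sitting on $\{T,V\}$ together with the four copies $G_{B1},\dots,G_{B4}$ of $H$ behaves like (a subgraph of) the bat graph, so in any $B_1$-EPG representation the $C_4$ that is transversal to the $K_{2,4}$'s is a square-frame; I would simply \emph{choose} such a square-frame and place the paths $P_T$ and $P_V$ as two long perpendicular ``arms'' of that frame — say $P_V$ running along a fixed column (the \emph{vertical reference}) and $P_T$ along a fixed row (the \emph{horizontal/true reference}). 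The vertices $d$ of every clause gadget must meet $P_V$, and every variable vertex $v_j$ must meet both $P_T$ (it is adjacent to $T$) and, indirectly, the clause gadgets; these incidences tell us where along the two reference lines each gadget should be anchored.

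Next I would give a fixed ``canonical'' $B_1$-EPG drawing of the gadget $H$ of Figure~\ref{fig:gadgetBase}, drawn so that (i) it is Helly — i.e.\ by Lemma~\ref{caracterization} every clique of $H$ is realized as an edge-clique, not a claw-clique — and (ii) it has a free grid edge on the triangle $(a,b,c)$ and free edges near $a$, $e$, $h$ that can be extended to reach an external path. Since $H$ is small and (one expects) bipartite-or-nearly-so, producing one concrete Helly drawing is a finite check; I would then place one rotated/translated copy of this drawing for each clause gadget $G_{C_i}$, for each of the two variable-gadget copies $H_1,H_2$ of every $v_j$, and for $G_{B1},\dots,G_{B4}$, laying them out in disjoint regions of a large grid so that distinct gadgets share no grid edge. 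The paths $P_T$ and $P_V$ are routed (still with at most one bend each — this is where the square-frame shape of $\{T,V,\dots\}$ is used, so $P_T,P_V$ really are single-bend) so that $P_V$ grazes a chosen edge of the triangle in $G_{B3},G_{B4}$ and a chosen edge incident to $d$ in each $G_{C_i}$, while $P_T$ grazes an edge of the triangle in $G_{B1},G_{B2}$ and an edge near each $v_j$.

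The only genuinely non-routine part is step~2 of Definition~\ref{sec:reducao}: the variable vertex $v_j$ must touch $G_{C_i}$ at $a$, $e$, or $h$ \emph{exactly when $x_j$ is true in a fixed satisfying 1-in-3 assignment determines which clauses want $v_j$ ``on''} — here I use the hypothesis that $F$ is satisfiable. Concretely I would fix a 1-in-3 satisfying assignment; for a true variable $x_j$, route $P_{v_j}$ (one bend) from $P_T$ out to each of the (exactly) clause gadgets whose corresponding literal slot it fills, taking a free edge at the appropriate vertex ($a$, $e$, or $h$); for a false variable, $P_{v_j}$ still must meet $P_T$ and the two copies $H_1,H_2$ but must \emph{not} share an edge with any clause gadget, so it is routed to stay clear of them. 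The ``exactly one true literal per clause'' condition is exactly what guarantees that inside each clause gadget $G_{C_i}$ only one of the three attachment sites $a,e,h$ is actually used, which is what the canonical Helly drawing of $H$ was designed to accommodate without creating a claw-clique; this geometric compatibility of ``one incoming external path at one of three prescribed spots'' with Helly-ness is the crux and the place I expect to spend the most care.

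Finally I would verify the two global requirements. Adjacency: by construction two paths from different gadgets share an edge only through the deliberately placed $P_T$/$P_V$/$P_{v_j}$ contacts, matching exactly the edges listed in Definition~\ref{sec:reducao}; paths within a gadget match $H$ by the canonical drawing. Helly: by Lemma~\ref{caracterization} it suffices to check that no clique is a claw-clique. Inside each copy of $H$ this holds by choice of the canonical drawing; the new cliques created by the reduction are the triangles $(a,b,c)$ augmented by $P_{v_j}$ (or $P_T$, or $P_V$), the $K_{2,4}$ (which has no triangle), and the single-edge adjacencies along $P_T,P_V$ — each of these I would arrange to be an edge-clique by letting the external path overlap the \emph{same} free grid edge that already carries the triangle $(a,b,c)$, so all four paths share that edge. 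Since at most one external clause-path enters each clause gadget (by 1-in-3), no three-way junction is forced anywhere, and the whole representation is Helly. Because there are $O(|\mathcal{C}|+|\mathcal{X}|)$ gadgets each of bounded size, the grid has polynomial size, completing the proof of the lemma; the converse direction (Helly-$B_1$-EPG representation $\Rightarrow$ $F$ satisfiable) is the subject of the subsequent lemma and is not needed here.
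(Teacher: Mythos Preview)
Your proposal rests on a misreading of Definition~\ref{sec:reducao}. In step~2, the variable vertex $v_j$ is made adjacent to $a$, $e$, or $h$ of $G_{C_i}$ for \emph{every} clause $C_i$ containing $x_j$, independently of any truth assignment. Hence in any EPG representation of $G_F$ the path $P_{v_j}$ must share an edge with the appropriate path in \emph{each} clause gadget $x_j$ occurs in---not only in those clauses where $x_j$ happens to be the true literal. Your plan of routing $P_{v_j}$ ``to stay clear'' of clause gadgets when $x_j$ is false therefore produces the wrong adjacency relation, and the claim that ``only one of the three attachment sites $a,e,h$ is actually used'' in each clause gadget is false: all three are always used.

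What the 1-in-3 assignment really controls is the \emph{direction} of the intersections. In the paper's argument, Lemma~\ref{lem:2vertical2horizontal} says that in any $B_1$-EPG drawing of $H$ two of $P_a,P_e,P_d,P_h$ are horizontal and two are vertical; since $P_d$ is consumed by the adjacency with $P_V$, exactly one of $P_a,P_e,P_h$ remains in the ``vertical'' direction and two in the ``horizontal'' direction. The assignment is encoded by whether $P_{v_j}$ meets $P_T$ horizontally (true) or vertically (false), and the clause gadget is then \emph{positioned} between the two false-variable paths so that its three attachment directions match the three incoming $P_{v_j}$'s. The satisfiability hypothesis is used to guarantee that such a placement exists for every clause simultaneously, not to prune adjacencies. (As a minor aside, Corollary~\ref{batgraph} plays no role in this forward direction: you only need to exhibit one representation, and $T$, $V$ are not part of a common square-frame in the construction---there is a single $K_{2,4}$ at $T$, not four.)
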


\begin{proof}
We will use the true pie and false pie structures to represent the \textit{clause gadgets} $ G_C$ (see Figure~\ref{fig:falseAndTruePie}), but the construction could also be done with the frame structure without loss of generality.

\begin{figure}[htb]
  \centering
  \begin{tabular}{c c c }
    \includegraphics[width=4.5cm]{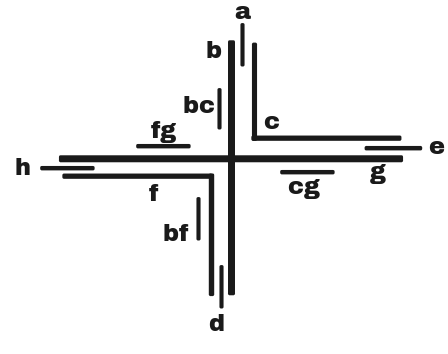}  
    & &\includegraphics[width=4.5cm]{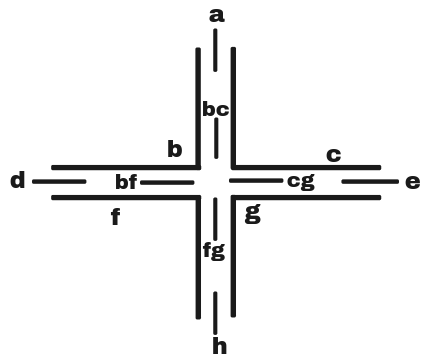} 
    \\
    {\footnotesize (a) Based in false pie}  & &  {\footnotesize(b) Based in true pie}\\
  \end{tabular}
  \caption{Single bend representations of a clause gadget isomorphic to graph $H$
  }\label{fig:falseAndTruePie}
\end{figure} 

The \textit{variable gadgets} will be represented by structures as of Figure~\ref{fig:gadgetVariavel}.

\begin{figure}[htb]	
\center
\includegraphics[width=8cm]{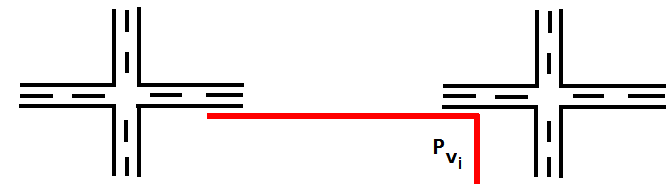}
\caption{Single bend representation of a variable gadget}
\label{fig:gadgetVariavel}
\end{figure}

The \textit{base gadget} will be represented by the structure of Figure~\ref{fig:gadgetBaseSingleBend}.

\begin{figure}[htb]	
\center
\includegraphics[width=6cm]{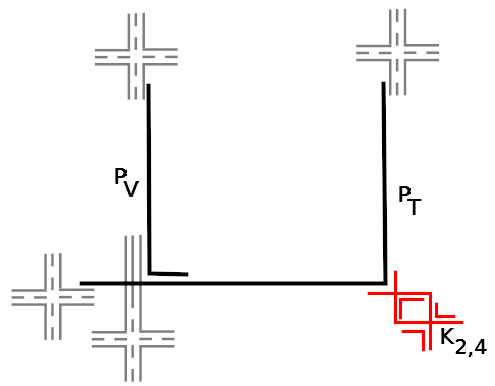}
\caption{Single bend representation of the base gadget}
\label{fig:gadgetBaseSingleBend}
\end{figure}

It is easy to see that the representations of the clause gadgets, variable gadgets, and base gadgets are all Helly-$B_1$ EPG. Now, we need to describe how these representations can be combined to construct a single bend representation $R_{G_F}$.

Given an assignment $A$ that satisfies $F$, we can construct a  Helly-$B_1$-EPG representation $R_{G_F}$. First we will fix the representation structure of the base gadget in the grid to guide the single bend representation, see Figure~\ref{fig:gadgetBaseSingleBend}. Next we will insert the variable gadgets with the following rule: if the  variable $x_i$ related to the path $P_{v_i}$ had assignment \textit{True}, then the adjacency between the path $P_{v_i}$ with $P_T$ is horizontal, and vertical otherwise. For example, for an assignment $A=\{x_1=False; x_2=False;x_3=True; x_4=False\}$  to variables of the formula $F$ that generated the gadget $G_F$ of Figure~\ref{fig:exemploGrafoGF}, it will give us a single bend representation (base gadget + variables gadget) according to the Figure~\ref{fig:gadgetBasePlusVariables}(a). 

When a formula $F$ of {\sc Positive (1-in-3)-3sat} has clauses whose format of the assignment is $(False,$ $True,$ $False)$ or $(False, False, True)$ then we will use false pie to represent these clauses. When the clause has format $(True, False, False)$, we will use true pie to represent this clause (the use of true pie in the last case is only to illustrate that the shape of the pie does not matter in the construction). To insert a \textit{ clause gadget} $G_{C}$, we introduce a horizontal line $l_{h}$ in the grid between the horizontal rows used by the paths for the two false variables in $ C $. Then we connect the path $P_{d_{c_i}}$ of $G_{C_i}$ to $P_V$ vertically using the bend of $P_{d_{c_i}}$. We introduce a vertical line $ l_{v}$ in the grid, between the vertical line of the grid used by $P_V$ and the path to the true variable in $C_i$, \textit{i.e.} between $P_V$ and the path of the true variable $x_j \in C_i$. At the point where $l_{h}$ and $l_{v}$ cross, to insert the center of the  \textit{clause gadget} as can be seen in Figure~\ref{fig:gadgetOnePie}(b). The complete construction of this single bend representation for the $G_F$ can be seen in 
Figure~\ref{fig:gadgetFormulaCompletaPies}.

Note that when we join all these representations of gadgets that form $ R_{G_F}$, we do not increase the number of bends. Then the representation necessarily is $B_1$-EPG. Let us show that it satisfies the Helly property. 

A simple way to check that $ R_{G_F} $ satisfies the Helly property is to note that the particular graph $G_F$ never forms triangles between variable, clause, and base gadgets. Thus, any triangle of $G_F$ is inside a variable, clause, or base gadget. As we only use Helly-$B_1$-EPG representations of such gadgets, $ R_{G_F} $ is a Helly-$B_1$-EPG representation of $G_F$.
 \end{proof}

Now, we consider the converse. Let $R$ be a Helly-$B_1$-EPG representation of $G_F$.



\begin{definition}
Let $H$ be the graph shown in Figure~\ref{fig:gadgetBase}, such that the 4-cycle $H[\{b, c, f, g \}]$ corresponds in $R$ to a false pie or true pie, then:

\begin{itemize}
\item the \emph{center} is the unique grid-point of this representation which is contained in every path representing 4-cycle $ \{b, c, f, g \}$; \label{lab:lab1}

\item a \emph{central ray} is an edge-intersection between two of the paths corresponding to vertices  $ b, c, f, g$, respectively.
\end{itemize}
\end{definition}

Note that every $B_1$-EPG representation of a $C_4$ satisfies the Helly property, see Lemma~\ref{lem:representacaoC4}, and triangles have $B_1$-EPG representations that satisfy the Helly property, \textit{e.g.} the one shown in Figure~\ref{fig:trianguloepgRepresentacao}(b). The graph $H$ is composed by a 4-cycle  $C_4^{H}=H[b, c, f, g]$ and eight cycles of size 3.

As $C_4^{H}$ has well known representations (see in Lemma~\ref{lem:representacaoC4}), then we can start drawing the Helly-$B_1$-EPG representation of $H$ from these structures.  Figure~\ref{fig:falsepietruepieframe} shows possible representations for $H$.


If $C_4^{H}$ is represented by a pie, then the paths $P_{b}, P_{c}, P_{f}, P_{g}$ share the center of the representation. On the other hand, if $C_4^{H}$ is represented by a frame, then the bends of the four paths correspond to the four distinct corners of a rectangle, \textit{i.e.} all paths representing the vertices of $C_4^{H}$ have distinct bend points, see~\cite{golumbic2009}.

\begin{figure}[H]
  \centering
  \begin{tabular}{p{7cm} p{7cm} }
   \centering \includegraphics[width=7cm, left]{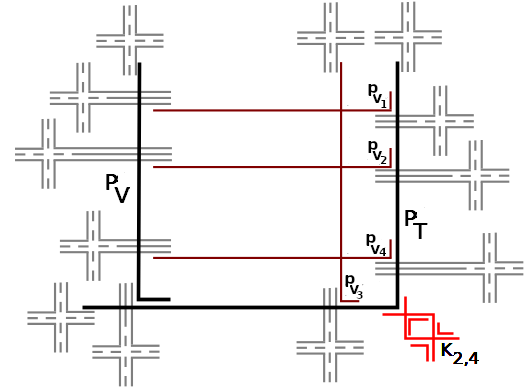} &   
  \includegraphics[width=7cm, left]{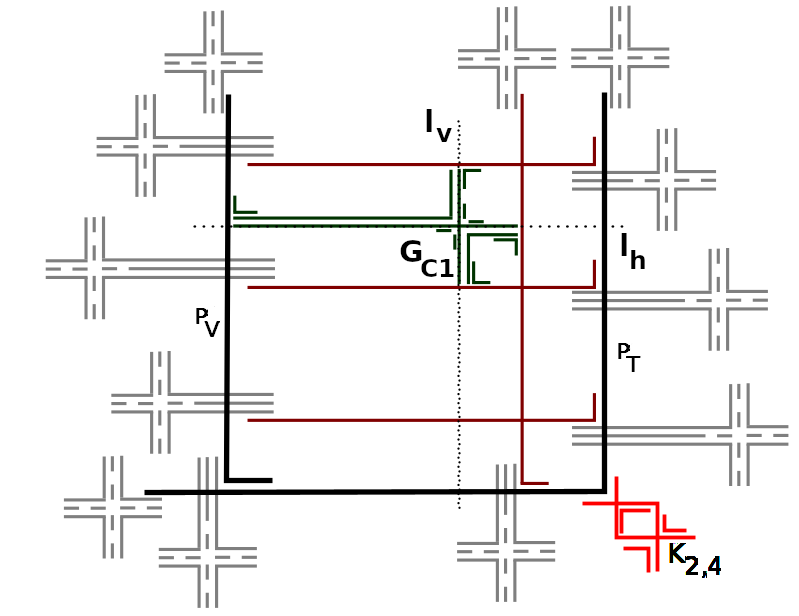}
  \\
  [\abovecaptionskip]
    \footnotesize \centering (a) Representation with omitted clause gadgets &
  \footnotesize(b) Representation with  $G_{C_1}$  associated with the clause $(x_1+x_2+x_3)$ in highlighted \\
  \end{tabular}

 \caption{Single bend representation of the base and variables gadgets associated with the assignment $x_1=False, x_2=False, x_3=True, x_4=False$} \label{fig:gadgetOnePie} \label{fig:gadgetBasePlusVariables}
\end{figure}

\begin{figure}[htb]	
\center
\includegraphics[width=10cm]{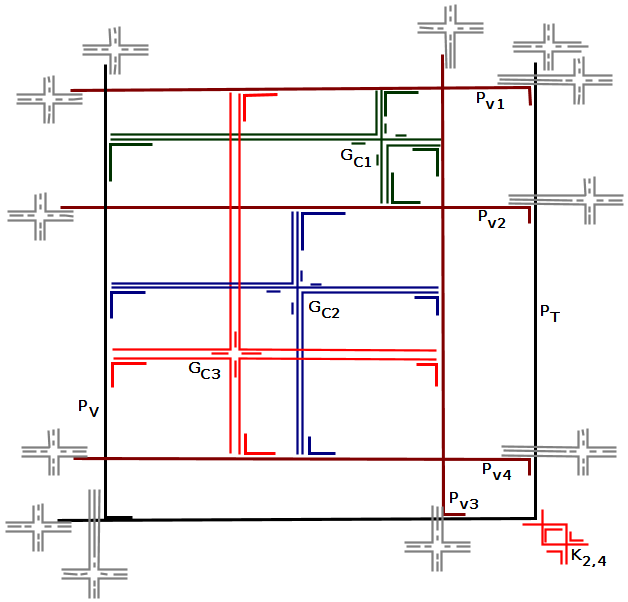}
\caption{Single bend representation of $G_F$}
\label{fig:gadgetFormulaCompletaPies}
\end{figure}

Next, we examine the use of the frame structure.


\begin{prop}\label{lem:direcoesdiferentes} 
In a frame-shaped $B_1$-EPG representation of a $C_4$, every path $P_i$ that represents a vertex of the $C_4$ intersects exactly two other paths $P_{i-1}$ and $P_{i+1}$ of the frame so that one of the intersections is horizontal and the other is vertical. 
\end{prop}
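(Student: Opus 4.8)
The plan is to unfold the definition of a frame (Definition~\ref{defi:tortasFrame2}) and then run a short case analysis on how two single-bend paths whose bends lie at corners of one rectangle can share a grid edge.

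First I would record what a frame gives us: four paths $P_1,\dots,P_4$, each with exactly one bend, the bends sitting at the four \emph{distinct} corners of a (hence nondegenerate) rectangle $R$, say at $(x_1,y_1),(x_2,y_1),(x_2,y_2),(x_1,y_2)$ with $x_1<x_2$ and $y_1<y_2$; and the edge-intersection pattern of the four paths is prescribed to be the $4$-cycle $P_1P_2P_4P_3$, with $P_1\cap P_4=P_2\cap P_3=\emptyset$. In particular the ``exactly two other paths'' part of the statement is immediate: each $P_i$ meets precisely its two neighbours in this $C_4$. What remains is to pin down the directions of these two intersections.

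Next I would prove the following claim, which is the heart of the matter: if $P$ and $P'$ are two of the frame paths, with bends at corners $u\neq u'$ of $R$, then $P\cap P'$ contains an edge only if $u$ and $u'$ are the two endpoints of a side $s$ of $R$, and in that case $P\cap P'$ is a nonempty run of consecutive edges lying on $s$ (hence all of the same direction, namely the direction of $s$). To see this, write each single-bend path as the union of a horizontal segment (its horizontal edges, lying in the row of its bend point) and a vertical segment (lying in the column of its bend point), and compare the four segment-pairs: a horizontal segment and a vertical segment meet in at most one grid point, so never in an edge; two horizontal segments share an edge only if they lie in the same row, and two vertical segments only if they lie in the same column. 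Since $u\neq u'$ are corners of a nondegenerate rectangle, their rows coincide only when $u,u'$ bound a horizontal side of $R$, and then their columns differ (these are the two vertical sides), so $P\cap P'$ carries no vertical edge; the vertical case is symmetric; and if $u,u'$ are diagonally opposite they agree in neither coordinate, so there is no common edge at all. (One checks along the way that the overlap of the two collinear segments is contained in the side $s$ joining $u$ and $u'$, since each of those segments has an endpoint at a corner of $R$.)

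Finally I would finish by applying the claim to a fixed path $P_i$, whose bend sits at a corner $v$ of $R$: by the claim its only edge-partners among the frame are the two paths bending at the two corners adjacent to $v$, and these two corners are reached from $v$ along the horizontal side through $v$ and along the vertical side through $v$, respectively; hence the intersection of $P_i$ with one of its two $C_4$-neighbours runs along that horizontal side (so is horizontal) and the intersection with the other runs along that vertical side (so is vertical), which is exactly the assertion. The only place that needs care is the claim itself — in particular ruling out that a neighbouring pair shares edges in both directions, and that a diagonal pair shares an edge — and that is precisely what the elementary observation ``two distinct corners of a nondegenerate rectangle agree in at most one coordinate'' secures; everything else is bookkeeping.
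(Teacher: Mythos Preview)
Your argument is correct and in fact supplies a proof where the paper gives none: in the paper the proposition is stated without proof, as an immediate consequence of the definition of a frame. Your decomposition of each single-bend path into its horizontal segment and its vertical segment, together with the observation that two such paths share a horizontal (resp.\ vertical) edge only when their bend points lie in the same row (resp.\ column), is exactly the right mechanism, and for distinct corners of a nondegenerate rectangle this forces adjacency along a side and determines the direction of the intersection.

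One small inaccuracy that you should remove: the parenthetical claim that the overlap of the two collinear horizontal segments is \emph{contained in the side $s$} is not true for a general frame. For instance, if $P$ bends at $(x_1,y_1)$ with its horizontal arm going left, and $P'$ bends at $(x_2,y_1)$ with its horizontal arm also going left far enough to pass $x_1$, then $P\cap P'$ lies on row $y_1$ but to the \emph{left} of $x_1$, outside the side $[x_1,x_2]\times\{y_1\}$. (Such non-square frames are exactly what Figure~\ref{fig:frameInGrid}(b) and Figure~\ref{fig:outraRepresentacaoFrame} are meant to illustrate.) Fortunately this does not affect the proposition: you only need the \emph{direction} of the intersection (it lives on row $y_1$, hence is horizontal), not its position along that row. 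Simply drop the parenthetical and the rest of your argument stands.
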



\begin{prop}\label{lem:mesmaretasuporte}
Given a Helly-$B_1$-EPG representation of a graph $G$ that has an induced $C_4$ whose representation is frame-shaped. If there is a vertex $v$ of $G$, outside the $C_4$, that is adjacent to exactly two consecutive vertices of this $C_4$, then the path representing $v$ shares at least one common edge-intersection with the paths representing both of these vertices.
\end{prop}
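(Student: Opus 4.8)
The plan is to turn the statement into a direct application of the Helly property. First I fix notation: write the induced $C_4$ as a cyclic sequence $v_1 v_2 v_3 v_4$ and, relabelling if necessary, assume that the vertex $v$ from the hypothesis is adjacent to the consecutive pair $v_1,v_2$ (so $v_1 v_2\in E(G)$) and, by hypothesis, to neither $v_3$ nor $v_4$. Since the $C_4$ is frame-shaped, Definition~\ref{defi:tortasFrame2} tells us that $P_{v_1}$ and $P_{v_2}$ bend at two corners of the defining rectangle that are joined by a single side, and that $P_{v_1}\cap P_{v_2}$ is a non-empty set of grid edges lying on the support line through that side. Here one must read \emph{consecutive} with respect to the cyclic order of the $C_4$: in a frame the two pairs of opposite paths have empty intersection, so the intersecting pairs are exactly those joined by an edge of the cycle, which is where the labelling above comes from.

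The core of the argument is then one observation. Because $v_1 v_2\in E(G)$ and, by hypothesis, $v v_1, v v_2\in E(G)$, the set $\{v,v_1,v_2\}$ induces a triangle in $G$; hence $\{P_v,P_{v_1},P_{v_2}\}$ is a pairwise edge-intersecting family of paths inside a representation that satisfies the Helly property. By Corollary~\ref{cor:comumAtodos} (equivalently, by Lemma~\ref{caracterization}), these three paths share a common grid edge $e$, and $e\in P_{v_1}\cap P_{v_2}$. Thus $e$ is simultaneously an edge of $P_v\cap P_{v_1}$ and of $P_v\cap P_{v_2}$, which is exactly the claimed conclusion; moreover, by the previous paragraph, $e$ lies on the support line carrying the side of the rectangle between the bends of $P_{v_1}$ and $P_{v_2}$, which is the precise form in which the proposition gets used afterwards.

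I do not expect any real obstacle here beyond bookkeeping. The two points that need care are: (i) matching ``two consecutive vertices of the $C_4$'' to the correct pair of frame paths — the ones with non-empty intersection — since a careless reading could pair up the two opposite paths, which do not meet at all; and (ii) checking that invoking Corollary~\ref{cor:comumAtodos} is legitimate, which it is because the entire representation of $G$ (not just the sub-family coming from the $C_4$) is Helly, so every pairwise-intersecting sub-family, and in particular $\{P_v,P_{v_1},P_{v_2}\}$, has a common edge. The ``exactly two'' in the hypothesis is used only to guarantee that $v$ sees an edge of the $C_4$ rather than a diagonal; it plays no further role in the argument.
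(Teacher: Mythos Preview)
Your argument is correct and matches the paper's proof almost exactly: the paper simply observes that $v$ together with its two consecutive neighbours on the $C_4$ forms a triangle, and then invokes the Helly property to force a common edge among the three paths. Your write-up is more detailed (fixing notation, citing Corollary~\ref{cor:comumAtodos}, and locating the common edge on the side of the frame), but the underlying idea is identical.
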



\begin{proof}
By assumption, $G$ has a triangle containing $v$ and two vertices of a $C_4$. Therefore the path representing $v$ shares at least one common edge intersecting with the paths representing these neighbors, otherwise the representation does not satisfy the Helly property.
 \end{proof}

By Proposition~\ref{lem:direcoesdiferentes} and Proposition~\ref{lem:mesmaretasuporte} we can conclude that for every vertex $v_i \in V(H)$ such that $v_i \neq V(C_4^{H})$, when we use a frame to represent the $C_4^{H}$, $P_{v_i}$ will have at least one common edge-intersection with the pair of paths representing its neighbors in $H$. 
Figure~\ref{fig:falsepietruepieframe}(c) presents a possible Helly-$B_1$-EPG representation of $H$. 
Note that we can apply rotations and mirroring operations while maintaining it as a Helly-$B_1$-EPG representation of $H$.

\begin{definition}
In a frame-shaped single bend representation of a $C_4$  graph, the paths that represent consecutive vertices in the $C_4$ are called \emph{con\-se\-cu\-ti\-ve paths} and the segment that corresponds to the intersection between two consecutive paths is called \emph{side intersection}.  
\end{definition}

\begin{lemma}\label{lem:2vertical2horizontal}
In any minimal single bend representation of a graph isomorphic to $H$, there are two paths in $\{P_a, P_e, P_{d}, P_{h} \}$ that have horizontal directions and the other two paths have vertical directions.
\end{lemma}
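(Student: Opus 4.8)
The plan is to leverage the structure of $H$ (Figure~\ref{fig:gadgetBase}) together with the case analysis already built up for representations of $C_4^H = H[\{b,c,f,g\}]$. By Lemma~\ref{lem:representacaoC4}, $C_4^H$ is represented either as a pie (true or false) or as a frame; in all cases the preceding discussion (and Figure~\ref{fig:falsepietruepieframe}) pins down, up to rotation and mirroring, how the paths $P_b,P_c,P_f,P_g$ sit in the grid, and how each remaining vertex of $H$ attaches to its two neighbors among $\{b,c,f,g\}$ via a common edge-intersection (this uses Propositions~\ref{lem:direcoesdiferentes} and~\ref{lem:mesmaretasuporte} in the frame case, and the center/central-ray structure in the pie case). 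I would first recall from the adjacency structure of $H$ exactly which triangles each of $a,e,d,h$ lies in, so that the orientation of $P_a,P_e,P_d,P_h$ is forced by the side intersections or central rays they must contain: each of these four paths must carry a specific edge-intersection with two paths among $\{P_b,P_c,P_f,P_g\}$, and that edge-intersection lies along a determined row or column of the (pie or frame) skeleton.

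Next I would argue orientation-by-orientation. Since a single-bend path has at most one horizontal segment and one vertical segment, saying a path "has horizontal direction" should be read (as in the definition of side intersection) as: the relevant segment through which it meets the $C_4^H$-skeleton is horizontal. For the frame case, Proposition~\ref{lem:direcoesdiferentes} tells us the four side intersections of the frame alternate horizontal/vertical around the rectangle; pairing this with the fact that $a,e,d,h$ attach to \emph{consecutive} pairs of $\{b,c,f,g\}$ (one needs here that $H$ is built so these attachments sit on opposite sides of the rectangle in a balanced way) forces two of $P_a,P_e,P_d,P_h$ onto horizontal sides and the other two onto vertical sides. For the pie cases one uses that the four central rays emanating from the center point in the four axis directions, and again the combinatorics of which neighbor-pair each of $a,e,d,h$ must straddle forces a 2--2 split between rays lying on the horizontal line through the center and rays lying on the vertical line.

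The final step is to package these cases: in every admissible skeleton for $C_4^H$, the four paths $P_a,P_e,P_d,P_h$ are distributed $2$ horizontal / $2$ vertical, and minimality of the representation rules out degenerate wiggles that would let a path cheat by also spanning the other direction through an irrelevant detour (a minimal representation has no superfluous edges, so each $P_{v_i}$ with $v_i\notin C_4^H$ is essentially just its forced segments). The main obstacle I anticipate is the bookkeeping: one must verify that the specific incidence pattern of $H$ — which of $\{b,c,f,g\}$ are the two neighbors of each of $a,e,d,h$ — genuinely produces the balanced $2$--$2$ split rather than, say, a $3$--$1$ split, and this has to be checked against Figure~\ref{fig:gadgetBase} in each of the (finitely many, up to symmetry) skeleton types. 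Once the incidence data of $H$ is written out explicitly, each case is a short deterministic check, but getting that data right and handling the rotations/reflections uniformly is where care is needed.
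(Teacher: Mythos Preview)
Your plan is correct and matches the paper's own proof essentially step for step: the paper also splits into the pie case and the frame case for $C_4^H$, observes that $a,e,d,h$ each attach to a distinct consecutive pair of $\{b,c,f,g\}$ and hence must occupy the four distinct central rays (pie) or side intersections (frame), two of which are horizontal and two vertical, and then appeals to minimality to rule out extraneous edges. The only difference is cosmetic: the paper dispatches the cases in two short paragraphs without explicitly naming Propositions~\ref{lem:direcoesdiferentes} and~\ref{lem:mesmaretasuporte} inside the proof, whereas you plan to invoke them directly; and the ``bookkeeping obstacle'' you flag turns out to be trivial once one reads off from Figure~\ref{fig:gadgetBase} that the four neighbor-pairs of $a,e,d,h$ are precisely the four edges of $C_4^H$.
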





\begin{proof}
If the $C_4^{H} = [b,c,f,g]$ is  represented by a true pie or false pie, then each path of $C_4^{H}$ shares two central rays with two other paths of $C_4^{H}$, where each central ray corresponds to one pair of consecutive vertices in $C_4^{H}$.

As the vertices $a, e, d $ and $ h$ are adjacent to pairs of consecutive vertices in $C_4^{H}$ so the paths $P_{a}, P_{e}, P_{d}$ and $P_{h}$ have to be positioned in each one of the different central rays,  2 are horizontal  and 2 are vertical.

If the $C_4^{H}$ is represented by a frame, then each path of the $C_4^{H}$ has a bend positioned in the corners of the frame. In the frame, the adjacency relationship of pairs of consecutive vertices in the $C_4^{H}$ is represented by the edge-intersection of the paths that constitute the frame. Thus, since a frame has two parts in the vertical direction and two parts in the horizontal direction, then there are two paths in $\{P_{a}, P_{e}, P_{d}, P_{h}\}$ that have horizontal direction and two that have vertical direction.

Note that no additional edge is needed on the different paths by the minimality of the representation.
 \end{proof}

\begin{corollary} \label{coro:paresMesmoSegmento}
In any minimal single bend representation of a graph isomorphic to $H$, the following paths are on the same central ray or side intersection: $P_{a}$ and $P_{bc}$; $P_{e}$ and $P_{cg}$; $P_{h}$ and $P_{fg}$; $P_{d}$ and $P_{bf}$.

\end{corollary}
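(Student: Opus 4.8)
The plan is to obtain the statement as a combinatorial read-off of Lemma~\ref{lem:2vertical2horizontal} together with the adjacency structure of $H$. Recall that $H$ contains the triangles $\{a,b,c\}$, $\{e,c,g\}$, $\{h,f,g\}$, $\{b,d,f\}$, where $a,e,h,d$ are adjacent to the consecutive pairs $bc$, $cg$, $fg$, $bf$ of the induced cycle $C_4^{H}=H[b,c,f,g]$, respectively. I will argue the case of the pair $(a;bc)$; the remaining three pairs follow by the identical reasoning after relabelling.

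First I would pin down the geometric object $P_{bc}$. By Lemma~\ref{lem:representacaoC4}, $C_4^{H}$ is drawn either as a (true or false) pie or as a frame. In the pie case all four paths $P_b,P_c,P_f,P_g$ run through the centre, and the set of edges $P_b\cap P_c$ is exactly one central ray, namely $P_{bc}$; the other three central rays $P_{cg},P_{fg},P_{bf}$ emanate from the centre in the remaining directions and none of them contains an edge of both $P_b$ and $P_c$. In the frame case $P_b$ and $P_c$ bend at two adjacent corners of the rectangle and, by Proposition~\ref{lem:direcoesdiferentes}, meet in exactly one side, which is the side intersection $P_{bc}$; again this is the only one of the four sides lying in a row (resp. column) shared by both $P_b$ and $P_c$. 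So in either case $P_{bc}$ is the unique central ray (resp. side intersection) that contains a grid edge of both $P_b$ and $P_c$.

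Now I would invoke Lemma~\ref{lem:2vertical2horizontal}: in the minimal representation under consideration, $P_a$ is a straight, bend-free segment supported by a single central ray of the pie (resp. a single side of the frame). Since $a$ is adjacent to both $b$ and $c$, the path $P_a$ must share an edge with $P_b$ and an edge with $P_c$; being confined to a single ray (resp. side), that ray (resp. side) contains edges of both $P_b$ and $P_c$, so by the previous paragraph it equals $P_{bc}$. Hence $P_a$ and $P_{bc}$ lie on the same central ray or side intersection. Repeating with $(e;cg)$, $(h;fg)$, $(d;bf)$ and the corresponding triangles $\{e,c,g\}$, $\{h,f,g\}$, $\{b,d,f\}$ yields the statements for $P_e$ and $P_{cg}$, for $P_h$ and $P_{fg}$, and for $P_d$ and $P_{bf}$. (When the representation is moreover Helly, as in the intended application, one may shortcut the last step: the clique $\{a,b,c\}$ is then an edge-clique by Lemma~\ref{caracterization}, so $P_a,P_b,P_c$ share an edge, which necessarily lies on $P_{bc}$.)

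I expect the only delicate point to be pure bookkeeping: confirming that in both the pie and the frame geometry the intersection $P_b\cap P_c$ really coincides with the segment the corollary names $P_{bc}$, and that the ``single-ray/single-side'' conclusion of Lemma~\ref{lem:2vertical2horizontal} genuinely prevents $P_a$ from spilling onto a second ray. Both facts are immediate from the statements and proofs already established, so the corollary is essentially just an explicit restatement of what precedes it.
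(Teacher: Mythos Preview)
The paper states this corollary without proof, treating it as an immediate consequence of Lemma~\ref{lem:2vertical2horizontal}, so your written-out argument already goes beyond what the paper provides and is in the same spirit.

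There is, however, a notational misreading that leaves your argument half-finished. In the gadget $H$ (which has twelve vertices), the symbols $bc$, $cg$, $fg$, $bf$ are \emph{vertex names}: they are the four vertices of $H$ other than $a,b,c,d,e,f,g,h$, each adjacent to the corresponding pair in $C_4^H$ (so $bc$ is a vertex adjacent to $b$ and $c$, forming a second triangle on the edge $bc$ besides $\{a,b,c\}$). Hence $P_{bc}$ is the path representing the vertex $bc$, not the edge-set $P_b\cap P_c$. What the corollary claims is that the two distinct paths $P_a$ and $P_{bc}$ land on the same central ray (resp.\ side intersection).

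Your argument correctly shows that $P_a$ must sit on the unique ray/side carrying $P_b\cap P_c$. Since the vertex $bc$ has exactly the same relevant adjacencies as $a$ (both are adjacent precisely to $b$ and $c$ among the cycle vertices), the identical reasoning forces $P_{bc}$ onto that same ray/side; adding that one symmetric line (and analogously for $cg$, $fg$, $bf$) completes the proof. So the approach is right---you only need to recognise that $P_{bc}$ is a second path to which your own argument applies verbatim, not the target segment itself.
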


\begin{figure}[htb]
  \centering
  \begin{tabular}{c c c c c }
    \includegraphics[width=4cm]{falsePie}  
    & &\includegraphics[width=4cm]{truePie} 
    & &
 \includegraphics[width=4cm]{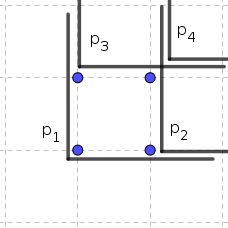} \\
    {\footnotesize (a) Based in false pie}  & &  {\footnotesize(b) Based in true pie} & & {\footnotesize (c) Based in frame} 
  \end{tabular}
  \caption{Different single bend representations of the  graph $H$ using a false pie (a), a true pie (b) and a frame (c) for representing  $C_4^{H}$}\label{fig:falsepietruepieframe}
\end{figure} 

\begin{figure}[htb]	
\center
\includegraphics[width=4cm]{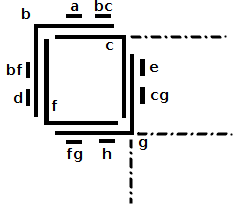}
\caption{A frame representation where the bend of dashed paths change directions}
\label{fig:outraRepresentacaoFrame}
\end{figure}


The following proposition helps us in the understanding of the NP-hardness proof.

\begin{prop}
In any Helly-$B_1$ representation of the graph $G'$, presented in Figure~\ref{fig:extremidadeDobraObstruida}(a), the path $P_{x}$ has obstructed extremities and bends.
\end{prop}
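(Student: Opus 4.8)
The plan is to exploit the rigidity results already established for the bat graph and for the gadget $H$, and then to remove the remaining freedom by local counting arguments that rely on the Helly property.

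First I would locate inside $G'$ the copies of the bat graph (or of $H$) that are attached to $x$. By Corollary~\ref{batgraph}, the $C_4$ that is transversal to the four $K_{2,4}$'s of each such bat graph must be drawn as a square-frame; since $P_x$ is one of the four paths of such a frame, it is forced, up to the rotations and mirrorings allowed by the construction, to have a prescribed L-shape, say $\llcorner$, whose bend point coincides with a corner of the rectangle. This fixes the directions of the two segments of $P_x$ and shows that both bend edges of $P_x$ are \emph{side intersections}: each is contained in one of the two consecutive frame paths. Hence no grid edge incident to the bend point of $P_x$ is free, which is exactly what "the bend of $P_x$ is obstructed" means; I would also observe that, were some further neighbour of $x$ to meet $P_x$ only in a bend edge, Lemma~\ref{caracterization} (no claw-clique) together with Proposition~\ref{lem:direcoesdiferentes} would be contradicted.

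Next, for the two extremity edges I would use the additional vertices of $G'$ that are adjacent to $x$ and to one specific vertex of the frame (or of an attached $H$), thereby creating triangles through $x$. Applying the Helly property in the form of Proposition~\ref{lem:mesmaretasuporte} and Corollary~\ref{cor:comumAtodos}, the path representing such a vertex must share a common edge with both vertices of the triangle; since the relevant frame path overlaps $P_x$ only along the corresponding segment, that common edge is forced to the extreme end of that segment. Carrying this out on both segments shows that each extremity edge of $P_x$ is already occupied by a prescribed neighbour and can be neither moved inward (by minimality of the representation) nor pushed outward (this would separate it from the forcing neighbour), so the extremities are obstructed as well.

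The main obstacle I expect is the case analysis needed to discard the alternative global placements: a priori the transversal $C_4$ could be a true pie or a false pie instead of a frame, the square-frame admits four orientations, and the auxiliary neighbours of $x$ could attach to either segment. The bat-graph corollary eliminates the pie cases and the orientations are symmetric, but stating this cleanly — and, crucially, checking "obstructed" against \emph{every} way a hypothetical extra path could touch a relevant edge of $P_x$ — is where the real work lies; the remainder is the by-now-routine combination of Lemma~\ref{caracterization}, Proposition~\ref{lem:direcoesdiferentes} and Proposition~\ref{lem:mesmaretasuporte}.
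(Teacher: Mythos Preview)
Your proposal rests on a misreading of the graph $G'$. In the paper, $G'$ is \emph{not} a bat graph: it consists of a single $K_{2,4}$ together with two copies $H_1,H_2$ of the gadget $H$, and $x$ is a vertex of the \emph{largest} part of that $K_{2,4}$ which is additionally joined to the triangle $(a,b,c)$ of each $H_i$. There is no $C_4$ through $x$ that is transversal to four $K_{2,4}$'s, so Corollary~\ref{batgraph} does not apply and $P_x$ is not forced to sit on a square-frame. Consequently the whole first half of your argument --- that the bend edges of $P_x$ are side intersections of a frame and hence obstructed --- collapses.

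The actual mechanism is different and simpler. The bend of $P_x$ is obstructed because $x$ lies in the large part of a $K_{2,4}$, and Lemma~\ref{fact:k24facts} forces the bend of any such path into the centre of a false pie; this alone blocks both bend edges. The two extremities are obstructed by $H_1$ and $H_2$: since $x$ is adjacent to the triangle $(a,b,c)$ of each $H_i$, and by Corollary~\ref{coro:paresMesmoSegmento} the path $P_a$ (together with $P_{bc}$) always lies on a fixed central ray or side intersection of the representation of $H_i$, the Helly property forces the common edge of $P_x$ with that triangle to be an extremity (or bend) edge of $P_x$; the bend being already taken, each $H_i$ must sit at one extremity. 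Your outline invokes Proposition~\ref{lem:mesmaretasuporte} and frame neighbours of $x$ for this step, but no such neighbours exist in $G'$; the obstruction comes entirely from the $H$-gadgets and the single $K_{2,4}$.
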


\begin{proof} 
Consider $G'$ consisting of a vertex $x$ together with two graphs, $ H_1 $ and $ H_2 $, isomorphic to $H$ and a bipartite graph $K_{2,4}$, such that: $x$ is a vertex of the largest stable set of the $K_{2,4}$; $x$ is adjacent to an induced cycle of size 3 of $H_1$, $C_3^{H_1}$ and to an induced cycle of size 3 of $H_2$, $ C_3^{H_2}$, see Figure~\ref{fig:extremidadeDobraObstruida}(a).

We know that the paths belonging to the largest stable set of a $K_{2,4}$ always will bend into a false pie, see Fact~\ref{fact:k24facts}. Since $P_{x}$ is part of the largest stable set of the $K_{2,4}$, then $P_{x}$ has an \emph {obstructed bend}, see Figure~\ref{fig:extremidadeDobraObstruida}(b). 

The vertex $x$ is adjacent to $ C_{3}^{H_1}$ and $ C_3^{H_2}$, so that its path $ P_{x} $ intersects the paths representing them.  But in a single bend representations of a graph isomorphic to $H$ there are pairs of paths that always are on some segment of a central ray or a side intersection, see Corollary~\ref{coro:paresMesmoSegmento}, and the representation of $C_{3}^{H_1}$ ( similarly $C_3^{H_2})$ has one these paths. Therefore, there is an edge in the set of paths that represent ${H_1}$ ( similarly in ${H_2}$) that has a intersection of 3 paths representing $ C_{3}^{H_1}$ (and $ C_3^{H_2}$), otherwise the representation would not be Helly. There is another different edge in the same central ray or side intersection that contains three other paths and one of them is not in the set of paths  $C_{3}^{H_1}$ ( similarly $C_3^{H_2})$. Thus in a single bend representation of $G'$, the paths that represent  $C_{3}^{H_1}$ ( similarly $C_3^{H_2})$ must intersect in a bend edge or an extremity edge of $P_{x}$, because $P_{x}$ intersects only one of the paths that are on some central ray or side intersection where  $C_{3}^{H_1}$ ( similarly $C_3^{H_2})$ is. As the bend of $G'$ is already obstructed by structure of $K_{2,4}$, then ${H_1}$ ( similarly in ${H_2}$) must be positioned at an extremity edge of $P_{x}$. This implies that $ P_{x} $ has a condition of \emph{obstructed extremities}, see Figure~\ref{fig:extremidadeDobraObstruida}(b).
\begin{figure}[h]
  \centering
  \begin{tabular}{p{6cm} p{1cm} p{6cm}}
     \includegraphics[width=5cm, center]{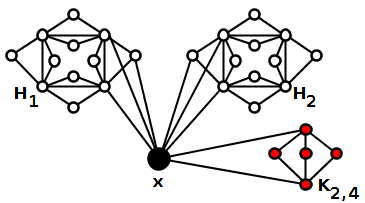} &  &\includegraphics[width=5cm, center]{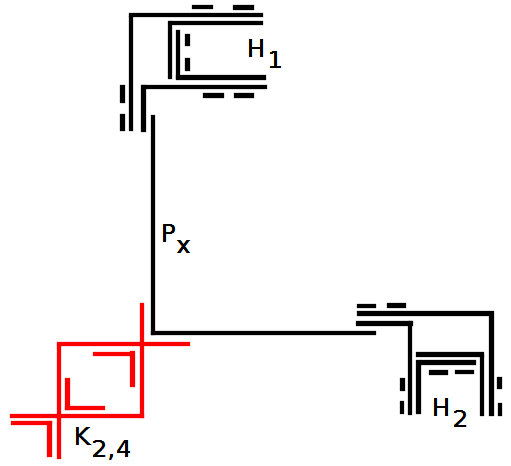}  \\
    \footnotesize \centering (a) The graph $G'$& & \footnotesize \centering (b) A $B_1$-EPG representation of $G'$
  \end{tabular}
 \caption{The sample of  obstructed extremities and bend.}\label{fig:extremidadeDobraObstruida}
\end{figure}
 \end{proof}

\begin{definition}
We say that a segment $s$ is \emph{internally contained} in a path $P_{x}$ if $s$ is contained in $P_{x}$, and it does not intersect a relevant edge of $P_{x}$. 
\end{definition}


Some of the vertices of $G_F$ have highly constrained $B_1$-EPG representations. Vertex $T$ has its bend	and both extremities	obstructed	by its neighbors in	$G_{B1}$, $G_{B2}$ and in the $K_{2,4}$ subgraphs. Vertex $V$ and each variable vertex $v_i$ must have one of its segments internally contained in $T$, and also have its extremities and bends obstructed.  Therefore, vertex $V$ and each
variable vertex has only one segment each that can be used in an EPG
representation to make them adjacent to the clause gadget. The direction of
this segment, being either horizontal	or vertical, can be used to represent
the true or false value	for the	variable.
The clause gadgets, on the other hand, are such that exactly two of its
adjacencies to the variable vertices and $V$ can be realized with a
horizontal intersection, whereas the other two must be realized with a
vertical intersection. If we consider the direction used by $V$ as a
truth assignment, we get that exactly one of the variables in each clause
will be true in	any possible representation of $G_F$. Conversely, it is	fairly straightforward to obtain a $B_1$-EPG representation for $G_F$ when given a truth assignment for the formula $F$. Therefore, Lemma~\ref{lem:volta} holds.

\begin{lemma}\label{lem:volta}
If a graph $G_F$, constructed according to Definition~\ref{sec:reducao}, admits a Helly-$B_1$-EPG representation, then the associated CNF-formula $F$ is a yes-instance of {\sc Positive (1 in 3)-3sat}.
\end{lemma}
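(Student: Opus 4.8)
The plan is to take an arbitrary Helly-$B_1$-EPG representation $R$ of $G_F$ and rigidify it step by step, from the base gadget outwards, until the only freedom that survives is a single binary choice of \emph{direction} at each variable vertex; that choice will be the sought truth assignment, and the $1$-in-$3$ condition then drops out of the structure theorem for representations of $H$. Throughout, one may assume (deleting, within each copy of $H$ occurring inside $G_F$, the edges not needed to realize that copy's adjacencies) that these copies are in the canonical minimal form of Lemma~\ref{lem:2vertical2horizontal}, so that this lemma and Corollary~\ref{coro:paresMesmoSegmento} apply to them.

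First I would pin down the base gadget. Since $T$ lies on the large side of the $K_{2,4}$ attached to it, Lemma~\ref{fact:k24facts} forces the bend of $P_T$ into a false pie; and since $T$ is also joined to the triangle $(a,b,c)$ of the two copies $G_{B1},G_{B2}$ of $H$, the proposition on the graph $G'$ (Figure~\ref{fig:extremidadeDobraObstruida}) applies with $x=T$ and shows that both extremity edges of $P_T$ are obstructed as well; hence no path can share a relevant edge of $P_T$. Essentially the same rigidity argument --- using $G_{B3},G_{B4}$ for $V$, the copies $H_1,H_2$ of the variable gadget for each $v_j$, and the fact that $P_T$ is already rigid --- shows that $P_V$ and every $P_{v_j}$ likewise have obstructed bends and extremities. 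Consequently each of $P_V,P_{v_1},\dots$ meets $P_T$ only along an edge interior to a segment of $P_T$ and, having no free relevant edge of its own, meets $P_T$ along a stretch that is \emph{internally contained} in one of its two segments. So each of these paths is an honest L with one segment buried inside $P_T$ and exactly one free segment left; the direction (horizontal or vertical) of that free segment is its only remaining degree of freedom, and it is the only part of the path available to realize the adjacency of $V$ (resp.\ $v_j$) with a clause gadget.

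Next I would read off the assignment: take the direction of the free segment of $P_V$ as a reference, and declare a variable $x_j$ to be \emph{true} exactly when the free segment of $P_{v_j}$ has the \emph{same} direction as that of $P_V$ --- the convention already used in the proof of Lemma~\ref{lem:ida}. Fix a clause $C_i$ with gadget $G_{C_i}\cong H$, where $d$ is the vertex joined to $V$ and $a,e,h$ are the vertices joined to the three variable vertices of $C_i$. By Corollary~\ref{coro:paresMesmoSegmento}, in $G_{C_i}$ the path $P_a$ sits on one of the four central rays (resp.\ side intersections), and the free segment of $P_{v_j}$ can overlap $P_a$ only if it runs in the direction of that ray; the same holds for $P_V$ and $P_d$. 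Hence the directions of the free segments attaching to $a,e,h$ and to $d$ are exactly the directions of those four central rays/side intersections, and by Lemma~\ref{lem:2vertical2horizontal} two of them are horizontal and two are vertical. Discarding the one attached to $d$ (namely $P_V$) therefore leaves exactly one of the three variable directions equal to the reference direction of $P_V$; that is, exactly one literal of $C_i$ is true. Since $v_j$ has a single free segment, its direction is used consistently in every clause containing $x_j$, so the assignment is well defined, and as the argument applies to every clause, $F$ is a yes-instance of {\sc Positive (1 in 3)-3SAT}.

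The step I expect to be the main obstacle is the last one: transferring the ``two horizontal, two vertical'' count of Lemma~\ref{lem:2vertical2horizontal} from the internal paths $P_a,P_e,P_h,P_d$ to the four \emph{external} adjacencies (the three variable vertices and $V$) requires carefully showing that each of these adjacencies is pinned onto the relevant central ray/side intersection and cannot slip off it along a free portion of a path, and it also relies on the verification --- behind the informal discussion preceding this lemma --- that $P_V$ and every $P_{v_j}$ genuinely end up with a single usable segment. A lighter point to settle en route is that $G_F$ contains no triangle with vertices in two distinct gadgets, so that by Lemma~\ref{caracterization} the Helly property couples nothing across gadgets and each gadget may be analyzed locally; everything else is routine bookkeeping on top of the rigidity facts already proved.
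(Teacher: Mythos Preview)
Your proposal is correct and follows essentially the same route as the paper: rigidify $P_T$ via the $K_{2,4}$ and $G_{B1},G_{B2}$, deduce that $P_V$ and each $P_{v_j}$ have obstructed bend and extremities with a single free segment, read the assignment off the direction of that segment (equivalently, of $P_{v_j}\cap P_T$), and use the $2$-horizontal/$2$-vertical split of Lemma~\ref{lem:2vertical2horizontal} on $\{P_a,P_e,P_h,P_d\}$ to force exactly one true literal per clause. The paper's argument is terser and leans on the informal discussion preceding the lemma for exactly the rigidity facts you spell out, and the ``main obstacle'' you flag---transferring the direction count from $P_a,P_e,P_h,P_d$ to the external adjacencies---is treated there at the same level of detail as in your outline.
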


\begin{proof}
Suppose that $G_F$ has a Helly-$B_1$-EPG representation, $R_{G_F}$.  From $R_{G_F}$ we will construct an assignment that satisfies $F$. 

First, note that in every single bend representation of a $K_{2,4}$, the path of each vertex of the largest stable set, in particular, $P_T$ (in $R_{G_F}$), has bends contained in a false pie (see Lemma~\ref{fact:k24facts}).

The vertex $T$ is adjacent to the vertices of a triangle of $G_{B1}$ and $G_{B2}$. As the $K_{2,4}$ is positioned in the bend of $P_T$, then in $R_{G_F}$ the representations of $G_{B1}$ and $G_{B2}$ are positioned at the extremities of $P_T$, see Proposition 4.3. 

Without loss of generality assume that $P_V \cap P_T$ is a horizontal segment in $R_{G_F}$.

We can note in $R_{G_F}$ that: the number of paths $P_{d}$ with segment internally contained in $P_V$ is the number of clauses in $F$; the intersection between each $P_{a}, P_{e}, P_{h}$ in the gadget clause and each path $P_{v_j}$ indicates the variables composing the clause. Thus, we can assign to each variable $ x_{j}$ the value \textit{True} if the edge intersecting $P_{v_j}$ and $P_T$ is horizontal, and \textit{False} otherwise.

In Lemma~\ref{lem:2vertical2horizontal} it was shown that any minimal $B_1$-EPG representation of a clause gadget has two paths in $\{P_{a}, P_{d}, P_{e}, P_{h}\}$ with vertical direction and the other two paths have horizontal direction. Since $P_{d}$ intersects $P_V$, it follows that in a single bend representation of $G_F$, we must connect two of these to represent a false assignment, and exactly one will represent a true assignment. Thus, from $R_{G_F}$, we construct an assignment to $F$ such that every clause has exactly one variable with a true value. 
 \end{proof}

Recall that a $B_1$-EPG representation is Helly if and only if each clique is represented by an edge-clique (and not by a claw-clique). 
Thus, an alternative way to check whether a representation is Helly is to note that all cliques are represented as edge-cliques. 

\begin{theorem}
{\sc Helly-$B_1$ EPG recognition} is NP-complete.
\end{theorem}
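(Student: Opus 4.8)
The plan is to combine the membership result already established with an NP-hardness reduction. Membership is immediate: Theorem~\ref{teo:nppertinencia} shows {\sc Helly-$B_k$ EPG recognition} is in NP whenever $k$ is polynomially bounded in $|V(G)|$, and $k=1$ is a constant, so {\sc Helly-$B_1$ EPG recognition} is in NP.

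For hardness I would reduce from {\sc Positive (1 in 3)-3SAT}. Given an instance $F$ with variable set $\mathcal{X}$ and clause set $\mathcal{C}$, construct the graph $G_F$ exactly as in Definition~\ref{sec:reducao}. First one checks that this takes polynomial time: each clause contributes a clause gadget isomorphic to the fixed $12$-vertex graph $H$, each variable contributes one variable vertex together with two more copies of $H$, and the base gadget adds a fixed number ($55$) of extra vertices plus a $K_{2,4}$; hence $|V(G_F)|$ and $|E(G_F)|$ are linear in $|\mathcal{X}|+|\mathcal{C}|$ and $G_F$ is computable from $F$ in polynomial time.

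The heart of the reduction is the equivalence that $F$ is satisfiable if and only if $G_F \in$ Helly-$B_1$ EPG. The forward implication is Lemma~\ref{lem:ida}: from a satisfying $1$-in-$3$ assignment one assembles an explicit Helly-$B_1$-EPG representation $R_{G_F}$ by fixing the layout of the base gadget, attaching each variable gadget so that $P_{v_j}\cap P_T$ is horizontal or vertical according to the truth value assigned to $x_j$, and placing each clause gadget as a (true or false) pie at the crossing of the reference lines of $P_V$; since every triangle of $G_F$ lies inside a single gadget and each gadget is represented in a Helly way, $R_{G_F}$ is Helly. The reverse implication is Lemma~\ref{lem:volta}: in any Helly-$B_1$-EPG representation of $G_F$, the rigidity forced by the $K_{2,4}$'s (Lemma~\ref{fact:k24facts}, Corollary~\ref{batgraph}), by the internally-contained-segment constraints on $P_V$ and each $P_{v_j}$, and by the ``two horizontal, two vertical'' behaviour of clause gadgets (Lemma~\ref{lem:2vertical2horizontal}, Corollary~\ref{coro:paresMesmoSegmento}) forces that reading the direction of $P_{v_j}\cap P_T$ as a truth value yields a $1$-in-$3$ assignment for $F$. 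Together with the polynomial-time construction this gives a valid many-one reduction, and combined with membership in NP it follows that {\sc Helly-$B_1$ EPG recognition} is NP-complete.

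I expect the genuinely delicate part to be the reverse implication (packaged as Lemma~\ref{lem:volta}): ruling out any ``creative'' Helly-$B_1$-EPG representation of $G_F$ that subverts the intended semantics requires the full structural analysis of how copies of $H$, the $K_{2,4}$'s, and the pie/frame alternatives for the induced $C_4$'s must interlock — precisely the purpose of the preceding propositions and corollaries. Everything else (membership, polynomial size of $G_F$, and the straightforward assembly in the forward direction) is routine. As a closing remark one may further inspect $G_F$ to see that it is $2$-apex and $3$-degenerate, so the NP-completeness persists on that restricted class.
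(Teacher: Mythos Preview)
Your proposal is correct and follows essentially the same approach as the paper: the paper's proof of this theorem is the one-line invocation ``By Theorem~\ref{teo:nppertinencia}, Lemma~\ref{lem:ida}, Lemma~\ref{lem:volta}'', and you have reproduced exactly this structure (membership plus the two directions of the reduction), with some added commentary on the polynomial size of $G_F$ and on where the difficulty lies.
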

\begin{proof} 
By Theorem~\ref{teo:nppertinencia}, Lemma~\ref{lem:ida}, Lemma~\ref{lem:volta}.
 \end{proof}

We say that a $k$-apex graph is a graph that can be made planar by the removal of $k$ vertices. A $d$-degenerate graph is a graph in which every subgraph has a vertex of degree at most $d$. Recall that {\sc Positive (1 in 3)-3SAT} remains NP-complete when the incidence graph of the input formula is planar, see~\cite{mulzer2008minimum}. Thus, the following corollary holds.

\begin{corollary}\label{coro:2apexAnd3degenerate}
{\sc Helly-$B_1$ EPG recognition} is NP-complete on $2$-apex and $3$-degenerate graphs.
\end{corollary}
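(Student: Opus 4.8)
The plan is to re-use the polynomial-time construction of Definition~\ref{sec:reducao} verbatim, but to feed it only with instances of {\sc Positive (1 in 3)-3SAT} whose incidence graph $I(F)$ is planar; this restricted version is still NP-hard by~\cite{mulzer2008minimum}. Since Lemmas~\ref{lem:ida} and~\ref{lem:volta} prove that $G_F\in\,$Helly-$B_1$~EPG if and only if $F$ is satisfiable for \emph{every} formula $F$, and the map $F\mapsto G_F$ is polynomial, it suffices to establish two structural facts: (a) $G_F$ is always $3$-degenerate; and (b) whenever $I(F)$ is planar, $G_F$ is $2$-apex. Membership in NP is inherited directly from Theorem~\ref{teo:nppertinencia}, so (a) and (b) yield NP-completeness of {\sc Helly-$B_1$ EPG recognition} on the class of graphs that are simultaneously $2$-apex and $3$-degenerate.

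For (a) I would exhibit an explicit elimination ordering in which every vertex has back-degree at most $3$. Every copy of $H$ occurring in $G_F$ (the clause gadgets $G_{C_i}$, the variable-gadget copies $H_1,H_2$, and the base copies $G_{B1},\dots,G_{B4}$) is the $4$-cycle $C_4^{H}$ together with eight ``ear'' apices, each apex being adjacent inside its copy only to two vertices of $C_4^{H}$; moreover every vertex of a copy of $H$ carries at most one edge leaving that copy. Hence one removes, one at a time: first all ear-apices of all copies of $H$ (current degree $\le 2+1=3$); then all vertices of the various $C_4^{H}$'s (now degree $\le 2+1=3$); then every variable vertex $v_j$ (only its edge to $T$ remains); then $V$ (only its edge to $T$ remains); then $T$ (degree $2$ inside the residual $K_{2,4}$); what is left is a $K_{2,3}$, which is $2$-degenerate. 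This ordering certifies that $G_F$ is $3$-degenerate.

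For (b) I would delete the two ``hub'' vertices $T$ and $V$ and argue that $G_F-\{T,V\}$ is planar. Removing $T$ disconnects $G_{B1},G_{B2}$ and turns the $K_{2,4}$ into a (planar, isolated) $K_{2,3}$; removing $V$ disconnects $G_{B3},G_{B4}$; so the whole base gadget falls apart into planar pieces. The only non-trivial remaining component is the ``inflation'' of $I(F)$: replace each clause node by a copy of $H$ whose three external edges are attached at $a,e,h$, and each variable node $x_j$ by $v_j$ together with the two $H$-blobs $H_1,H_2$, each glued to $v_j$ along a triangular face $(a,b,c)$. Starting from a fixed planar embedding of $I(F)$, I would carry out this inflation locally: $H$ has a planar embedding in which $a,e,h$ lie on a common face (draw $C_4^{H}$ and push the ears over three consecutive sides to the outer face), so a degree-$3$ clause node can be replaced by a disk carrying $H$ with $a,e,h$ on its boundary; as the two cyclic orders of three boundary vertices are mirror images, reflecting the embedding always matches the rotation inherited from $I(F)$. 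Likewise each $H_i$ has a planar embedding with the triangle $(a,b,c)$ as its outer face, so $v_j$ can be placed next to that triangle and both blobs slid into one gap of the rotation at $v_j$. Patching these local pictures into the embedding of $I(F)$ gives a planar drawing of $G_F-\{T,V\}$, so $G_F$ is $2$-apex, and the corollary follows.

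I expect the main obstacle to be step (b): one must read off from Figure~\ref{fig:gadgetBase} that $H$ genuinely admits the planar embeddings exhibiting the needed cofacial sets (the apices $a,e,h$, and each attachment triangle $(a,b,c)$), and then perform the embedding surgery while tracking rotation systems so that no two inflated gadgets interfere. The $3$-degeneracy argument, by contrast, is routine once the adjacency pattern of a copy of $H$ inside $G_F$ is spelled out.
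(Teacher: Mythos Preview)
Your proposal is correct and follows essentially the same approach as the paper: both arguments establish $3$-degeneracy by successively peeling off low-degree vertices (you give an explicit elimination order, the paper invokes the min-degree removal algorithm) and establish $2$-apex by starting from a planar incidence graph $I(F)$, deleting $T$ and $V$, and arguing that the remaining gadgets can be ``inflated'' into the planar embedding of $I(F)$. Your treatment of part~(b) is in fact more careful than the paper's, which sketches the planarity of $G_F-\{T,V\}$ without explicitly verifying the cofacial conditions on $\{a,e,h\}$ and $\{a,b,c\}$ in $H$ that make the local replacements go through.
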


\begin{proof} 
To prove that $G_F$ is 3-degenerate, we apply the $d$-degenerate graphs recognition algorithm,  consisting of repeatedly removing the vertices of a minimum degree from the graph. Note that each vertex to be removed at each iteration of the algorithm always has a degree at most three, and therefore the graphs $G_F$ constructed according to Definition~\ref{sec:reducao} is $3$-degenerate.

Now, recall that {\sc Positive (1 in 3)-3SAT} remains NP-complete when the incidence graph of $F$ is planar, see~\cite{mulzer2008minimum}. Let $F$ be an instance of {\sc Planar Positive (1 in 3)-3SAT}, we know that the incidence graph of the formula $F$ is planar. By using the planar embedding of the incidence graph, we can appropriately replace the vertices representing variables and clauses by variables gadgets and clauses gadgets. As each variable gadget, clause gadget, and base gadget are planar, then something not planar may have arisen only from the intersection that was made between them. As the incidence graph assures that there is a planar arrangement between the intersections of the variable gadgets and clause gadgets, then from that one can construct a graph $G_F$ such that the removal of $V$ and $T$ results into a planar graph, see Figura~\ref{fig:grafoIncidenciaCompleto}. Thus $G_F$ is 2-apex.
\end{proof}

\begin{figure}[H]	
\center
\includegraphics[width=8cm]{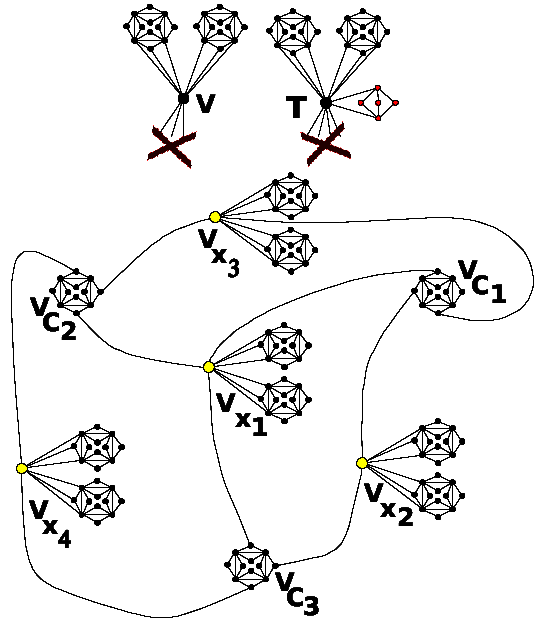}
\caption{Planar graph built from $F = (x_1 +x_2 +x_3)\cdot(x_1 +x_3 +x_4)\cdot(x_1 +x_2 +x_4)$, after removing $V$ and $T$.}
\label{fig:grafoIncidenciaCompleto}
\end{figure}

\section{Concluding Remarks}

In this paper, we show that every graph admits a Helly-EPG representation, and $\frac{\mu}{2n}-1\leq b_H(G)\leq \mu -1$. Besides, we relate Helly-$B_1$-EPG graphs with L-shaped graphs, a natural family of subclasses of $B_1$-EPG. Also, we prove that recognizing (Helly-)$B_k$-EPG graphs is in NP, for every fixed $k$. Finally, we show that recognizing Helly-$B_1$-EPG graphs is NP-complete, and it remains NP-complete even when restricted to 2-apex and 3-degenerate graphs.

Now, let $r$ be a positive integer and let $K_{2r}^-$ be the cocktail-party graph, i.e., a complete graph on $2r$ vertices with a perfect matching removed. Since $K_{2r}^-$ has $2^r$ maximal cliques, by Theorem~\ref{teo:lowerboundCliques} follows that $\frac{2^r}{4r}-1\leq b_H(K_{2r}^-)$. This implies that, for each $k$, the graph $K_{2(k+5)}^-$ is not a Helly-$B_k$-EPG graph. Therefore, as \cite{martin2017} showed that every cocktail-party graph is in $B_2$-EPG, we conclude the following.

\begin{lemma}
Helly-$B_k$-EPG $\subsetneq B_k$-EPG for each $k>0$.
\end{lemma}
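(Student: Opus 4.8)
The plan is to establish the (trivial) inclusion Helly-$B_k$-EPG $\subseteq B_k$-EPG and then, for each $k>0$, exhibit a graph lying in $B_k$-EPG but not in Helly-$B_k$-EPG. I would split the argument according to whether $k=1$ or $k\ge 2$, because the counting lower bound of Theorem~\ref{teo:lowerboundCliques} is only informative once the witness graph is large enough, and the $B_2$-EPG membership of cocktail-party graphs from~\cite{martin2017} does not cover $k=1$.

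For $k=1$ I would simply invoke the octahedral graph $O_3$. As already recorded in the paper, $O_3$ admits a $B_1$-EPG representation (Figure~\ref{fig:octaedro}(b)), while by Lemma~\ref{lem:octaedronaohelly} every minimal $B_1$-EPG representation of $O_3$ contains a triple of pairwise intersecting paths with no common edge; hence $O_3\notin$ Helly-$B_1$-EPG. This gives Helly-$B_1$-EPG $\subsetneq B_1$-EPG.

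For $k\ge 2$ I would take the cocktail-party graph $K_{2(k+5)}^-$, i.e. set $r=k+5$ in $K_{2r}^-$. First I would note that the complement of $K_{2r}^-$ is a perfect matching on $2r$ vertices, so the maximal cliques of $K_{2r}^-$ are obtained by selecting one endpoint from each of the $r$ matching edges, giving $\mu=2^r$ maximal cliques. Applying Theorem~\ref{teo:lowerboundCliques} with $n=2r$ yields $b_H(K_{2r}^-)\ge \frac{2^r}{4r}-1$, which for $r=k+5$ reads $b_H(K_{2(k+5)}^-)\ge \frac{2^{k+3}}{k+5}-1$; a short computation shows $\frac{2^{k+3}}{k+5}-1>k$ for every $k\ge 1$ (it suffices that $2^{k+3}>(k+1)(k+5)$, which already holds at $k=1$ and then the left side doubles each step while the right side grows only polynomially). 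Hence $K_{2(k+5)}^-\notin$ Helly-$B_k$-EPG. On the other hand, \cite{martin2017} showed that every cocktail-party graph is in $B_2$-EPG, and $B_2$-EPG $\subseteq B_k$-EPG for $k\ge 2$, so $K_{2(k+5)}^-\in B_k$-EPG. Combining the two cases gives Helly-$B_k$-EPG $\subsetneq B_k$-EPG for every $k>0$.

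The only delicate point is the case split itself: one cannot use a single family of witnesses uniformly, since the counting bound of Theorem~\ref{teo:lowerboundCliques} applied to $O_3$ (or even to $K_{12}^-$) is too weak to certify non-membership in Helly-$B_1$-EPG, and the $B_2$-EPG bound is irrelevant for $k=1$. Routing the $k=1$ case through the structural analysis of $O_3$ already done in the paper closes this gap, and everything else reduces to the elementary inequality above.
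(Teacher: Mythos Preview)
Your proof is correct and follows essentially the paper's own approach: the paragraph immediately preceding the lemma handles $k\ge 2$ via $K_{2(k+5)}^-$ and Theorem~\ref{teo:lowerboundCliques}, while the $k=1$ case is already dealt with in Section~\ref{sec:prelim} using $O_3$; you have simply made the case split explicit. One small correction to your closing commentary: the counting bound from Theorem~\ref{teo:lowerboundCliques} applied to $K_{12}^-$ is \emph{not} too weak for $k=1$ (it gives $b_H\ge \tfrac{64}{24}-1>1$); the genuine reason the cocktail-party argument fails at $k=1$ is only the second one you give, namely that \cite{martin2017} places cocktail-party graphs in $B_2$-EPG, which says nothing about $B_1$-EPG membership.
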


The previous lemma suggests asking about the complexity of recognizing Helly-$B_k$-EPG graphs for each $k>1$. Also, it seems interesting to present characterizations for Helly-$B_k$-EPG representations similar to Lemma~\ref{caracterization} (especially for $k=2$) as well as considering the $h$-Helly-$B_k$ EPG graphs. Regarding L-shaped graphs, it also seems interesting to analyse the classes Helly-$[\llcorner, \ulcorner]$ and Helly-$[\llcorner, \ulcorner, \urcorner]$ (recall Thereom~\ref{theo:HellyLShaped}).

\bibliographystyle{abbrvnat}
\bibliography{sample-dmtcs}

\begin{thebibliography}{18}
\providecommand{\natexlab}[1]{#1}
\providecommand{\url}[1]{\texttt{#1}}
\expandafter\ifx\csname urlstyle\endcsname\relax
  \providecommand{\doi}[1]{doi: #1}\else
  \providecommand{\doi}{doi: \begingroup \urlstyle{rm}\Url}\fi

\bibitem[Alc\'on et~al.(2016)Alc\'on, Bonomo, Dur\'an, Gutierrez, Mazzoleni,
  Ries, and Valencia-Pabon]{alcon2016}
L.~Alc\'on, F.~Bonomo, G.~Dur\'an, M.~Gutierrez, M.~P. Mazzoleni, B.~Ries, and
  M.~Valencia-Pabon.
\newblock On the bend number of circular-arc graphs as edge intersection graphs
  of paths on a grid.
\newblock \emph{Discrete Applied Mathematics}, 234:\penalty0 12--21, 2016.

\bibitem[Asinowski and Suk(2009)]{Asinowski2009}
A.~Asinowski and A.~Suk.
\newblock Edge intersection graphs of systems of paths on a grid with a bounded
  number of bends.
\newblock \emph{Discrete Applied Math}, 157:\penalty0 3174--3180, 2009.

\bibitem[Bandy and Sarrafzadeh(1990)]{bandy1990}
M.~Bandy and M.~Sarrafzadeh.
\newblock Stretching a knock-knee layout for multilayer wiring.
\newblock \emph{IEEE Transactions on Computers}, 39:\penalty0 148--151, 1990.

\bibitem[Berge and Duchet(1975)]{bergeDuchet1975}
C.~Berge and P.~Duchet.
\newblock A generalization of {G}ilmore's theorem.
\newblock \emph{Recent Advances in Graph Theory. Proceedings 2nd Czechoslovak
  Symposium,}, pages 49--55, 1975.

\bibitem[Booth and Lueker(1976)]{booth1976}
K.~Booth and G.~Lueker.
\newblock Testing for the consecutive ones property, interval graphs, and graph
  planarity using {PQ}-tree algorithms.
\newblock \emph{Journal of Computer and System Sciences}, 13:\penalty0
  335--379, 1976.

\bibitem[Cameron et~al.(2016)Cameron, Chaplick, and Ho{\`a}ng]{cameron2016edge}
K.~Cameron, S.~Chaplick, and C.~T. Ho{\`a}ng.
\newblock Edge intersection graphs of {L}-shaped paths in grids.
\newblock \emph{Discrete Applied Mathematics}, 210:\penalty0 185--194, 2016.

\bibitem[Cohen et~al.(2014)Cohen, Golumbic, and Ries]{cohen2014}
E.~Cohen, M.~C. Golumbic, and B.~Ries.
\newblock Characterizations of cographs as intersection graphs of paths on a
  grid.
\newblock \emph{Discrete Applied Mathematics}, 178:\penalty0 46--57, 2014.

\bibitem[Dourado et~al.(2009)Dourado, Protti, and Szwarcfiter]{dourado2009}
M.~C. Dourado, F.~Protti, and J.~L. Szwarcfiter.
\newblock Complexity aspects of the {H}elly property: Graphs and hypergraphs.
\newblock \emph{The Electronic Journal of Combinatorics (Dynamic Surveys)},
  17:\penalty0 1--53, 2009.

\bibitem[Duchet(1976)]{D76}
P.~Duchet.
\newblock Propriet\'e de helly et probl\`emes de repr\'esentations.
\newblock In \emph{Colloquium International CNRS 260, Probl\'emes Combinatoires
  et Th\'eorie de Graphs}, pages 117--118, Orsay, France, 1976.

\bibitem[Garey and Johnson(1979)]{johnson1979}
M.~R. Garey and D.~S. Johnson.
\newblock \emph{Computers and Intractability: A Guide to the Theory of
  NP-Completeness}.
\newblock W. H. Freeman {\&} Company, 1979.

\bibitem[Golumbic and Morgenstern(2019)]{golumbic2019edge}
M.~C. Golumbic and G.~Morgenstern.
\newblock Edge intersection graphs of paths on a grid.
\newblock In \emph{50 years of Combinatorics, Graph Theory, and Computing},
  pages 193--209. Chapman and Hall/CRC, 2019.

\bibitem[Golumbic et~al.(2009)Golumbic, Lipshteyn, and Stern]{golumbic2009}
M.~C. Golumbic, M.~Lipshteyn, and M.~Stern.
\newblock Edge intersection graphs of single bend paths on a grid.
\newblock \emph{Networks}, 54:\penalty0 130--138, 2009.

\bibitem[Golumbic et~al.(2013)Golumbic, Lipshteyn, and Stern]{golumbic2013}
M.~C. Golumbic, M.~Lipshteyn, and M.~Stern.
\newblock Single bend paths on a grid have strong {H}elly number 4.
\newblock \emph{Networks}, 62:\penalty0 161--163, 2013.

\bibitem[Heldt et~al.(2014{\natexlab{a}})Heldt, Knauer, and
  Ueckerdt]{daniel2014b}
D.~Heldt, K.~Knauer, and T.~Ueckerdt.
\newblock On the bend-number of planar and outerplanar graphs.
\newblock \emph{Discrete Applied Mathematics}, 179:\penalty0 109--119,
  2014{\natexlab{a}}.

\bibitem[Heldt et~al.(2014{\natexlab{b}})Heldt, Knauer, and
  Ueckerdt]{heldt2014}
D.~Heldt, K.~Knauer, and T.~Ueckerdt.
\newblock Edge-intersection graphs of grid paths: the bend-number.
\newblock \emph{Discrete Applied Mathematics}, 167:\penalty0 144--162,
  2014{\natexlab{b}}.

\bibitem[Molitor(1991)]{molitor1991}
P.~Molitor.
\newblock A survey on wiring.
\newblock \emph{Journal of Information Processing and Cybernetics, EIK},
  27:\penalty0 3--19, 1991.

\bibitem[Mulzer and Rote(2008)]{mulzer2008minimum}
W.~Mulzer and G.~Rote.
\newblock Minimum-weight triangulation is {NP}-hard.
\newblock \emph{Journal of the ACM (JACM)}, 55\penalty0 (2):\penalty0 11, 2008.

\bibitem[Pergel and Rz{\c a}{\. z}ewski(2017)]{martin2017}
M.~Pergel and P.~Rz{\c a}{\. z}ewski.
\newblock On edge intersection graphs of paths with 2 bends.
\newblock \emph{Discrete Applied Mathematics}, 226:\penalty0 106--116, 2017.

\end{thebibliography}
\label{sec:biblio}


\end{document}